\newtheorem*{definition*}{Definition}
\newcommand{\B}{\mathcal{B}}
\newcommand{\Ein}{E_{\text{in}}}
\title{\texorpdfstring{Reconfiguration of Labeled Matchings in Triangular Grid Graphs}{Reconfiguration of Labeled Matchings in Triangular Grid Graphs}} 
\titlerunning{Reconfiguration of labeled matchings in triangular grid graphs} 
\author{Naonori Kakimura}{Department of Mathematics, Keio University, Japan}{kakimura@math.keio.ac.jp}{https://orcid.org/0000-0002-3918-3479}{Supported by JSPS KAKENHI Grant Numbers
JP22H05001, JP20H05795, and 	23K21646, Japan and JST ERATO Grant Number JPMJER2301, Japan.}
\author{Yuta Mishima}{The Japan Research Institute, Limited, Japan}{missymissy1104@icloud.com}{}{}
\authorrunning{N. Kakimura and Y. Mishima} 
\keywords{combinatorial reconfiguration, matching, factor-critical graphs, sliding-block puzzles} 
\begin{document}
\maketitle

\begin{abstract}
This paper introduces a new reconfiguration problem of matchings in a triangular grid graph.
In this problem, we are given a nearly perfect matching in which each matching edge is labeled, and aim to transform it to a target matching by sliding edges one by one.
This problem is motivated to investigate the solvability of a sliding-block puzzle called ``Gourds'' on a hexagonal grid board, introduced by Hamersma et al.~[ISAAC 2020].
The main contribution of this paper is to prove that, if a triangular grid graph is factor-critical and has a vertex of degree $6$, then any two matchings can be reconfigured to each other.
Moreover, for a triangular grid graph~(which may not have a degree-$6$ vertex), we present another sufficient condition using the local connectivity. 
Both of our results provide broad sufficient conditions for the solvability of the Gourds puzzle on a hexagonal grid board with holes, where Hamersma et al. left it as an open question.
\end{abstract}

\section{Introduction}

Combinatorial reconfiguration is a fundamental research subject that studies the solution space of combinatorial problems.
A typical example is solving sliding-block puzzles such as the 15-puzzle.
The 15-puzzle can be viewed as the transformation between the arrangement of puzzle pieces, and the goal is to transform an initial arrangement of pieces to a given target arrangement.
Combinatorial reconfiguration has applications in a variety of fields such as mathematical puzzles, operations research, and computational geometry.
See the surveys by Nishimura~\cite{N18} and van den Heuvel~\cite{Heuvel13}.

Hamersma et al.~\cite{ref:gourds} introduced a new sliding-block puzzle on a hexagonal grid, which they call \textit{Gourds}.
The name ``gourd'' refers to the shape of the puzzle pieces, which are essentially $1 \times  2$ pieces on a board.
Like in the 15-puzzle, only one grid cell is empty.
The goal is to obtain a target configuration of pieces on the board by moving pieces one-by-one, similar to the $15$-puzzle. 
Here we allow a piece to make three different kinds of moves: slide, turn, and pivot~(see~\cite{ref:gourds} for the details).
They characterized hexagonal grid boards without holes such that the Gourds puzzle\footnote{In this paper, the Gourds puzzle refers to the numbered type in~\cite{ref:gourds} where each piece has numbers.} is always solvable, and left it as a main open question to characterize boards \textit{with holes}.

Motivated by the study of the Gourds puzzle,
we introduce a reconfiguration problem of matchings in a triangular grid graph.
In the problem, we are given a matching that exposes only one vertex, which is called a \textit{nearly perfect matching}.
Each matching edge, which corresponds to a puzzle piece, is labeled.
We are allowed to slide a matching edge toward the exposed vertex.
The goal is to move matching edges one-by-one to obtain a target labeled matching.
It should be emphasized that each edge in the given matching has to be moved to the edge with the same label in the target matching. 
See Section~\ref{sec:pre} for the formal definition.
We remark that our problem can be defined on a general graph, which may be of independent interest.
The problem setting is different from the matching reconfiguration problems studied in the literature.
See Section~\ref{sec:relatedwork}.

In this paper, we investigate the  reconfigurability of the above reconfiguration problem of labeled matchings on a triangular grid graph.
In particular, we aim to characterize a triangular grid graph such that any two labeled matchings can be reconfigured to each other.
We call such a graph \textit{reconfigurable}.

As mentioned in Section~\ref{sec:factor-critical}, it is not difficult to observe that, if a graph is reconfigurable, then it is $2$-connected and factor-critical.
A graph is \textit{factor-critical} if it has a nearly perfect matching that exposes any vertex.
These two conditions, however, are not sufficient, as there exists a $2$-connected factor-critical graph that is not reconfigurable.

The main contribution of this paper is to prove that, if a $2$-connected factor-critical triangular grid graph has at least one vertex of degree $6$, then it is reconfigurable.
Our results can be adapted to the Gourds puzzle by taking the dual of a triangular grid graph, which implies that the Gourds puzzle can always be solved when at least one hexagonal cell on the board does not touch the holes or the outer face.

The key idea to prove the main result is to exploit the ear decomposition in matching theory.
A factor-critical graph is known to have a constructive characterization using ear decomposition with odd paths and cycles.
Using the ear structure, we show that, if an ear decomposition starts from a reconfigurable subgraph, then we can recursively find reconfiguration steps between any two labeled matchings.
However, every ear decomposition does not necessarily satisfy the above assumption.
We then investigate the matching structure of a triangular grid graph to identify simple reconfigurable subgraphs such that every triangular grid graph with a vertex of degree $6$ admits an ear decomposition starting from one of them.

In addition, for a triangular grid graph~(which may not have a vertex of degree $6$), we present another sufficient condition for the reconfigurability using the local connectivity.
A graph is said to be \textit{locally-connected} if the neighbor vertices of any vertex induce a connected graph.
We prove that, if a triangular grid graph is locally-connected, but not isomorphic to the Star of David graph~(Figure~\ref{fig:davide}), then it is reconfigurable.
Moreover, we show that, for a graph with $2n+1$ vertices, we can find in polynomial time reconfiguration steps with length $O(n^3)$.

The characterization for the Gourds puzzle by Hamersma et al.~\cite{ref:gourds} implies that a triangular grid graph is reconfigurable if it is $2$-connected, but not isomorphic to the Star of David graph, and has no \textit{holes}, where a hole is a face with boundary length at least $6$.
Our conditions, which allow to have a hole, are much broader than theirs, as the local connectivity and the $2$-connectivity are equivalent for a graph with no holes.


\subsection{Related Work}\label{sec:relatedwork}

A factor-critical graph plays an important role in matching theory.
It is known that any non-bipartite graph can be decomposed in terms of maximum matching, called the \textit{Gallai-Edmonds} decomposition.
It essentially decomposes a given graph into factor-critical graphs, graphs with perfect matchings, and bipartite graphs. 
Also, factor-critical graphs are used to describe facets of the matching polytope of a given graph.
See~\cite{ref:ear_decomposition,schrijver-book} for the details.

Sliding-block puzzles have been investigated in both recreational mathematics and algorithms research. 
The 15-puzzle was introduced as a prize problem by Sam Loyd in 1878~\cite{slocum200615}.
In the 15-puzzle, it is characterized by odd/even permutations whether any configuration can be realized or not~\cite{AJM1879}.
However, it is NP-complete for $n \times n$ boards to compute the smallest number of steps to reach a given configuration~\cite{DemaineR18,RatnerW86}. 
There are many variants of the $15$-puzzle such as Rush Hour~\cite{BrunnerCDHHSZ21,FlakeB02} and rolling-block puzzles~\cite{BuchinB12}. 
Many puzzles have been shown NP-hard or PSPACE-hard~(see e.g., \cite{HearnDemaine}).

In the literature of combinatorial reconfiguration, the reconfiguration of matchings has been studied extensively.
Ito et al.~\cite{ItoDHPSUU11} initiated to study a reconfiguration problem of matchings.
The aim is to decide whether a given matching can be transformed to a target matching by adding/removing a matching edge in each step.
They showed that the problem can be solved in polynomial time with the aid of the Gallai-Edmonds decomposition.
On reconfigurating perfect matchings,
Ito et al.~\cite{ItoKKKO22} studied the shortest transformation of perfect matchings by taking the symmetric difference along an alternating cycle, motivated by the study of a diameter of the matching polytope~(see also~\cite{CardinalS23,Sanita18}).
Bonamy et al.~\cite{BonamyBHIKMMW19} restricts the length of alternating cycles used in the transformations to be $4$.
We remark that all the above mentioned problems aim to transform an initial (perfect) matching to a target one in which their matching edges are \textit{not} labeled.

By taking the line graph of a given graph, 
reconfiguration problems of matchings can be viewed as reconfiguration problems of independents sets.
Our problem setting is related to its variant known as the token sliding problem.
The token sliding problem is PSPACE-complete, even on restricted graph classes such as planar graphs~\cite{HearnD05}.
On the other hand, the problem can be solved in linear time on trees~\cite{DemaineDFHIOOUY15}, and it is fixed-parameter tractable on bounded degree graphs~\cite{BartierBM23}.
See also~\cite{abs-2204-10526} for the survey on the independent set and dominating set reconfiguration problems.
Another related problem is the token swapping problem.
In the problem, we are given tokens on each vertex of a graph, and we want to move every token to its target position by swapping  two adjacent tokens.
See e.g.,~\cite{AichholzerDKLLM22,BonnetMR18, Kim16,MiltzowNORTU16} and references therein.

\begin{figure}[t]
  \begin{minipage}[b]{0.48\columnwidth}
  \centering
\includegraphics[keepaspectratio,width=0.37\textwidth]
  {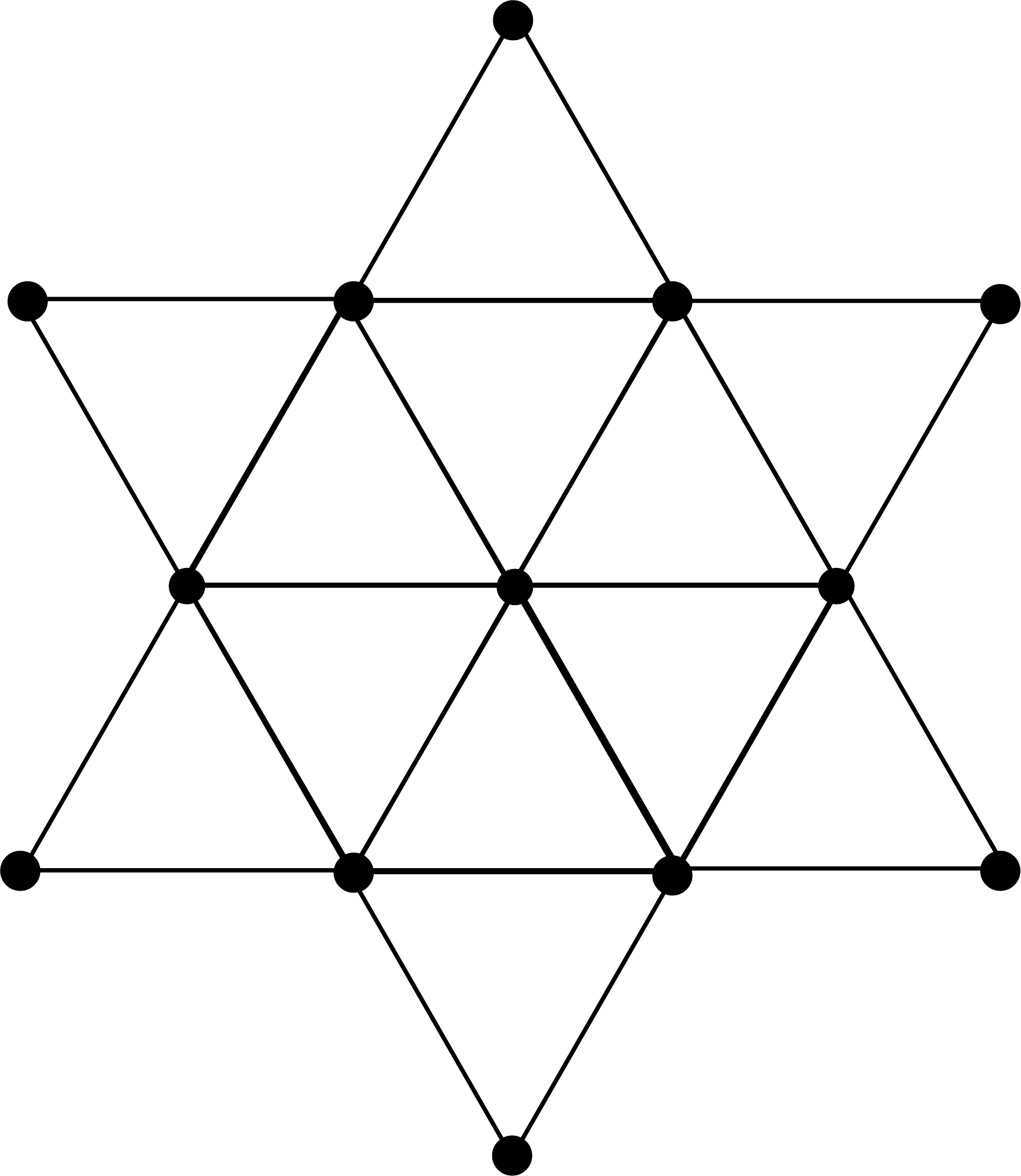}
  \caption{The Star of David graph.}
  \label{fig:davide}
  \end{minipage}
  \begin{minipage}[b]{0.48\columnwidth}
    \centering
\includegraphics[keepaspectratio,width=0.85\textwidth]
  {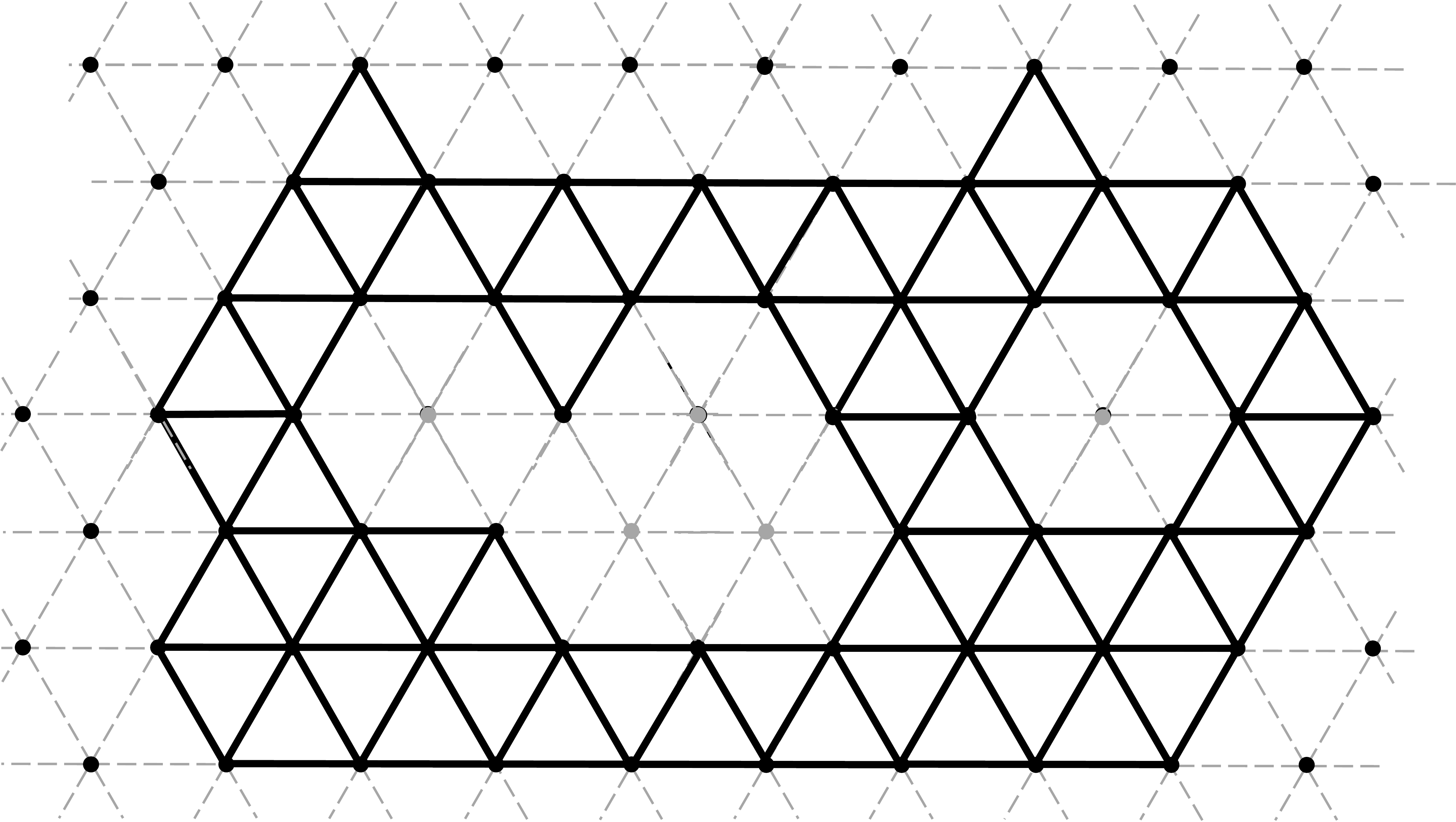}
  \caption{A triangular grid graph with $2$ holes.}
  \label{fig:grid_graph}
  \end{minipage}
\end{figure}

\section{Preliminaries}\label{sec:pre}

  Let $G=(V, E)$ be an undirected graph with $2n+1$ vertices.
  For a vertex $u$, we denote by $N(u)$ the set of vertices adjacent to $u$.
  For a vertex subset $X$, the subgraph induced by $X$ in $G$ is denoted by $G[X]$.
  A path or a cycle is \textit{odd} if it has an odd number of edges.

A \textit{matching} is a subset of edges that have no common end vertices.
A matching is \textit{nearly perfect} if its size is $n$.
A vertex is \textit{covered by} a matching $M$ if it is the end vertex of some edge in $M$, and \textit{exposed by} $M$ if it is not covered by $M$.
A cycle is \textit{$M$-alternating} if edges in $M$ and $E\setminus M$ appear alternatively along $C$, except for one vertex~(when the cycle is odd).

\paragraph*{Reconfiguration of Labeled Matching in Triangular Grid Graphs}

Consider the $2$-dimensional triangular lattice of infinite size.
A \textit{triangular grid} graph is a subgraph induced by a finite number of points in the triangular lattice.
See Figure~\ref{fig:grid_graph} for example.
In this paper, we also assume that a triangular grid graph is always connected.
A \textit{hole} of a triangular grid graph is a face of the graph whose boundary is a cycle of length at least $6$.

We here formally define our reconfiguration problem.
Let $G=(V, E)$ be a triangular grid graph with $2n+1$ vertices.
We denote $V=[2n+1]$, where, for a positive integer $x$, we write $[x]=\{1,2,\dots, x\}$.

A \textit{placement} is a mapping $p:[n]\to E$ such that $p(i)$ and $p(j)$ have no common end vertices for every distinct $i, j$.
We call each $p(i)$ a \textit{piece}.
Then $\{p(i)\mid i\in [n]\}\subseteq E$ forms a nearly perfect matching in $G$, which is denoted by $M_p$.
Let $v_p$ be the unique vertex exposed by $M_p$.
We also say that a mapping $p$ \textit{exposes} $v_p$.

We define the following operations on a placement, which we call \textsf{slide}~(see Figure~\ref{fig:slide}).
Suppose that there exists an integer $j\in [n]$ such that $p(j)=(u, v)$ and $(v, v_p)\in E$.
Then we transform $p$ to a placement $p_{\mathrm{s}}$ defined as 
\[
p_{\mathrm{s}}(i) = 
\begin{cases}
p(i) & (i\neq j)\\
(v, v_p) & (i=j).
\end{cases}
\]
The obtained placement $p_{\mathrm{s}}$ exposes the vertex $u$.
In this case, we write 
$p \leadsto p_{\mathrm{s}}$.

  \begin{figure}
    \centering
    \includegraphics[keepaspectratio,width=0.9\textwidth]
    {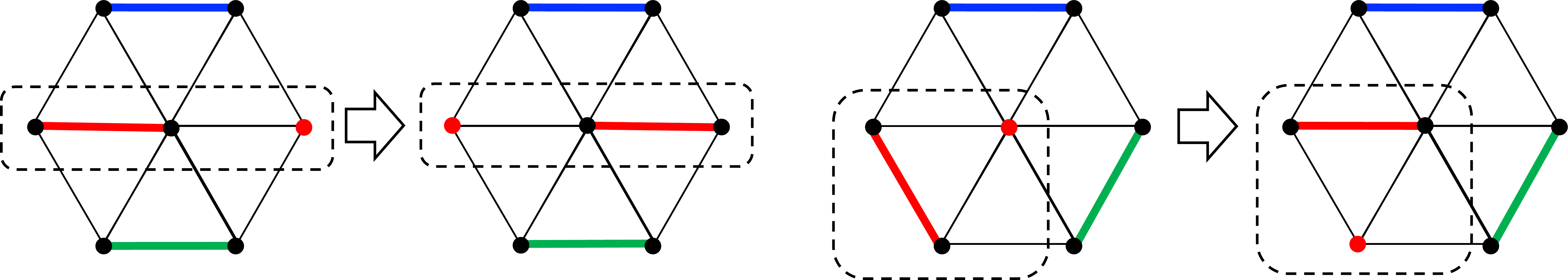}
    \caption{\textsf{Slide} operations. The colored, thick edges correspond to pieces.}
    \label{fig:slide}
  \end{figure}

Let $p, q$ be two distinct placements.
If there exists a sequence of placements
$\mu_0, \mu_1, \dots, \mu_{\ell}$ such that
(1) $\mu_0 = p$, $\mu_{\ell} = q$,
(2) $\mu_t \leadsto \mu_{t+1}$ for every integer
$t \in \{0,1,\dots,{\ell}-1\}$,
then we say that $p$ is \textit{reconfigured} to $q$.
A graph is \textit{reconfigurable} if any two placements can be reconfigured to each other.

We remark that, in the Gourds puzzle, a piece has a pair of labels~(numbers), meaning that each piece has an orientation.
That is, a mapping is defined from $[n]$ to $\{(u, v), (v, u)\mid (u, v)\in E\}$.
This requires us to define another operation to change the orientation of pieces.
Specifically, when a piece with the exposed vertex induces a triangle, we are allowed to change the orientation of the piece.
Our problem does not distinguish $(u, v)$ and $(v, u)$, and a placement is defined on a mapping from $[n]$ to $E$.
It should be noted, however, that our results can be adapted to the Gourds puzzle case with orientation.
See Sections~\ref{sec:fc_overview} and~\ref{sec:locally_connect} for the details.

\paragraph*{Rotation along a Cycle}

We define a sequence of \textsf{slide} operations, called \textit{rotation}, which will be used in the subsequent sections.
Let $C$ be an odd cycle.
We say that a placement $p$ is \textit{aligned with $C$} if $C$ is an odd $M_p$-alternating cycle and $C$ has the exposed vertex $v_p$.

  \begin{figure}
    \centering
    \includegraphics[keepaspectratio,width=0.9\textwidth]
    {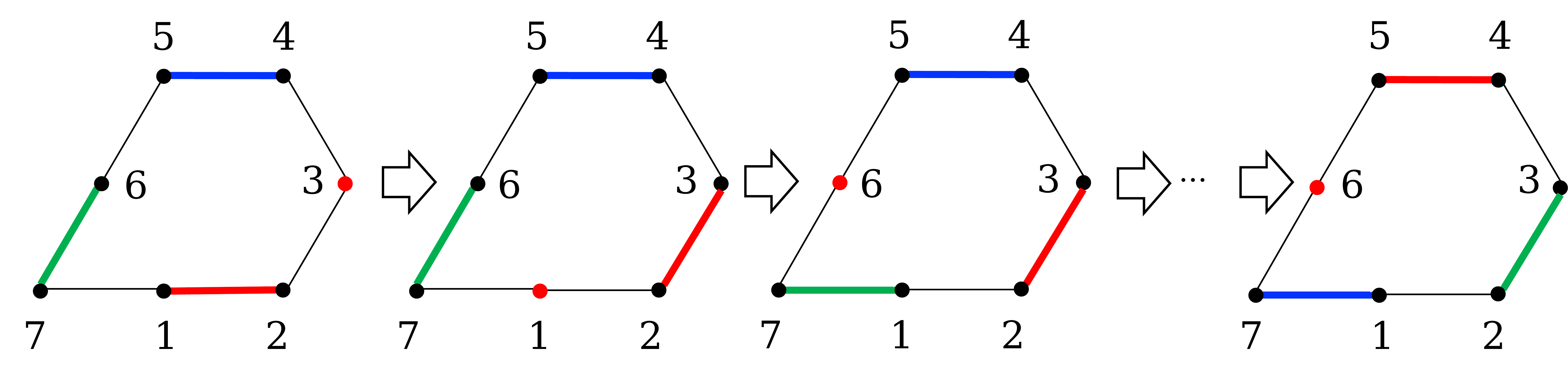}
    \caption{Rotation operations for a placement aligned with an odd cycle when $k=3$.}
    \label{fig:rotation}
  \end{figure}

Let $p$ be a placement aligned with $C$.
In what follows, 
we assume for simplicity that $V(C)=[2k+1]$ for some integer $k\in [n]$, and that the vertices of $C$ are aligned in the anti-clockwise order along $C$.
We also assume that the first $k$ pieces $p(1), p(2), \dots, p(k)$ of $p$ are located on the cycle $C$.

For an odd integer $j\in [2k+1]$, we define a placement $p_j$ as, for $i\in [k]$, 
\[
p_j(i) =
\begin{cases}
(2i-1, 2i) & (2i<j)\\
(2i, 2i+1) & (j<2i),
\end{cases}
\]
and $p_j(i)=p(i)$ for $i\geq k+1$.
Thus $p_j$ exposes the vertex $j$.
Moreover, for two integers $h, j\in[2k+1]$ such that $h\equiv j$~(mod $2$), define $p_{j, h}$ as, for $i\in [k]$, 
\[
p_{j, h}(i) =
\begin{cases}
(h+2i-2, h+2i-1) & (h+2i-1<j)\\
(h+2i-1, h+2i) & (j<h+2i-1),
\end{cases}
\]
where these vertex labels are defined on $\mathbb{Z}_{2k+1}$~(i.e., modulo $2k+1$), 
and $p_{j,h}(i)=p(i)$ for $i\geq k+1$.

Figure~\ref{fig:rotation} is an example when $k=3$.
The left-most figure depicts a placement $p_{3, 1}=p_3$ where 
$\left(p_{3, 1}(1), p_{3, 1}(2), p_{3, 1}(3)\right)=\left((1,2), (4,5), (6,7)\right)$.
By applying \textsf{slide} to $p_{3,1}$ once, we obtain $p_{1,1}=p_1$, that is,
$\left(p_{1,1}(1), p_{1,1}(2), p_{1,1}(3)\right)=\left((2,3), (4,5), (6,7)\right)$.
The right-most figure depicts a placement $p_{6,4}$, which is 
$\left(p_{6,4}(1), p_{6,4}(2), p_{6,4}(3)\right)=\left((4,5), (7,1), (2,3)\right)$.

The following observation asserts that $p_{j, h}$'s can be reconfigured to each other in $O(k^2)$ \textsf{slide} operations along $C$.
Such a sequence of \textsf{slide} operations is called \textit{rotation along $C$}, or we say that we \textit{rotate $p$ along $C$}.

\begin{observation}\label{obs:rotation}
For any two odd integers $j, j'\in [2k+1]$, 
we can reconfigure $p_j$ to $p_{j'}$ using at most $k$ \textsf{slide} operations.
Moreover, for any four integers $j, j'\in [2k+1]$ and $h, h'\in [2k+1]$ such that $j\equiv h$ and $j'\equiv h'$~\textup{(}$\textup{mod}\ 2$\textup{)},
we can reconfigure $p_{j, h}$ to $p_{j', h'}$ using at most $k^2+k$ \textsf{slide} operations.
\end{observation}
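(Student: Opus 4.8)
The plan is to analyze the two claims separately, both by tracking how a single \textsf{slide} operation changes the index $j$ of the exposed vertex among the placements $p_{j,h}$.

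For the first claim, consider the placements $p_j$ (i.e. $p_{j,1}$) with $j$ odd. From the definition, $p_j$ has pieces $(1,2),(3,4),\dots,(j-2,j-1)$ before the exposed vertex $j$ and $(j+1,j+2),\dots,(2k,2k+1)$ after it. I would first observe that from $p_j$ a single \textsf{slide} along $C$ either moves a piece ``forward'', taking $p_j$ to $p_{j+2}$ (slide the piece $(j+1,j+2)$ onto $(j,j+1)$, since $(j,j+1)=(v,v_p)\in E(C)$), or moves a piece ``backward'', taking $p_j$ to $p_{j-2}$ (slide $(j-2,j-1)$ onto $(j-1,j)$). Hence the $k$ placements $p_1,p_3,\dots,p_{2k-1}$ are arranged in a path under \textsf{slide}, and from any $p_j$ we can reach any $p_{j'}$ by at most $k-1 < k$ slides, each slide changing the exposed-vertex index by $\pm 2$. (One must check these slides are legal, i.e. the relevant edge lies in $C$ and hence in $E$, which is immediate since $C$ is an odd $M_p$-alternating cycle through $v_p$, so consecutive vertices of $C$ are joined by edges of $G$.)

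For the second claim, I would show that the ``full rotation'' $p_{j,h}\leadsto\cdots\leadsto p_{j,h+2}$ — keeping the exposed vertex fixed at (the label congruent to) $j$ but shifting the base $h$ by $2$ — can be realized in at most $k$ \textsf{slide} operations. The point is that $p_{j,h}$ and $p_{j,h+2}$ differ only in which pair-partition of $C\setminus\{j\}$ is used; unwinding the definition, going from one to the other amounts to sliding the pieces around $C$ once, analogously to the first claim but ``wrapping around'' the vertex $v_p$: each of the $k$ pieces on $C$ is advanced by one edge along $C$. So after at most $k$ slides the base shifts by $2$. Since $h$ ranges over $\mathbb{Z}_{2k+1}$ restricted to a fixed parity class, there are at most $\lfloor(2k+1)/2\rfloor\le k$ distinct bases of each parity, hence we can bring $h$ to $h'$ in at most $k$ such full rotations, i.e. at most $k\cdot k=k^2$ slides; then one more application of the first claim adjusts the exposed vertex from $j$ to $j'$ in at most $k$ slides. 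Altogether at most $k^2+k$ \textsf{slide} operations, as claimed. I would phrase this cleanly by first proving the single-rotation lemma ($p_{j,h}$ to $p_{j,h+2}$ in $\le k$ slides) and then composing.

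The main obstacle I anticipate is purely bookkeeping: verifying that the intermediate configurations produced while shifting $h$ are genuine placements $p_{j',h'}$ in the family (so that the composition argument closes up), and getting the modular arithmetic on $\mathbb{Z}_{2k+1}$ exactly right — in particular that a single \textsf{slide} on $p_{j,h}$ yields either $p_{j\pm 2,h}$ or the first step of the $h\mapsto h+2$ rotation, with no other cases, and that the wrap-around slide past $v_p$ indeed corresponds to incrementing $h$ by $2$ rather than by some other amount. Once the case analysis of ``what does one legal \textsf{slide} along $C$ do to $p_{j,h}$'' is pinned down, both counting bounds follow by a short induction on the number of steps, so I do not expect any conceptual difficulty beyond that.
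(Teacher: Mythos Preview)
Your argument for the first claim is correct and is exactly what the paper does.

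For the second claim there is a concrete error. You assert that a ``full rotation'' of $k$ slides---advancing each of the $k$ pieces one edge along $C$---carries $p_{j,h}$ to $p_{j,h+2}$ with the exposed vertex unchanged. This is false: every \textsf{slide} shifts the exposed vertex by $\pm 2$ along $C$, so after $k$ slides in a fixed direction the exposed vertex sits at $j\mp 2k\equiv j\pm 1\pmod{2k+1}$, not at $j$. Since $\gcd(2,2k+1)=1$, the exposed vertex first returns to $j$ only after $2k+1$ unidirectional slides, and \emph{that} is what shifts the base by~$2$. With this correction your count becomes at best $(2k+1)\cdot k+k$, which does not meet the stated bound $k^2+k$. (Your parity remark about ``at most $k$ distinct bases of each parity'' is also shaky: $h$ and $h'$ need not have the same parity, and in $\mathbb{Z}_{2k+1}$ the orbit of $h\mapsto h+2$ has length $2k+1$, so reaching $h'$ by steps of~$2$ may take up to $2k$ iterations.)

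The paper's proof avoids this by incrementing the base one unit at a time rather than two. From $p_{j,h}$ it first brings the exposed vertex to the current base (at most $k$ slides, via the first claim), reaching $p_{h,h}$; a single further slide then yields $p_{h-2,\,h+1}$, so the base has gone up by~$1$. Iterating this combined step at most $k$ times reaches base $h'$, and one last application of the first claim moves the exposed vertex to $j'$. The idea you are missing is that a single wrap-around slide past the base point already bumps $h$ by~$1$; you do not need to wait for the exposed vertex to complete a full $(2k{+}1)$-step tour of $C$.
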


\begin{proof}
Observe that, applying \textsf{slide} to $p_j$ along $C$,
the exposed vertex $j$ is moved to $j-2$ or $j+2$~(mod $2k+1$).
This means that $p_j \leadsto p_{j-2}$ and $p_j \leadsto p_{j+2}$ hold.
Hence, by repeating \textsf{slide} operations at most $k$ times, $p_j$ can be reconfigured to $p_{j'}$.
We next show the second statement.
Similarly to the first statement, we can reconfigure $p_{j, h}$ to $p_{h, h}$ in at most $k$ \textsf{slide} operations.
Applying \textsf{slide} to $p_{h, h}$ along $C$, we obtain a placement $p_{h-2,h+1}$.
Repeating this procedure at most $k$ times, we can reconfigure $p_{j, h}$ to $p_{h'-3, h'}$.
Since we can reconfigure $p_{h'-3, h'}$ to $p_{j', h'}$ in at most $k$ \textsf{slide} operations, the total number of \textsf{slide} operations is at most $k^2+k$.
\end{proof}

\section{Reconfiguration on Factor-Critical Graphs}
\label{sec:factor-critical}

In this section, we discuss the reconfigurability of a factor-critical graph.
Recall that a graph is \textit{factor-critical} if, for any vertex $v$, $G$ has a nearly perfect matching that does not cover $v$.

As mentioned in Introduction, being a factor-critical graph is a necessary condition for reconfigurability.

\begin{observation}
If a triangular grid graph $G$ is reconfigurable, then it is factor-critical.
\end{observation}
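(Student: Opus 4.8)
The plan is to derive factor‑criticality directly: assuming $G$ is reconfigurable, I would show that for every vertex $w$ there is a nearly perfect matching of $G$ exposing $w$. Since $G$ is reconfigurable it admits at least one placement $p_0$ (the configuration space being nonempty is implicit in the notion); let $v_0$ be the vertex it exposes. Because $G$ is reconfigurable, every placement is reachable from $p_0$, so the set $R$ of vertices exposed by \emph{some} placement coincides with the set of vertices to which the exposed vertex can be driven from $p_0$ by a sequence of \textsf{slide} operations. As $R=V$ is exactly the statement that $G-v$ has a perfect matching for every $v$, it suffices to prove that the exposed vertex can be moved to an arbitrary target $w$.

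I would prove this by transporting the exposed vertex along a path of $G$. Fix a $v_0$–$w$ path $v_0=x_0,x_1,\dots,x_k=w$ (it exists since a triangular grid graph is connected), and induct on $i$ to show that some placement reachable from $p_0$ exposes $x_i$. For the inductive step, let $q$ expose $x_i$; in $M_q$ the vertex $x_{i+1}\in N(x_i)$ is covered by some edge $\{x_{i+1},y\}$. If $y\in N(x_i)$, a single \textsf{slide} of that piece onto $\{y,x_i\}$ exposes $x_{i+1}$ and we are done. If $y\notin N(x_i)$, I would exploit the triangular geometry around the lattice edge $\{x_i,x_{i+1}\}$: generically $x_i$ and $x_{i+1}$ have a common neighbour $z$ in $G$, and by an exchange/augmenting‑path argument inside the perfect matching $M_q$ of $G-x_i$ one can pass to another (hence reachable) placement exposing $x_i$ in which $x_{i+1}$ is matched to such a $z$ — equivalently, $G-\{x_i,x_{i+1},z\}$ has a perfect matching — after which one \textsf{slide} completes the step. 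Alternatively, once an odd $M_q$‑alternating cycle through $\{x_i,x_{i+1}\}$ has been arranged, rotating along it via Observation~\ref{obs:rotation} moves the exposed vertex onto $x_{i+1}$ directly.

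The step I expect to be the genuine obstacle is precisely this single lattice‑edge step: moving the exposed vertex across one edge of $G$ is not automatic, and making it work requires the edge $\{x_i,x_{i+1}\}$ to lie inside a triangle (or at least inside an odd alternating structure) of $G$, which is where the triangular‑grid structure — and possibly the companion observation that a reconfigurable graph is $2$‑connected — is actually used. The residual care concerns edges of $G$ that sit on ``spikes'' and hence lie in no triangle, together with a handful of degenerate small graphs (such as a single edge or three collinear vertices), where one should either note that these fall outside the regime under consideration or verify the claim by direct inspection; the remainder is a routine induction.
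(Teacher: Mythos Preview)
Your plan has a genuine gap, and it lies exactly where you flag it yourself: the single--edge step of the induction.  Look at what your inductive step actually needs.  Given a placement $q$ exposing $x_i$ and a common neighbour $z$ of $x_i$ and $x_{i+1}$, you want another placement $q'$ that exposes $x_i$ and matches $x_{i+1}$ to $z$; equivalently, you need $G-\{x_i,x_{i+1},z\}$ to have a perfect matching.  Nothing you have at hand guarantees this.  Your ``exchange/augmenting--path argument inside $M_q$'' is a statement about modifying one perfect matching of $G-x_i$ into another, but whether a perfect matching with the prescribed edge $\{x_{i+1},z\}$ \emph{exists at all} is a separate question, and it can fail.  The hypothesis of reconfigurability tells you only that \emph{if} such a $q'$ exists then it is reachable from $q$; it says nothing about existence.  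In fact, strip the reachability remark from your argument and notice that the remaining reasoning never invokes reconfigurability: it would prove that every connected triangular grid graph admitting a nearly perfect matching is factor--critical, which is false.  Your alternative route via an odd $M_q$--alternating cycle through $\{x_i,x_{i+1}\}$ is circular: the paper's Lemma~\ref{lem:centralcycle2}, which produces such cycles, assumes factor--criticality.

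The paper's proof avoids all of this by taking the contrapositive.  If $G$ is not factor--critical, pick a vertex $u$ that every nearly perfect matching covers.  Then the \emph{label} of the piece incident to $u$ is invariant under \textsf{slide}: a slide that removed this piece from $u$ would expose $u$, which is impossible.  Hence any two placements that assign different labels to the piece at $u$ (and there are such pairs once $n\ge 2$) cannot be reconfigured to one another, so $G$ is not reconfigurable.  This is a two--line invariant argument, and it is where the hypothesis is actually used; you should abandon the path--transport idea and argue this way instead.
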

\begin{proof}
If $G$ is not factor-critical, then $G$ has some vertex $u$ such that every nearly perfect matching covers $u$.
Then we cannot move the piece covering $u$ in an initial placement so that $u$ becomes exposed.
Hence there exist two placements such that one cannot be reconfigured to the other.
Thus the observation holds.
\end{proof}

Moreover, as observed in Hamersma et al.~\cite{ref:gourds}, the $2$-connectivity is necessary for a graph to be reconfigurable.
We remark that, even if a graph is  $2$-connected and factor-critical, it may not be reconfigurable. 
See Figure~\ref{fig:cant_example}.

\begin{figure}[tbp]
  \begin{minipage}[b]{0.48\columnwidth}
    \centering
\includegraphics[keepaspectratio,width=0.6\textwidth]
  {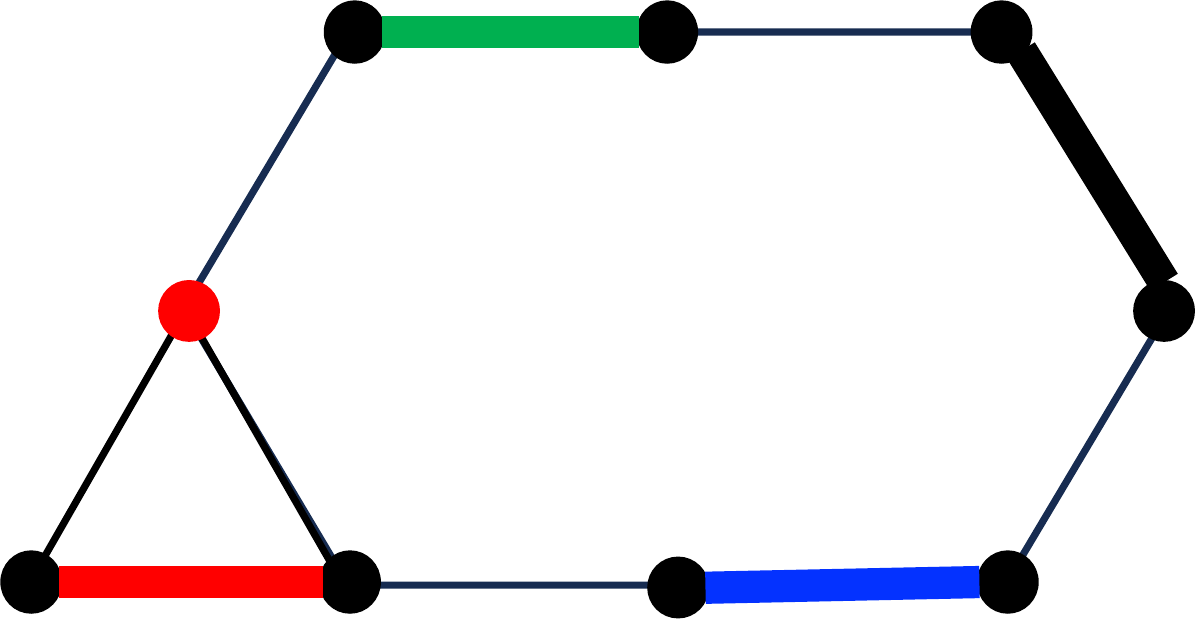}
  \caption{A factor-critical graph that is not reconfigurable. We cannot change the ordering of the pieces by \textsf{slide}.}
  \label{fig:cant_example}
  \end{minipage}
  \hspace{0.02\columnwidth} 
  \begin{minipage}[b]{0.48\columnwidth}
    \centering
\includegraphics[keepaspectratio,width=0.6\textwidth]
  {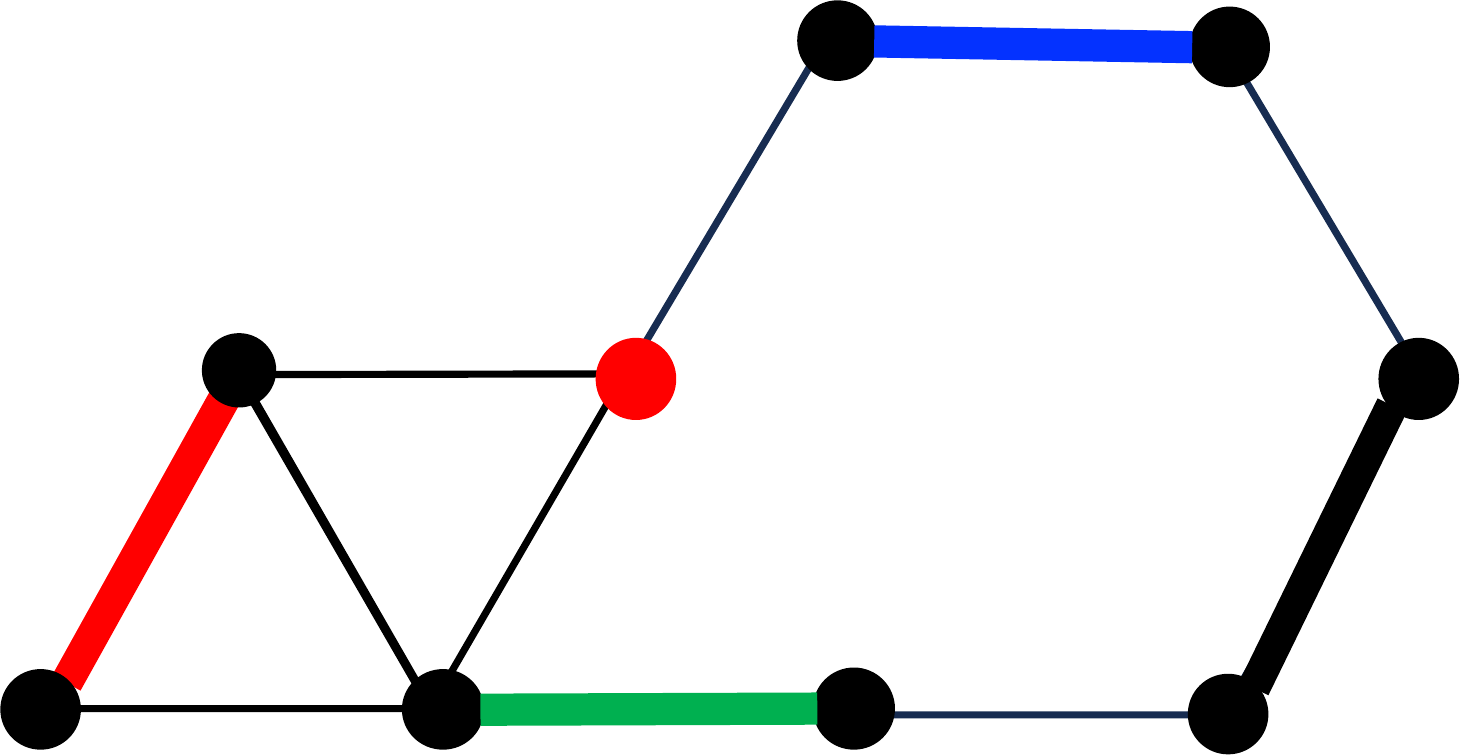}
  \caption{A factor-critical graph that is reconfigurable.}
  \label{fig:can_example}
  \end{minipage}
\end{figure}

The main theorem of this section is the following.
We show that a graph is reconfigurable if it has a vertex of degree $6$, which corresponds to a vertex not on the boundary cycles of the holes or the outer face.

\begin{restatable}{theorem}{factoralgorithm}
  \label{thm:factoralgorithm}
Let $G=(V, E)$ be a $2$-connected  factor-critical triangular grid graph.
If $G$ has a vertex of degree $6$, then $G$ is reconfigurable.
\end{restatable}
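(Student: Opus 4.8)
The plan is to exploit the ear-decomposition characterization of factor-critical graphs, as the introduction foreshadows. Recall that a factor-critical graph $G$ admits an ear decomposition $G_0 \subseteq G_1 \subseteq \dots \subseteq G_k = G$, where $G_0$ is a single vertex (or, more usefully, a triangle), and each $G_{i+1}$ is obtained from $G_i$ by attaching an odd ear (a path, or a cycle if both endpoints coincide) internally disjoint from $G_i$. The strategy is a two-layer argument. First, an \emph{inductive transfer lemma}: if $G_i$ is reconfigurable and $G_{i+1}$ is obtained by adding an odd ear $P$, then $G_{i+1}$ is reconfigurable. Second, a \emph{base-case / structural lemma} tailored to triangular grid graphs: every $2$-connected factor-critical triangular grid graph with a degree-$6$ vertex admits an ear decomposition whose $G_0$ is one of a small list of explicitly reconfigurable subgraphs.

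For the transfer lemma, the key tool is the rotation operation from Observation~\ref{obs:rotation}. Suppose we want to reconfigure a placement $p$ of $G_{i+1}$ to a target placement $q$. The ear $P$ together with a path in $G_i$ joining its endpoints forms an odd cycle $C$; the idea is to first ``clear'' the ear — use rotation along a suitable odd cycle through $C$ to push all pieces lying on the internal edges of $P$ into $G_i$, bringing the exposed vertex onto $G_i$ as well, so that we are reduced to a placement supported (away from a controlled set) on $G_i$. Because $G_i$ is reconfigurable we can then permute labels and reposition pieces freely within $G_i$; and we can re-populate the ear in any desired configuration by the reverse rotation. The bookkeeping needed is: (i) when $P$ is a path with distinct endpoints $x,y$, one endpoint is matched into $P$ and the other into $G_i$, so the odd cycle $C$ we rotate along needs an odd-length $x$–$y$ path inside $G_i$ that is alternating for the current matching — here factor-criticality of $G_i$ (hence existence of alternating structure connecting any two vertices) is used; (ii) when $P$ is an odd cycle attached at a single vertex $z$, the situation is exactly the ``aligned with an odd cycle'' case of Observation~\ref{obs:rotation}. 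One shows that after at most a polynomial number of slides we reach a normal form, from which reconfigurability of $G_i$ finishes the job.

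For the base case, I would first verify by hand that a few small triangular-grid gadgets are reconfigurable — a single triangle is trivially reconfigurable, and the ``can example'' of Figure~\ref{fig:can_example} (a triangle with pendant/attached structure) gives a slightly larger one; more importantly, the hexagonal wheel $W_6$ (the degree-$6$ vertex together with its six neighbors forming a cycle of length $6$) should be shown reconfigurable directly, using rotations along its many triangles and $6$-cycles to realize arbitrary label permutations. Then the structural claim is: if $G$ has a degree-$6$ vertex $v$, then the induced subgraph on $v \cup N(v)$ is such a hexagonal wheel (since $v$ non-boundary forces all six surrounding triangles to be present), and $G$ has an ear decomposition beginning from this $W_6$. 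This last point — that $G$, being $2$-connected, can be grown from $W_6$ by odd ears — follows from the general fact that any $2$-connected factor-critical subgraph containing a nice starting piece extends to an ear decomposition of the whole graph; the triangular-grid geometry is what guarantees the ears can be chosen odd.

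I expect the main obstacle to be the transfer lemma in the path-ear case: making precise the claim that an odd alternating $x$–$y$ path exists inside $G_i$ compatible with the \emph{current} matching, so that the rotation is actually executable as a sequence of slides, and handling the interaction between the exposed vertex's location and which endpoint of the ear is internally matched. Ensuring that every intermediate placement is legal (the exposed vertex moves consistently under each slide) and that the process terminates requires care; a clean way is probably to prove a stronger invariant — that from any placement of $G_{i+1}$ one can reach a canonical placement in which the ear carries a fixed set of labels in fixed positions and the exposed vertex sits at a fixed vertex of $G_i$ — and then only invoke reconfigurability of $G_i$ between canonical placements. The degree-$6$ hypothesis enters only in the base case, so isolating and discharging it via the $W_6$ gadget is the other place where a genuinely grid-specific argument, rather than pure matching theory, is needed.
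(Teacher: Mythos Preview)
Your high-level architecture matches the paper's: an odd proper ear decomposition, a transfer lemma (``if $G_{k-1}$ is reconfigurable then so is $G_k$''), and a base case where the degree-$6$ hypothesis is discharged. The transfer-lemma sketch is roughly right, though some details are off (in an aligned placement \emph{both} endpoints of the ear are matched into $G_i$, not one each; and for a $2$-connected factor-critical graph the ears in a proper decomposition are paths with distinct endpoints, so the ``odd cycle attached at a single vertex'' case never arises). The paper also needs $|V(G_{k-1})|\ge 5$ for the transfer step, which rules out starting from a bare triangle.

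The genuine gap is your base case. You propose to start the ear decomposition from the hexagonal wheel $W_6$ induced on the degree-$6$ vertex $o$ and its neighbours, but the extension principle you invoke (the paper's Lemma~\ref{lem:centralcycle1}) requires the starting subgraph to be \emph{central}, i.e.\ $G-W_6$ must have a perfect matching. This fails in general: take $W_6$ with boundary $a,b,c,d,e,f$ and attach one extra vertex $g$ adjacent to $a,b$ and another vertex $h$ adjacent to $d,e$. The resulting $9$-vertex graph is $2$-connected and factor-critical, yet $G-W_6=\{g,h\}$ is an independent pair, so $W_6$ is not central and no odd proper ear decomposition of $G$ can start from it. The vague claim that ``triangular-grid geometry guarantees the ears can be chosen odd'' does not rescue this; centrality of the starting block is exactly what is at stake.

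What the paper actually does for the base case is considerably more delicate (Theorem~\ref{thm:find_ear}): it fixes a nearly perfect matching $M$ exposing $o$, performs a case analysis on $|D\cap M|\in\{0,1,2,3\}$ where $D$ is the hexagon boundary, and in every case extracts either a central $5$-cycle (yielding a pentagon $G_3$) or an odd central cycle together with a length-$3$ ear whose endpoints are adjacent on it (yielding the cycle-plus-diamond graph of Lemma~\ref{lem:ThreeEarBase}). These two gadgets are the reconfigurable seeds, not $W_6$. So the structural lemma you would need is not ``$W_6$ starts an ear decomposition'' but rather ``some admissible seed inside the hexagon does'', and establishing that is the real work.
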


We remark that our condition is not necessary, as there exists a $2$-connected factor-critical graph such that it does not have a vertex of degree $6$, but it is reconfigurable. 
See Figure~\ref{fig:can_example}~(see also Lemma~\ref{lem:ThreeEarBase} and Section~\ref{sec:conclusion}).

\subsection{Proof Overview}\label{sec:fc_overview}

In this section, we present the proof overview of Theorem~\ref{thm:factoralgorithm}.
The proof makes use of the ear decomposition of a factor-critical graph to design a reconfiguration sequence.

An \textit{ear decomposition} of a graph $G$ is a sequence of subgraphs $G_1,G_2,\dots,G_k=G$ starting from a subgraph $G_1$ such that $G_{i+1}$ is obtained from $G_{i}$ by adding an ear $P_i$ for each $i\geq 1$, where an \textit{ear} $P$ of a subgraph $G'$ is a path of $G$ with end vertices in $G'$ such that $P$ is internally disjoint from $G'$.
We denote by $G'+P$ the subgraph obtained from $G'$ by adding the ear $P$.
Thus, in the ear decomposition, it holds that $G_{i+1}=G_i+P_i$ for each $i\in[k-1]$.
See Figure~\ref{fig:ear_align} for an example.

An ear decomposition is \textit{proper} if the end vertices of each ear are distinct, and \textit{odd} if each ear is of odd length.
It is known that a $2$-connected factor-critical graph is characterized by odd and proper ear decomposition.

\begin{proposition}[Theorem 5.5.2 in Lov\'{a}sz--Plummer \cite{ref:ear_decomposition}]
  \label{prop:2connect_critical}
  A graph $G$ is $2$-connected and factor-critical if and only if $G$ has an odd and proper ear decomposition starting from an odd cycle.
\end{proposition}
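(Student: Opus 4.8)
## Proof Plan for Proposition~\ref{prop:2connect_critical}

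The plan is to prove both directions of the characterization, reducing to standard facts about ear decompositions and factor-criticality. For the harder direction — that a $2$-connected factor-critical graph $G$ admits an odd, proper ear decomposition starting from an odd cycle — I would build the decomposition greedily while maintaining a strong invariant: each intermediate subgraph $G_i$ is itself $2$-connected and factor-critical. First I would establish the base case: since $G$ is $2$-connected, it contains a cycle, and since $G$ is factor-critical (hence non-bipartite, as every factor-critical graph with at least one edge has an odd cycle through any prescribed vertex), I can pick $G_1$ to be an odd cycle. The key local step is then: given $G_i \subsetneq G$ that is $2$-connected and factor-critical, find an ear $P_i$ of $G_i$ in $G$ such that $G_i + P_i$ is again $2$-connected, factor-critical, and $P_i$ has odd length with distinct endpoints.

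For the inductive step I would argue as follows. Because $G$ is $2$-connected and $G_i \neq G$, there is an ear $P$ of $G_i$ (take any vertex $v \notin V(G_i)$; by $2$-connectivity of $G$ there are two internally disjoint $v$–$G_i$ paths, which splice into an ear). Adding any ear to a $2$-connected graph keeps it $2$-connected, so the connectivity condition is automatic. The real content is parity and factor-criticality. Here I would invoke the matching-theoretic characterization of factor-critical graphs: $G_i + P$ is factor-critical iff for some (equivalently, any) near-perfect matching $M$ of $G_i$, every vertex of $G_i+P$ is exposed by some near-perfect matching of $G_i+P$; concretely, if $P = u\,x_1\,x_2\cdots x_{t}\,w$ is an ear and $t$ is the number of internal vertices, then $P$ together with a near-perfect matching of $G_i$ (exposing $u$, which exists by factor-criticality of $G_i$) gives a near-perfect matching of $G_i + P$ only when the ear is \emph{odd}, i.e.\ has an even number of internal vertices; an even ear leaves the interior unmatched in a way that cannot be repaired. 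So I would show: among the available ears, one can always choose an odd one — otherwise a parity/counting argument on near-perfect matchings of $G$ (which exist exposing every vertex) contradicts the existence of the even ear, or, more cleanly, one shortens an even ear using a third path guaranteed by higher local connectivity. The properness (distinct endpoints) can be arranged because if both endpoints coincide at a vertex $z$, the ear is a cycle attached at one point, contradicting $2$-connectivity of $G$ being used to route through two distinct attachment points; alternatively one reroutes.

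For the easy direction — an odd, proper ear decomposition starting from an odd cycle forces $2$-connected and factor-critical — I would induct on the number of ears. The base odd cycle is $2$-connected and factor-critical (a single odd cycle $C_{2\ell+1}$ clearly has, for each vertex $v$, a near-perfect matching exposing $v$). For the step, adding a proper ear $P$ of odd length to a $2$-connected graph yields a $2$-connected graph (standard). For factor-criticality: let $G_{i+1} = G_i + P$ with $P$ an odd $u$–$w$ ear, $u \neq w$, and let $v \in V(G_{i+1})$ be arbitrary. If $v \in V(G_i)$, take a near-perfect matching of $G_i$ exposing $v$ and extend it along the odd ear $P$ (matching $P$'s edges starting from $u$, which is covered, so $P$ contributes a perfect matching of its internal vertices plus the $u$-end — one checks the parity works precisely because $|E(P)|$ is odd); if $v$ is internal to $P$, split $P$ at $v$ into two even-plus-odd subpaths, match one side toward $u$ and the other toward $w$, and match the rest of $G_i$ by a near-perfect matching of $G_i$ exposing, say, $u$ (which exists), absorbing $w$ into $P$'s matching — again the odd length of $P$ makes the parities consistent.

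The main obstacle I anticipate is the inductive step of the hard direction: guaranteeing that an \emph{odd} ear is always available (not merely some ear). Naively there could be a situation where every ear extending $G_i$ inside $G$ has even length; ruling this out requires using factor-criticality of the full graph $G$ — specifically, that $G$ has a near-perfect matching $M$ exposing a chosen vertex, restricting $M$ to interactions with $G_i$, and reading off from $M$ an ear of the correct parity (an $M$-alternating path). This is exactly the delicate part of Lov\'asz--Plummer's argument, and I would lean on the interplay between near-perfect matchings and alternating paths rather than pure connectivity. Since the statement is cited as Theorem 5.5.2 of Lov\'asz--Plummer, in the paper I would simply cite it; the sketch above is the route one would take to reprove it.
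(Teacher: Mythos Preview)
The paper does not prove this proposition; it is quoted verbatim as Theorem~5.5.2 of Lov\'asz--Plummer and used as a black box (the paper only remarks, before Lemma~\ref{lem:centralcycle1}, that the \emph{proof in that reference} yields the extension property stated there). So there is nothing in the paper to compare your argument against, and you yourself note at the end that in the paper one would simply cite it.

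As for the sketch itself: the easy direction is fine. For the hard direction your outline matches the standard Lov\'asz--Plummer strategy, but two steps are still soft. First, your properness argument (``if both endpoints coincide \dots\ contradicting $2$-connectivity of $G$'') is not valid as stated: $2$-connectivity of $G$ does not by itself forbid an ear of $G_i$ from being a cycle pinned at a single vertex of $G_i$; one really needs to reroute using $2$-connectivity to split such a pendant cycle into a proper ear plus leftover, or to choose the ear differently from the start. Second, the existence of an \emph{odd} ear is, as you say, the crux, and ``a parity/counting argument \dots\ or one shortens an even ear using a third path'' is not yet an argument. The clean way is the one you gesture at last: fix a near-perfect matching $M$ of $G$ exposing a vertex of $G_i$, and read off the next ear as an $M$-alternating segment leaving and re-entering $G_i$; alternation forces odd length. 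That is exactly the mechanism Lov\'asz--Plummer use, so your plan is on the right track but would need those two points made precise to stand on its own.
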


Let $p$ be a placement of $G$.
Recall that $M_p$ denotes a nearly perfect matching $\{p(i)\in E\mid i\in [n]\}$, and $v_p$ is the vertex exposed by $M_p$.
We say that a placement $p$ is \textit{aligned with an ear decomposition $G_1,\ldots,G_k$} if it satisfies the following two conditions~(Figure~\ref{fig:ear_align}).
  \begin{itemize}
    \item[(a)] $G_1$ is an odd $M_p$-alternating cycle with the exposed vertex $v_p$.
    \item[(b)] For each $i\in [k-1]$, the ear $P_i$ is $M_p$-alternating and its end vertices are not covered by $M_p\cap E(P_i)$.
  \end{itemize}

  \begin{figure}
    \centering
    \includegraphics[keepaspectratio,width=0.35\textwidth]
    {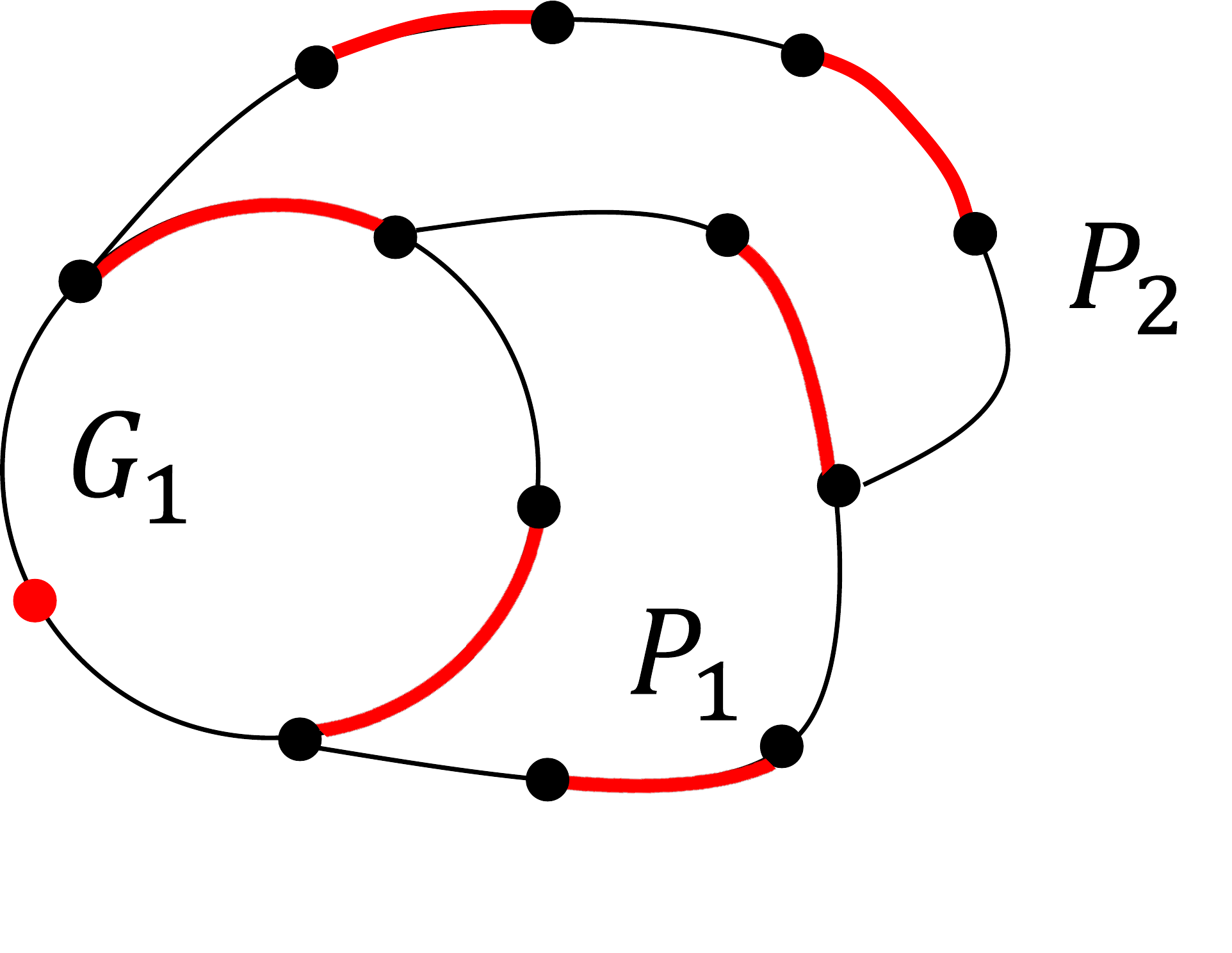}
    \caption{An ear decomposition and a matching aligned with the ear decomposition.}
    \label{fig:ear_align}
  \end{figure}

We show that any placement $p$ can be reconfigured so that the obtained placement $p'$ is aligned with a given ear decomposition $G_1, \dots, G_k$.
See Section~\ref{sec:Step1} for the proof.

\begin{restatable}{lemma}{alignedear}
\label{lem:alignedear}
  Let $G$ be a $2$-connected factor-critical triangular grid graph with $2n+1$ vertices.
  Let $G_1,\ldots,G_k$ be an odd and proper ear decomposition of $G$.
  Then we can reconfigure any placement $p$ to a placement aligned with $G_1,\ldots,G_k$ in $O(kn)$ \textsf{slide} operations.
\end{restatable}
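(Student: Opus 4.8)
The goal is to take an arbitrary placement $p$ and massage it, via a bounded number of \textsf{slide} operations, into a placement that is "compatible" with the fixed ear decomposition $G_1,\ldots,G_k$ in the sense of conditions (a) and (b). I would do this by induction on $k$, peeling off ears from the outside in. Concretely, I would process the ears in reverse order $P_{k-1}, P_{k-2}, \ldots, P_1$, and at stage $i$ (working with the subgraph $G_{i+1} = G_i + P_i$) arrange that the piece structure on $P_i$ already looks right and that the exposed vertex has been pushed into $G_i$, so that the remaining work is confined to the smaller graph $G_i$. Since $G_1$ is an odd cycle, the base case is handled by rotation along $G_1$ (Observation~\ref{obs:rotation}): once the exposed vertex lies on $G_1$ and $M_p$ restricted to $G_1$ is alternating, we are already aligned with $G_1$; if it is not yet alternating on $G_1$, a constant amount of local sliding on the cycle fixes it.

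\textbf{The inductive step.} Suppose the placement is already aligned with the ear decomposition of $G_{i+1}$; I want to reduce to $G_i$. The ear $P_i$ is an odd path $u = w_0, w_1, \ldots, w_{2t+1} = v$ with endpoints $u,v \in G_i$, internally disjoint from $G_i$, and $G$ is factor-critical, so each internal vertex of $P_i$ must be covered by $M_p$. Because $P_i$ has an even number of internal vertices, generically those internal edges of $P_i$ that belong to $M_p$ will pair up the internal vertices among themselves, leaving $P_i$ alternating with both endpoints uncovered by $M_p \cap E(P_i)$ — which is exactly condition (b). The work is to force this generic picture when it fails: an internal vertex $w_j$ might be matched by an edge leaving $P_i$ into $G_i$ (if $j = 1$ or $j = 2t$, since internal vertices of $P_i$ have no other neighbours *except* possibly at the two ends, where they may be adjacent to $G_i$), or the exposed vertex might currently sit inside $P_i$. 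In the second case I slide along $P_i$ to walk the exposed vertex to an endpoint and then one more step into $G_i$; this costs $O(|P_i|)$ slides and, since $P_i$ then has its internal vertices self-matched, it leaves $P_i$ in the desired alternating state. The first case (a "boundary" matching edge $w_1 x$ with $x \in G_i$, or symmetrically at $w_{2t}$) is handled by first sliding along $P_i$ to create an exposed vertex near $w_1$, then sliding the offending piece $w_1 x$ back onto $P_i$ along the edge $w_0 w_1$, restoring the in-$P_i$ configuration — again $O(|P_i|)$ operations. Crucially, all these moves are supported on $P_i$ together with a bounded neighborhood, so they do not disturb the alignment already achieved on $G_{i-1}$-side ears that we have not yet touched is irrelevant (we process outer ears first), and they leave $G_i$'s matching structure intact except for moving the exposed vertex into $G_i$ and possibly swapping which edge incident to $u$ or $v$ is used.

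\textbf{Bookkeeping and the cost bound.} After the step, the restriction of the placement to $G_i$ is a valid placement of $G_i$ (a nearly perfect matching of $G_i$ together with its exposed vertex), so the induction hypothesis applies to the smaller ear decomposition $G_1,\ldots,G_i$. Summing the costs: each of the $k-1$ ears contributes $O(|E(P_i)|)$ slides plus, possibly, an $O(n)$ term if at some stage the exposed vertex is far from $P_i$ and must be routed to it through $G_{i+1}$ — that routing is itself a sequence of slides along a path, hence $O(n)$. Over all ears this gives $O(kn)$, matching the claimed bound. One subtlety to watch: when I route the exposed vertex toward $P_i$, I must make sure the routing path stays in $G_{i+1}$ and that the intermediate slides do not destroy alignment of the ears $P_{i+1},\ldots,P_{k-1}$ that are already fixed — but since those ears are "outside" $G_{i+1}$ and alignment condition (b) is a local statement about $M_p \cap E(P_j)$, a route chosen to stay within $G_{i+1}$ never touches their edges.

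\textbf{Main obstacle.} I expect the genuinely delicate part to be the case analysis for a boundary matching edge at the ends of an ear together with the exposed vertex possibly being ill-placed \emph{simultaneously}, and ensuring in all sub-cases that after the corrective slides both endpoints of $P_i$ end up uncovered by $M_p \cap E(P_i)$ while a single well-defined exposed vertex has been produced inside $G_i$. The parity of $P_i$ (odd length, even number of internal vertices) is what makes this possible, but the argument has to track exactly which of $u,v$ (if any) is matched into $P_i$ versus into $G_i$ in each configuration; this is where a careful figure-driven case check is unavoidable. The cost accounting, by contrast, is routine once the moves are pinned down.
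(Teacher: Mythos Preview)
Your overall strategy---process the ears in reverse order $P_{k-1},\ldots,P_1$, at each stage push the exposed vertex down into $G_i$, and recurse---is exactly the paper's approach. But you are missing the one observation that collapses the whole argument: in $G_{i+1}=G_i+P_i$, every internal vertex of $P_i$ has degree exactly~$2$. This is immediate from the definition of an ear: internal vertices of $P_i$ lie outside $V(G_i)$, so their only incident edges in $G_{i+1}$ are the two edges along $P_i$. Consequently your ``boundary matching edge $w_1x$ with $x\in G_i$'' case cannot occur (the only neighbour of $w_1$ in $G_{i+1}$ besides $w_2$ is $w_0=u$, and $(w_0,w_1)$ is already an edge of $P_i$), and once the exposed vertex sits at an endpoint of $P_i$, the restriction of the matching to $P_i$ is \emph{forced} to be the alternating matching of condition~(b). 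The case analysis you flag as the ``main obstacle'' simply does not arise; likewise your base-case worry about $G_1$ is empty, since a nearly perfect matching of an odd cycle is automatically alternating.

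What remains is only the routing step: move the exposed vertex, within $G_{i+1}$, to a chosen endpoint $v_i$ of $P_i$. This is Lemma~\ref{lem:exposed}, and it relies on $G_{i+1}$ being factor-critical (which it is, since $G_1,\ldots,G_{i+1}$ is itself an odd proper ear decomposition; cf.\ Proposition~\ref{prop:2connect_critical}). You gesture at ``routing along a path'' but never say why such an $M_p$-alternating path exists; factor-criticality of $G_{i+1}$ is what supplies it. With these two points the proof is three lines: for $i=k-1,\ldots,1$, apply Lemma~\ref{lem:exposed} inside $G_{i+1}$ to expose $v_i$; the degree-$2$ property forces $P_i$ to be aligned; the cost is $O(n)$ per ear, hence $O(kn)$ in total.
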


Let $q$ be a target placement of $G$.
Applying Lemma~\ref{lem:alignedear} to $q$ as well, we can reconfigure $q$ so that the obtained placement $q'$ is aligned with $G_1, \dots, G_k$.
By taking the inverse of the reconfiguration steps, we see that $q'$ can be reconfigured to $q$ in $O(kn)$ \textsf{slide} operations.
Therefore, in order to reconfigure $p$ to $q$, it suffices to reconfigure $p'$ to $q'$.

We now present how to find a reconfiguration sequence between two placements aligned with $G_1, \dots, G_k$.
Since $G_i$ is factor-critical for any $i\in [k]$, 
the ear structure suggests to design a reconfiguration sequence recursively.
In fact, we will show in Lemma~\ref{lem:phase2}~(Section~\ref{subsec:year_algorithm}) that, if $G_j$ is a reconfigurable graph with at least $5$ vertices for some $j < k$, then so is $G_k=G$.
Note that the lemma holds even if a graph is not a triangular grid graph.
However, $G_j$'s may not necessarily be reconfigurable, as there exists a factor-critical graph which is not reconfigurable.
To overcome the difficulty, we introduce a special kind of ear decomposition starting from simple reconfigurable subgraphs.

We say that an odd and proper ear decomposition $G_1, G_2, \dots, G_k$ is \textit{admissible} if it satisfies either 
\begin{itemize}
\item[(i)] $G_1$ is a cycle of length $5$~(Figure~\ref{fig:pentagon_ear}), or
\item[(ii)] $P_1$ is of length $3$ and has the end vertices $u, v$ which are adjacent in $G_1$~(Figure~\ref{fig:3_ear}).
\end{itemize}

Consider the case when (i) is satisfied.
Since the ordering of ears with length $1$ may be changed in the ear decomposition, we may assume that the first $2$ ears $P_1$ and $P_2$ are the inner edges of $G_1$.
Then $G_3 = G_1+P_1+P_2$ induces a pentagon in $G$, where a \textit{pentagon} is a subgraph induced by three adjacent triangles.
It is not difficult to see that the pentagon is reconfigurable in a constant number of \textsf{slide} operations.
On the other hand, consider the case when (ii) is satisfied.
Similarly to the case~(i), we may assume that the second ear $P_2$ is the inner edge of $P_1$.
The subgraph $G_3 = G_1+P_1+P_2$ induces an odd cycle attached to a diamond, where a \textit{diamond} is a subgraph induced by two adjacent triangles.
The subgraph is shown to be reconfigurable in Lemma~\ref{lem:ThreeEarBase}. 

\begin{figure}[t]
  \begin{minipage}[b]{0.48\columnwidth}
    \centering
\includegraphics[keepaspectratio,width=0.45\textwidth]
    {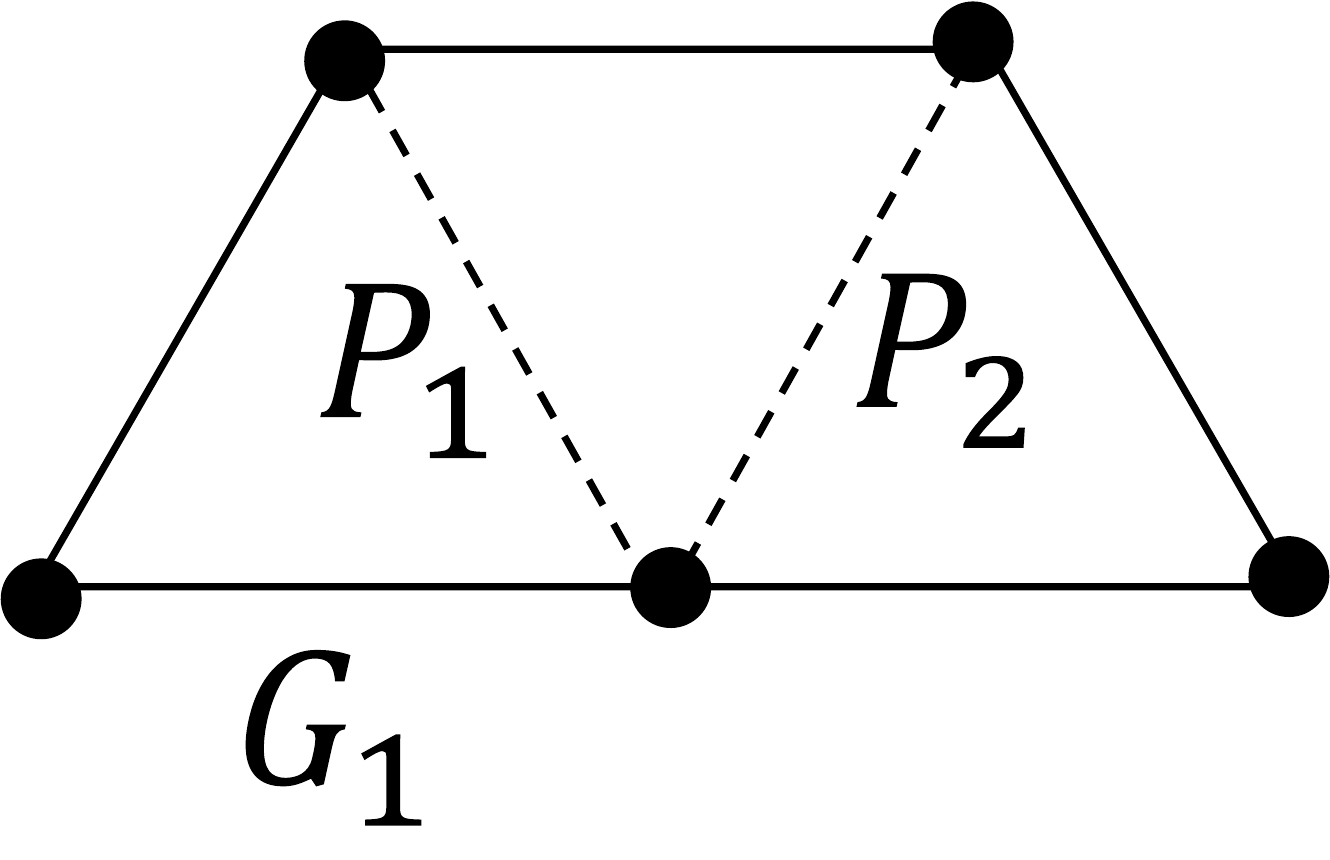}
    \caption{A pentagon.}
    \label{fig:pentagon_ear}
  \end{minipage}
  \hspace{0.02\columnwidth} 
  \begin{minipage}[b]{0.48\columnwidth}
    \centering
    \includegraphics[keepaspectratio,width=0.8\textwidth]
    {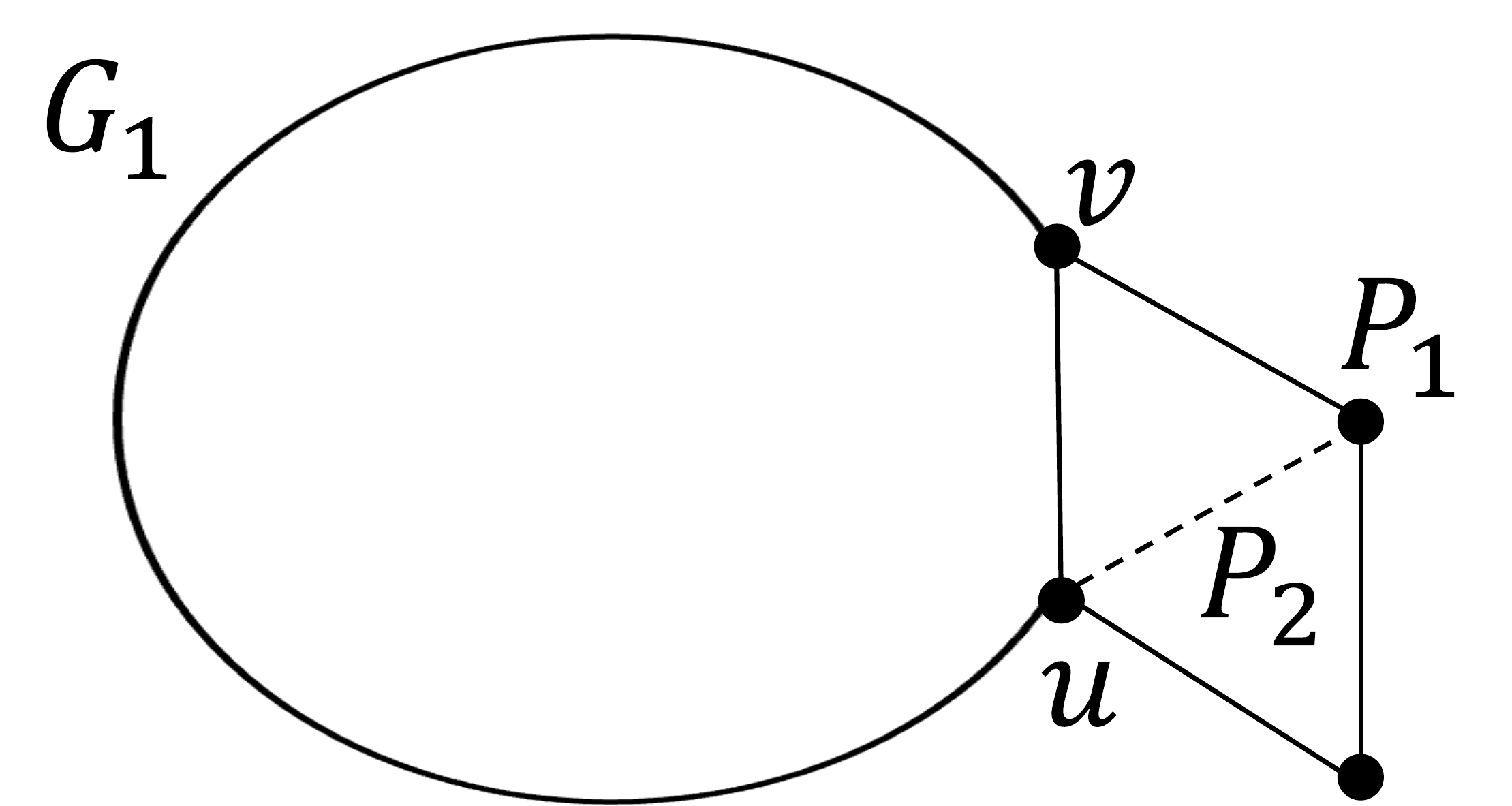}
    \caption{An odd cycle with a diamond.}
    \label{fig:3_ear}
  \end{minipage}
\end{figure}

Therefore, $G_3$ in an admissible ear decomposition is reconfigurable.
Hence, if there exists an admissible ear decomposition, we can find a reconfiguration sequence as below.
Section~\ref{subsec:year_algorithm} for the details.

\begin{restatable}{lemma}{reconfear}
\label{lem:reconfear}
Let $G_1, \dots, G_k=G$ be an admissible ear decomposition of $G$.
Then any two placements aligned with the ear decomposition can be reconfigured to each other.
\end{restatable}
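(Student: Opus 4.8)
The plan is to induct on the number of ears $k$, peeling off ears from the top of the decomposition and using the recursion of Lemma~\ref{lem:phase2} to reduce to the base subgraph $G_3$. Concretely, given an admissible ear decomposition $G_1,\dots,G_k=G$, I first invoke the discussion preceding the lemma: after reordering the length-$1$ ears, $G_3$ is either a pentagon (case (i)) or an odd cycle with a diamond attached (case (ii)), and in both cases $G_3$ is reconfigurable — the pentagon by direct inspection, and the odd-cycle-with-diamond by Lemma~\ref{lem:ThreeEarBase}. Moreover $G_3$ has at least $5$ vertices. This gives the base of the induction, with $j=3$ playing the role of the reconfigurable initial segment.

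Next I would apply Lemma~\ref{lem:phase2} repeatedly: since $G_3$ is reconfigurable with at least $5$ vertices, and $G_4=G_3+P_3$, $G_5=G_4+P_4$, $\dots$, $G_k = G_{k-1}+P_{k-1}$ is an odd and proper ear decomposition of $G$ starting from the reconfigurable $G_3$, Lemma~\ref{lem:phase2} (applied with $j=3$) tells us that $G=G_k$ is reconfigurable. So far this only shows $G$ is reconfigurable in the abstract sense, i.e.\ that \emph{some} two placements can be reconfigured to each other; what the statement of Lemma~\ref{lem:reconfear} actually asserts is the concrete claim that any two placements \emph{aligned with the given ear decomposition} can be reconfigured. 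The point I need to extract from the proof of Lemma~\ref{lem:phase2} is the stronger inductive invariant: if $p,q$ are both aligned with $G_1,\dots,G_k$, then their restrictions to $G_{k-1}$ (obtained after reconfiguring the pieces on $P_{k-1}$ into a canonical position, which is possible because $P_{k-1}$ is $M_p$-alternating with uncovered endpoints) are aligned with $G_1,\dots,G_{k-1}$; recurse down to $G_3$, reconfigure within $G_3$ using its reconfigurability, then re-thread the ears back up one at a time. The alignment conditions (a) and (b) are exactly what makes each ear-reconfiguration step a legal sequence of \textsf{slide} operations that leaves the lower-indexed subgraph untouched, so this bookkeeping goes through.

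I expect the main obstacle to be precisely the interface between the "abstract" reconfigurability obtained from Lemma~\ref{lem:phase2} and the "aligned" formulation demanded here: one must be careful that when reconfiguring the pieces lying on the ear $P_{i}$ to free up its interior, the exposed vertex can be routed so as to return to $G_i$ without disturbing the already-fixed placement on $G_i$, and that the recursion's hypothesis (placements aligned with $G_1,\dots,G_i$) is genuinely re-established at each level rather than merely a weaker "restricts to a nearly perfect matching of $G_i$" statement. Handling the parity of the ear lengths (each $P_i$ is odd, so it contributes an integer number of matching edges and exposes one of its endpoints exactly when the other endpoint is covered from inside $G_i$) is what guarantees the invariant is self-consistent. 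Once Lemma~\ref{lem:phase2} is stated with the aligned invariant built in — which I would arrange in its proof — the present lemma follows immediately by the induction above, so the real content is deferred to Section~\ref{subsec:year_algorithm}.
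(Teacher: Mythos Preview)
Your proposal is correct and follows essentially the same route as the paper: establish that $G_3$ is reconfigurable in each of the two admissible cases (the pentagon directly, the odd-cycle-with-diamond via Lemma~\ref{lem:ThreeEarBase}), then apply Lemma~\ref{lem:phase2} recursively up the ear decomposition; the paper packages this as Lemmas~\ref{lem:pentagon} and~\ref{lem:3_ear}. Your concern about the ``abstract'' versus ``aligned'' formulation is unnecessary, however: Lemma~\ref{lem:phase2} is already stated with the aligned invariant as both hypothesis and conclusion, so the induction carries the aligned property automatically and no extra bookkeeping is needed.
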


We remark that the length of a reconfiguration sequence obtained in the above lemma is at most $n^{2n}$ for a graph with $2n+1$ vertices, which is not bounded by a polynomial in $n$.
It may be interesting to find the optimal bound on the length of reconfiguration sequences.
It is known in~\cite{ref:gourds} that the length is $\Omega (n^2)$.

Finally, we show that there always exists an admissible ear decomposition in a triangular grid graph with a vertex of degree $6$.

\begin{theorem}
  \label{thm:admissible}
  Let $G$ be a $2$-connected factor-critical triangular grid graph such that it has a vertex of degree $6$.
Then $G$ has an admissible ear decomposition.
  \end{theorem}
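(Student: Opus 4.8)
The plan is to build an admissible ear decomposition by first locating a small reconfigurable subgraph (a pentagon or an odd cycle with a diamond) inside $G$ near a degree-$6$ vertex, and then extending it to all of $G$ via an odd and proper ear decomposition. The starting point is a vertex $v$ of degree $6$. Its six neighbours, together with $v$, span six triangles arranged cyclically around $v$; call this subgraph the \emph{hexagonal wheel} $W$ at $v$. I would first observe that $W$ itself is $2$-connected and factor-critical, and that it contains a pentagon (three consecutive triangles of $W$) as well as an odd cycle with a diamond. So within $W$ we can certainly start an admissible ear decomposition of $W$: take an odd cycle through $v$ of length $5$ (using two opposite-ish neighbours), add the two inner edges to get a pentagon, and then add the remaining edges of $W$ one at a time as length-$1$ ears. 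Each such single-edge addition keeps the decomposition odd and proper because a new edge joins two already-present vertices that are non-adjacent in the current subgraph. Thus $W$ admits an admissible ear decomposition.

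The core task is then to extend an ear decomposition of $W$ to one of $G$. Here I would invoke the standard fact (Lov\'asz--Plummer, the constructive side of Proposition~\ref{prop:2connect_critical}, in its relative form): if $H$ is a $2$-connected factor-critical subgraph of a $2$-connected factor-critical graph $G$, then $G$ has an odd and proper ear decomposition starting from $H$ — equivalently, any odd proper ear decomposition of $H$ extends to one of $G$. The subtlety specific to our setting is that the ears produced by this general theorem need to be \emph{odd}; this is exactly where $2$-connectivity plus factor-criticality of $G$ is used, so it comes for free. Concatenating the admissible ear decomposition of $W$ with the extending ears gives an admissible ear decomposition of $G$, since admissibility is a property only of the first few terms $G_1, G_2, G_3$, which already live inside $W$.

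The one point that needs genuine care — and which I expect to be the main obstacle — is that a vertex of degree $6$ in the \emph{induced} triangular grid graph need not have all six surrounding triangles present if the grid has holes; degree $6$ only forces the six neighbours to be present as vertices, not every edge between consecutive neighbours. So I would not assume the full wheel $W$ is a subgraph. Instead, I would argue that a degree-$6$ vertex $v$ in a triangular grid graph, being interior, does have all six of its incident lattice triangles present (each such triangle uses two consecutive neighbours of $v$, and both those neighbour-neighbour edges are lattice edges whose endpoints are in $V$, hence present by the induced-subgraph definition); one must check that "the neighbours of $v$ are exactly the six lattice neighbours and the $6$ surrounding triangle-edges are all in $E$" follows from $\deg(v)=6$ in the triangular lattice. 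Granting that, $W$ is a genuine subgraph and the argument above goes through. If instead one only wants degree $6$ combinatorially, a short case analysis on the cyclic arrangement of the six triangles around $v$ shows at least three consecutive triangles are present, which already yields a pentagon, and the rest of the proof is unchanged. I would present the interior-vertex version as the clean case and remark that the combinatorial version reduces to it.
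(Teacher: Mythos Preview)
Your proposal has a genuine gap at the extension step. The relative form of the Lov\'asz--Plummer ear-decomposition theorem that you invoke is not the statement ``any $2$-connected factor-critical subgraph $H$ of a $2$-connected factor-critical $G$ can start an odd proper ear decomposition of $G$.'' The correct hypothesis (Lemma~\ref{lem:centralcycle1} in the paper) is that $H$ be \emph{central}, i.e.\ that $G-V(H)$ have a perfect matching. Centrality is what guarantees that the subsequent ears can be chosen $M$-alternating and hence odd; without it, the extension may simply fail. Your argument never checks that the pentagon you pick inside the wheel $W$ (or $W$ itself) is central in $G$, and in general it need not be: if the nearly perfect matching $M$ exposing the degree-$6$ vertex $o$ has few or no edges on the boundary hexagon $D$, then removing five vertices of $W$ leaves several $M$-partners stranded, and there is no reason another perfect matching of the remainder should exist.

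This is exactly why the paper's proof (Theorem~\ref{thm:find_ear}) proceeds by a case analysis on $|D\cap M|$. When $|D\cap M|\in\{2,3\}$ one can usually exhibit a central $5$-cycle near $o$, but when $|D\cap M|\le 1$ the paper explicitly assumes no central pentagon exists and instead constructs structure~(ii): an odd $M$-alternating (hence central) cycle through $o$---which may be long and reach far outside $W$---together with a length-$3$ ear lying in $W$. Lemma~\ref{lem:centralcycle2} is used repeatedly to produce such $M$-alternating cycles through prescribed edges at $o$, and planarity arguments combine them. So the ``short case analysis'' you allude to at the end is not a cosmetic detail but the actual content of the theorem; your clean case, where the whole wheel is present and one just takes any pentagon inside it, does not go through without verifying centrality.
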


Theorem~\ref{thm:admissible} is a graph-theoretical result independent of designing a reconfiguration sequence.
Theorem~\ref{thm:admissible} can be proved by investigating the matching structure of factor-critical triangular grid graphs. 
The proof will be given in Section~\ref{sec:admissible_eardecomposition}.

In summary, a reconfiguration sequence from an initial placement $p$ to a target placement $q$ can be realized as below, which completes the proof of Theorem~\ref{thm:factoralgorithm}.

\begin{enumerate}
  \item Reconfigure $p$ to a placement aligned with an admissible ear decomposition $G_1,\ldots,G_k$, denoted by $p'$, by Lemma~\ref{lem:alignedear}.
  \item Using Lemma~\ref{lem:reconfear}, reconfigure $p'$ to another placement $q'$ aligned with  $G_1,\ldots,G_k$, where $q'$ is a placement obtained from $q$ by Lemma~\ref{lem:alignedear}.
  \item Reconfigure $q'$ to the target placement $q$.
\end{enumerate}

We remark that the proof of Theorem~\ref{thm:factoralgorithm} above can be adapted to the Gourds puzzle in which a piece has an orientation.
This is because the structures (i) and (ii) in an admissible ear decomposition can also be used to change the orientation of pieces in an arbitrary way.
Thus we have the following corollary.

\begin{corollary}
  \label{cor:gourds}
Let $B$ be a hexagonal grid such that the dual triangular grid graph is a $2$-connected factor-critical graph with a vertex of degree $6$.
Then any two configurations of the same set of pieces on $B$ can be reconfigured to each other.
\end{corollary}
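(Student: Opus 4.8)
The plan is to pass to the dual triangular grid graph $G$ of the hexagonal board $B$, which by hypothesis is $2$-connected, factor-critical, and has a vertex of degree $6$, and to upgrade the proof of Theorem~\ref{thm:factoralgorithm} so that it also controls the orientations of the pieces. A configuration of the numbered pieces on $B$ corresponds to an \emph{oriented placement} of $G$: a placement $p$ together with a record of, for each piece, which of its two cells carries which of its two numbers; the empty hexagon becomes the exposed vertex $v_p$. A legal Gourds move that changes the pair of hexagons occupied by a gourd is, in the dual, exactly a \textsf{slide} operation (a straight slide, a turn, and a pivot all move the gourd from an edge $e$ to an edge sharing one endpoint with $e$), and such a move never alters the orientation; on the other hand, performing three turns in succession around a triangle of hexagons returns the gourd to its original pair of cells with its two numbers interchanged. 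Thus it suffices to show that any two oriented placements of $G$ can be transformed into one another using \textsf{slide} operations together with \emph{flips}, a flip being the operation that reverses the orientation of a piece whose edge, together with $v_p$, spans a triangle of $G$.

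First I would ignore the orientations and apply Theorem~\ref{thm:factoralgorithm}: $G$ is reconfigurable, so a sequence of \textsf{slide} operations brings the underlying placement into agreement with the target placement. It then remains, with every piece already in its target position, to correct the orientations, and for this the key point is that any sequence $\sigma$ of \textsf{slide} operations is a reversible rigid motion of the pieces. Concretely, if $\sigma$ carries an oriented placement $(p,o)$ to $(p',o')$, then the reversed sequence $\sigma^{-1}$ carries $(p',o')$ back to $(p,o)$, and, since sliding a piece never interacts with its internal numbering, $\sigma^{-1}$ also carries the oriented placement obtained from $(p',o')$ by reversing one prescribed piece to the one obtained from $(p,o)$ by reversing the same piece. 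Hence it is enough to be able to flip the orientation of an arbitrary single piece while returning every piece to its position: given piece $i$, reach by \textsf{slide} operations a placement in which piece $i$ lies on an edge $e$ such that $e$ and the exposed vertex form a triangle of $G$, perform one flip of piece $i$, and then run the reversed \textsf{slide} sequence; the net effect reverses piece $i$'s orientation and leaves everything else unchanged. Repeating this for each piece whose orientation is wrong finishes the argument.

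It remains to produce, for each label $i$, such a ``flippable'' placement reachable from the current one; since $G$ is reconfigurable this amounts to exhibiting one near-perfect matching $M$ of $G$ together with a triangle $T$ so that an edge of $T$ lies in $M$ and the third vertex of $T$ is exposed (the label on that edge may then be taken to be $i$). For this I would invoke Theorem~\ref{thm:admissible} to fix an admissible ear decomposition $G_1,\dots,G_k=G$, take a placement aligned with it, and rotate along the initial odd cycle $G_1$ (Observation~\ref{obs:rotation}) to move the exposed vertex to the distinguished vertex of the base subgraph $G_3$: the fan center $w$ of the pentagon in case~(i), or a suitable corner in case~(ii). In case~(i) the two matching edges on the $5$-cycle $G_1$ are then the two rim edges not incident to $w$, and one of the pentagon's triangles has such a rim edge in $M$ and $w$ as its opposite vertex, so that triangle is flippable; deleting that $M$-edge leaves a perfect matching of $G$ minus the triangle, confirming reachability. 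Case~(ii) is handled in the same way using one of the two triangles of the diamond. Since the aligned placement is near-perfect and rotation preserves this, $M$ is a valid near-perfect matching, which is exactly what the previous paragraph needs; combined with the position-reconfiguration from Theorem~\ref{thm:factoralgorithm}, $G$ is oriented-reconfigurable, which translates back to the stated conclusion about $B$.

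The main obstacle is making the orientation bookkeeping precise, and it is concentrated in two places: the verification that the start of an admissible ear decomposition — a pentagon, or an odd cycle with a diamond — always contributes a triangle that is flippable inside the \emph{global} graph $G$ (which is where the special shape of $G_3$, and the fact that every later odd ear is matched within itself in an aligned placement, are used), and the check that inserting a single flip into the middle of a there-and-back \textsf{slide} sequence reorients exactly that one piece while disturbing nothing else, which rests on the reversibility of \textsf{slide} operations noted above. Everything else is a direct translation of the proof of Theorem~\ref{thm:factoralgorithm} through the board/graph duality.
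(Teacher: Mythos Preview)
Your proposal is correct and takes essentially the same approach as the paper, whose entire justification for the corollary is the one-sentence remark that ``the structures (i) and (ii) in an admissible ear decomposition can also be used to change the orientation of pieces in an arbitrary way.'' Your conjugation argument---first match positions via Theorem~\ref{thm:factoralgorithm}, then for each mis-oriented piece slide it into a flippable triangle supplied by the base subgraph $G_3$, flip, and reverse the slides---is a clean explicit realization of that remark, resting on exactly the same ingredient (the pentagon or cycle-plus-diamond at the root of the admissible ear decomposition).
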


\subsection{Reconfiguration to a Placement Aligned with Ear Decomposition}\label{sec:Step1}

In this subsection, we will show Lemma~\ref{lem:alignedear}, that is, we will show that we can reconfigure an initial placement $p$ to a placement aligned with a given odd and proper ear decomposition $G_1, \dots, G_k$.

We first prove that we can reconfigure so that any vertex is exposed. 

\begin{lemma}
  \label{lem:exposed}
  Let $G$ be a $2$-connected factor-critical triangular grid graph with $2n+1$ vertices.
    For any vertex $v$, we can reconfigure a placement $p$ so that $v$ is the exposed vertex, in $O(n)$ \textsf{slide} operations.
\end{lemma}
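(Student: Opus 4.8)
The goal is to move the single exposed vertex from its current position $v_p$ to an arbitrary target vertex $v$ using $O(n)$ \textsf{slide} operations. The natural strategy is to exploit $2$-connectivity to find a path from $v_p$ to $v$ and ``walk'' the hole along it, but one must be careful: a single \textsf{slide} moves the exposed vertex by two steps along a matching edge, so we need the walk to be compatible with the matching structure. The clean way to guarantee this is to find an $M_p$-alternating path (or more robustly, an $M_p$-alternating odd cycle) through $v_p$ and $v$.

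\medskip

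\textbf{First step.} I would first reduce to the case where $v$ lies on an $M_p$-alternating structure rooted at $v_p$. Since $G$ is $2$-connected and factor-critical, it has an odd and proper ear decomposition starting from an odd cycle $C_1$ (Proposition~\ref{prop:2connect_critical}). More directly, factor-criticality gives a nearly perfect matching $M'$ exposing $v$; taking the symmetric difference $M_p \triangle M'$ yields a disjoint union of paths and even cycles, and there is exactly one path component, which is an even-length $M_p$-alternating path from $v_p$ to $v$ whose first and last edges lie in $M_p$ and $M'$ respectively — wait, I should be careful about parities, so let me instead argue via the standard fact that in a factor-critical graph, for the exposed vertex $v_p$ and any other vertex $v$, there is an $M_p$-alternating path from $v_p$ to $v$ that starts with a non-matching edge and ends with a matching edge. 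Along such a path $v_p = x_0, x_1, \dots, x_{2m} = v$, the edges $(x_1,x_2), (x_3,x_4), \dots, (x_{2m-1}, x_{2m})$ are pieces, and each can be slid toward the current exposed vertex in turn: sliding the piece $(x_1,x_2)$ exposes $x_2$, then sliding $(x_3,x_4)$ exposes $x_4$, and so on, until $v = x_{2m}$ is exposed. This uses exactly $m \le n$ \textsf{slide} operations, giving the $O(n)$ bound.

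\medskip

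\textbf{Main obstacle.} The delicate point is establishing the existence of the desired alternating path of the correct parity with respect to the \emph{given} matching $M_p$ and the \emph{given} exposed vertex $v_p$ — factor-criticality is a statement about the existence of \emph{some} matching exposing each vertex, not about alternating paths between two prescribed vertices. I would handle this by the symmetric-difference argument: take any nearly perfect matching $M_v$ exposing $v$ (exists by factor-criticality), and consider $H = M_p \triangle M_v$. Every vertex has degree at most $2$ in $H$; the only vertices of odd degree in $H$ are $v_p$ and $v$ (each covered by exactly one of the two matchings), so $H$ has a unique path component $Q$ joining $v_p$ to $v$, and $Q$ alternates between $M_p$ and $M_v$. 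Since $Q$ starts at the $M_p$-exposed vertex $v_p$, its first edge is in $M_v \setminus M_p$ and its last edge (at $v$, which is $M_v$-exposed) is in $M_p \setminus M_v$; hence $Q$ has even length and its $M_p$-edges are precisely alternate edges ending at $v$. Sliding these $M_p$-pieces one at a time along $Q$ toward the moving exposed vertex realizes the reconfiguration, and $|E(Q)|/2 \le n$ bounds the number of operations. The remaining routine check is that each such \textsf{slide} is legal — i.e., at each stage the next piece to be slid is incident to the current exposed vertex — which is immediate from the alternating structure of $Q$.
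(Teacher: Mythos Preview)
Your proof is correct and follows essentially the same approach as the paper: take a nearly perfect matching $M_v$ exposing $v$ (by factor-criticality), extract the even $M_p$-alternating path from $v_p$ to $v$ in $M_p\triangle M_v$, and slide the $M_p$-pieces along it one by one, using at most $n$ \textsf{slide} operations. Your write-up is in fact more explicit than the paper's about the parity of the path and the legality of each slide; the detour through ear decompositions in your first step is unnecessary but harmless.
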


\begin{proof}
  Since $G$ is factor-critical, $G$ has a nearly perfect matching $M_v$ that exposes $v$.
  The symmetric difference $M_p \Delta M_v$ contains an $M_p$-alternating path $P$ from $v_p$ to $v$, which is of even length. 
  We reconfigure $p$ by sliding the pieces on the path $P$ one-by-one.  
  The resulting placement exposes $v$.
  The number of \textsf{slide} operations is $|P|/2$, which is $O(n)$.
\end{proof}

To obtain a placement aligned with the ear decomposition, 
we first reconfigure so that an end vertex of the last ear $P_{k-1}$ is exposed using Lemma~\ref{lem:exposed}.
Then, since each inner vertex of the last ear $P_{k-1}$ is of degree $2$ in $G$, the obtained placement is aligned with $P_{k-1}$.
By applying this procedure repeatedly for each ear, we can obtain a placement aligned with $G_1, \dots, G_k$.
This implies  Lemma~\ref{lem:alignedear} as below.

\begin{proof}[Proof of Lemma~\ref{lem:alignedear}]
  Let $v_i$ be an end vertex of ear $P_i$ for $i\in [k-1]$.
The basic observation is that, each inner vertex of the last ear $P_{k-1}$ is of degree $2$, and hence, if a nearly perfect matching $M$ exposes $v_{k-1}$, the last ear $P_{k-1}$ is an $M$-alternating path such that the end vertices are not covered by edges of $M\cap E(P_{k-1})$.

  We perform the following procedure for each $i=k-1,k-2,\ldots,1$.
  Note that $G_i$ is factor-critical for any $i\in [k]$.
  \begin{enumerate}
    \item Applying Lemma~\ref{lem:exposed} to $G_{i+1}$, we reconfigure the current placement of $G_{i+1}$ so that $v_i$ is exposed.
    Then the resulting placement is aligned with $P_i$ by the above observation.
  \end{enumerate}
  In the end of the above procedure, the obtained placement of the original graph $G$ is aligned with $P_{k-1},\ldots,P_1$.
  Moreover, the exposed vertex is on $G_1$.
  Thus this is a desired placement.
  The necessary number of \textsf{slide} operations is $O(kn)$, since we repeat the procedure of Lemma~\ref{lem:exposed} $k-1$ times.
  \end{proof}

\subsection{Reconfiguration using Ear Decomposition}
\label{subsec:year_algorithm}

We next present how to reconfigure a placement aligned with an ear decomposition.
Using the ear structure,  we can find a reconfiguration sequence if the subgraph $G_{k-1}$ is reconfigurable.

\begin{restatable}{lemma}{PhaseTwo}
  \label{lem:phase2}
Let $G_1, \dots, G_k$ be an odd and proper ear decomposition of a graph $G$ with $2n+1$ vertices.  
Suppose that $G_{k-1}$ has at least $5$ vertices, and that, in $G_{k-1}$, any placement aligned with the ear decomposition $G_1, \dots, G_{k-1}$ can be reconfigured to another placement aligned with $G_1, \dots, G_{k-1}$, using $t$ \textsf{slide} operations.
Then there exists a reconfiguration sequence between any two placements along with $G_1, \dots, G_k$ in $G$, which requires $O(n^2 (t+n))$ \textsf{slide} operations.
\end{restatable}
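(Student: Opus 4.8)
I want to reconfigure between two placements $p, q$ both aligned with $G_1,\dots,G_k$, where $G_{k-1}$ is reconfigurable. Write $P = P_{k-1}$ for the last ear, with end vertices $a,b \in V(G_{k-1})$, and let the internal vertices of $P$ be $w_1,\dots,w_{2m}$ (an odd path, so an even number of internal vertices). The key idea is: an ear $P$ carries a fixed \emph{pool} of pieces — namely those labeled by indices assigned to edges of $P$ — and I will ``park'' those pieces on $P$ in a canonical way, manipulate $G_{k-1}$ freely, and then re-deploy them. The main structural observation is that a placement aligned with the ear decomposition always has, on the ear $P$, a matching of the internal edges that leaves both $a$ and $b$ uncovered by $E(P)$; there are exactly two such matchings of a path (the two ``alternating'' ones starting from each end), and one can slide between them along $P$ — but only when the exposed vertex is on $P$ or at $a$ or $b$. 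So the atomic capability I need is: \emph{bring the exposed vertex to $a$ (or $b$), then the two ways of laying the $P$-pieces differ by a rotation-like slide along the odd cycle $P \cup \{a\text{-}b\text{ path in }G_{k-1}\}$}, handled by Observation~\ref{obs:rotation}.

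\textbf{Step 1: normalize both placements on the last ear.} Using Lemma~\ref{lem:exposed} inside $G_k$, bring the exposed vertex to $a$. Now the pieces on $P$ form one of the two path-matchings; slide along $P$ so that $P$'s pieces occupy the canonical alternating matching $\{(w_1,w_2),(w_3,w_4),\dots\}$ (leaving $a$ and $b$ uncovered on $P$), and record the cyclic order in which the \emph{labels} of those pieces sit along $P$. I do the same for $q$. What remains after this normalization: the restriction to $G_{k-1}$ is a placement aligned with $G_1,\dots,G_{k-1}$, and the only discrepancy possibly left between $p$ and $q$ is (i) the placement inside $G_{k-1}$ and (ii) the cyclic order of the $P$-labels along the ear.

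\textbf{Step 2: fix the labels on the ear.} To reorder the labels along $P$ I need to ``cycle'' pieces through $G_{k-1}$. Concretely: slide a boundary $P$-piece into $G_{k-1}$ at $a$ (using the exposed vertex at $a$), then — since $G_{k-1}$ has $\ge 5$ vertices and is reconfigurable — route that piece around to vertex $b$ via a reconfiguration sequence of $G_{k-1}$, slide it back onto $P$ at $b$, and meanwhile the remaining $P$-pieces have been shifted one slot. Repeating this realizes an arbitrary cyclic rotation of the $P$-labels, and by doing it in both directions (or combining with an odd rotation of the enlarged cycle), I can realize an arbitrary permutation of the $2m+\{0,1\}$ relevant pieces — here I must be careful about a possible parity obstruction on permutations, which is exactly why the hypothesis ``$\ge 5$ vertices'' (hence enough room to realize a transposition, as in the $15$-puzzle) is needed. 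Each such cycling costs $O(t+n)$ slides, and I need $O(n)$ of them, giving $O(n(t+n))$ so far.

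\textbf{Step 3: fix $G_{k-1}$.} Now $p$ and $q$ agree on $P$, and each induces a placement on $G_{k-1}$ aligned with $G_1,\dots,G_{k-1}$; but caution — while manipulating labels in Step 2 I disturbed $G_{k-1}$, so I re-invoke reconfigurability of $G_{k-1}$ one final time (with the exposed vertex sitting at $a$, and the $P$-pieces parked canonically, so that the moves stay within $G_{k-1}$) to align the $G_{k-1}$-part of the current placement with the $G_{k-1}$-part of $q$; this costs $t$ slides. Then un-normalize the ear (reverse the Step-1 slides for $q$) to land exactly on $q$. Summing, the dominating term is the $O(n)$ repetitions in Step 2, each costing $O(t+n)$, plus $O(n)$ applications of Lemma~\ref{lem:exposed} at $O(n)$ each when repositioning the exposed vertex; bookkeeping this conservatively gives $O(n^2(t+n))$ as claimed.

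\textbf{The main obstacle.} The delicate point is Step 2: I must prove that cycling pieces through $G_{k-1}$ generates \emph{enough} permutations of the $P$-labels (all of them, or all of the right parity and then killing the parity with an extra odd rotation of the enlarged odd cycle $P + (a,b)\text{-path}$). This is the analogue of the classical ``$\ge$ three free cells / odd cycle'' argument that makes the $15$-puzzle-type group the full symmetric (or alternating) group, and it is precisely where the ``$G_{k-1}$ has at least $5$ vertices'' hypothesis is consumed: with too small a $G_{k-1}$ one cannot independently reposition a piece and the exposed vertex, so the reachable permutations are too restricted (cf.\ Figure~\ref{fig:cant_example}). Making this generation argument precise — identifying an explicit pair of cycling moves whose commutator, or whose combination with an odd rotation, yields a $3$-cycle or a transposition on the labels — is the technical heart of the lemma; the rest is careful accounting of slide counts.
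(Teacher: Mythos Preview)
Your plan has the right shape, but Step~2 contains a genuine gap: the ``cycling'' operation you describe cannot be carried out as stated. With the exposed vertex at $a$ and the ear-piece on $(w_1,w_2)$, the only slide available to that piece is onto the ear edge $(a,w_1)$; the exposed vertex then sits at $w_2$, an \emph{internal} vertex of the ear. At that moment the restriction to $G_{k-1}$ is not a nearly perfect matching of $G_{k-1}$ (the vertex $a$ is covered only by an ear edge), so the reconfigurability hypothesis for $G_{k-1}$ simply does not apply, and you cannot ``route that piece around to $b$ via a reconfiguration sequence of $G_{k-1}$''. More generally, any time an ear-labeled piece lands on an edge of $G_{k-1}$, parity forces the exposed vertex to lie in the ear's interior (or a $G_{k-1}$-labeled piece onto an ear edge), so the configuration you need in order to invoke the hypothesis never arises from your described moves. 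This is the missing idea, and it is not a detail you can patch with bookkeeping; the generation argument you flag as ``the technical heart'' cannot even get started without it.

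The paper's proof supplies exactly this missing mechanism. Using factor-criticality of $G_{k-1}$ it builds an explicit even $M_p$-alternating $a$--$b$ path $P'$ inside $G_{k-1}$ and works with the odd cycle $C = P' \cup P_{k-1}$. Rotation along $C$ keeps the exposed vertex on $C$; whenever it lies on the $G_{k-1}$-portion $P'$, the restriction to $G_{k-1}$ \emph{is} a valid aligned placement and the hypothesis can be invoked. Two cases are then handled separately: if $C$ is not Hamiltonian there is a matching edge $e$ of $G_{k-1}$ off $C$, used as a parking slot to insert the target pieces onto $C$ one by one; if $C$ is Hamiltonian, the $\ge 5$ hypothesis guarantees that $P'$ carries at least two consecutive matching edges, and reconfigurability of $G_{k-1}$ lets you transpose those two pieces --- combined with rotation along $C$ this gives a bubble sort. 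So your instinct that ``$\ge 5$'' buys a transposition is correct, but the transposition is realized on two edges of the $G_{k-1}$-portion of $C$ via the hypothesis, not by threading a single ear-piece through $G_{k-1}$.
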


            \begin{figure}
            \centering
            \includegraphics[keepaspectratio,width=0.9\textwidth]
            {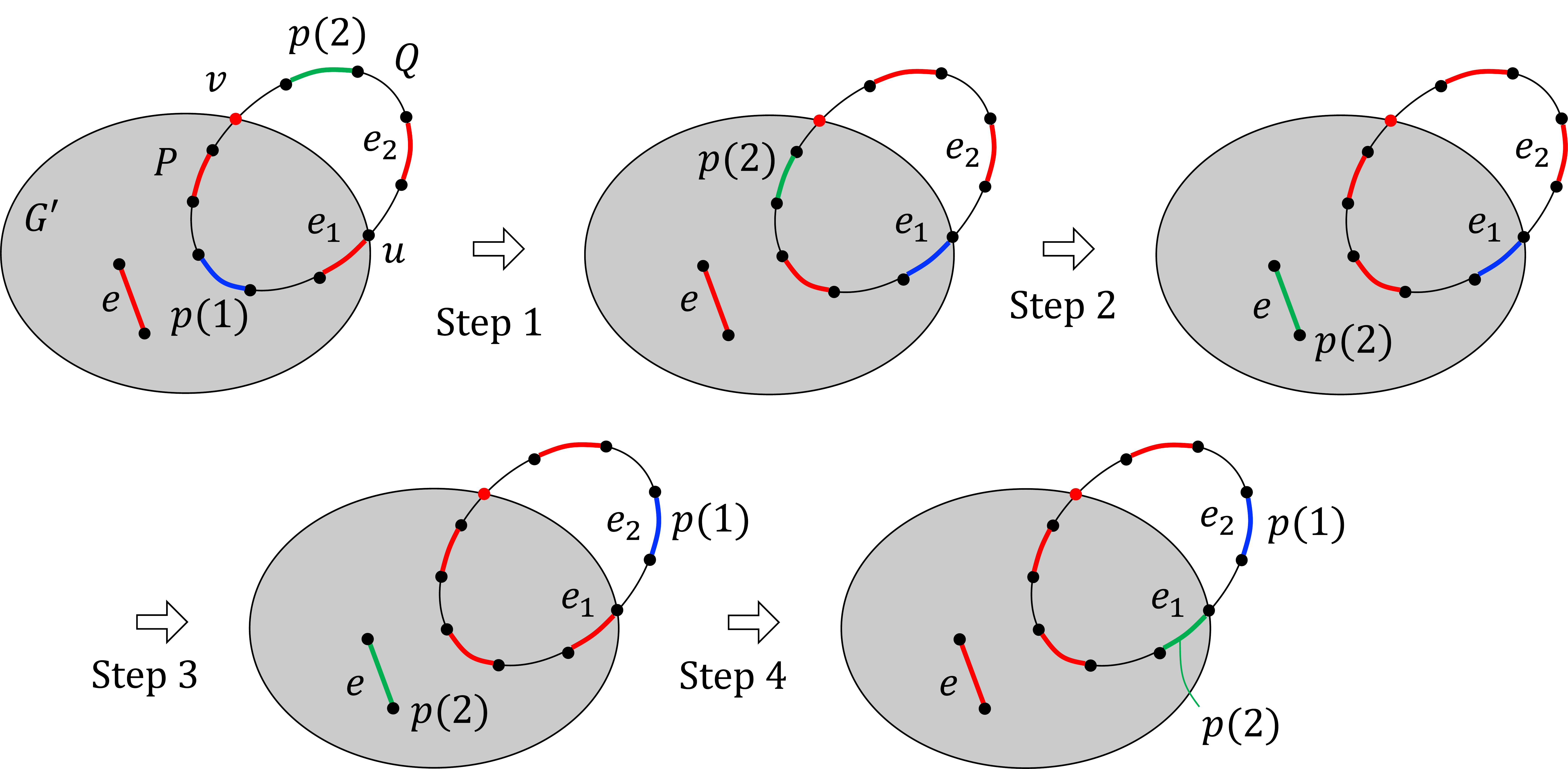}
            \caption{Proof of Claim~\ref{clm:nonHamilton}: An example when $j=2$.}
            \label{fig:YP_not_hamilton}
          \end{figure}

\begin{proof}
  For simplicity, we denote $G_{k-1}=G'$ and $P_{k-1}=Q$ in the proof.
  Let $u, v$ be the end vertices of $Q$.
  Let $p$ be an initial placement of $G$ and $q$ be a target placement of $G$, both of which are aligned with the ear decomposition.
  It follows from Lemma~\ref{lem:exposed} that we can reconfigure $p$~(and $q$, resp.) so that $v$ is exposed.
  Hence we may assume that both $p$ and $q$ expose $v$.
    Moreover, by changing the indices of the pieces if necessary, we may assume that the pieces $q(1), \dots, g(\ell)$ are placed on the ear $Q$ in the order from $v$ to $u$.

\begin{claim}
There exists an $M_p$-alternating path $P$ of even length from $v$ to $u$ in $G'$.
\end{claim}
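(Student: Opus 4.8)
The plan is to produce the $M_p$-alternating path $P$ of even length from $v$ to $u$ inside $G' = G_{k-1}$ by a matching-theoretic argument based on the fact that $G'$ is $2$-connected and factor-critical, together with the structure of the placement $p$. Since $p$ is aligned with the ear decomposition $G_1, \dots, G_k$, the last ear $Q = P_{k-1}$ is $M_p$-alternating with its end vertices $u, v$ not covered by $M_p \cap E(Q)$; because $p$ exposes $v$, this forces the first edge of $Q$ at $v$ to be a non-matching edge and the last edge of $Q$ at $u$ to be a non-matching edge. Consequently, $M_p \cap E(G')$ is a nearly perfect matching of $G'$ that exposes exactly the vertex $v$ (every vertex of $G$ outside $G'$ that lies on the interior of $Q$ is matched within $Q$, and $v$ is the global exposed vertex $v_p$). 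So the real content is: in a $2$-connected factor-critical graph $G'$ with a near-perfect matching $M' := M_p \cap E(G')$ exposing $v$, there is an $M'$-alternating path of even length from $v$ to the specified vertex $u$.

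First I would invoke factor-criticality of $G'$ to obtain a near-perfect matching $M_u$ of $G'$ that exposes $u$. Then I would consider the symmetric difference $M' \triangle M_u$; this is a disjoint union of paths and even cycles, and since $M'$ and $M_u$ expose exactly the single vertices $v$ and $u$ respectively, the component containing $v$ is a path $P$ from $v$ to $u$ whose edges alternate between $M'$ and $M_u$. At $v$, which is $M'$-exposed but $M_u$-covered, the first edge of $P$ lies in $M_u \setminus M'$; at $u$, which is $M'$-covered but $M_u$-exposed, the last edge of $P$ lies in $M' \setminus M_u$. Hence $P$ has an odd number of edges in the "$M'$ vs $M_u$" alternation sense — but more to the point, $P$ is an $M'$-alternating path from the exposed vertex $v$ to $u$, and counting edges: starting with a non-$M'$ edge at $v$ and ending with an $M'$ edge at $u$, the number of edges is even. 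This is exactly the same device already used in the proof of Lemma~\ref{lem:exposed} in the excerpt (there the path went to an arbitrary vertex $v$; here the roles are played by $u$ and the matching restricted to $G'$), so I would phrase it to parallel that argument.

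The one point requiring care — and the step I expect to be the main obstacle — is verifying cleanly that $M' = M_p \cap E(G')$ is genuinely a near-perfect matching of $G'$ exposing precisely $v$, i.e. that no vertex of $G'$ other than $v$ is left uncovered by $M_p$ after we discard the edges of $M_p$ lying on the interior of ears $P_i$. This follows from condition (b) in the definition of a placement aligned with an ear decomposition: each ear $P_i$ is $M_p$-alternating with its two end vertices not covered by $M_p \cap E(P_i)$, so every interior vertex of $P_i$ is matched by an edge of $M_p$ contained in $P_i$, and every edge of $M_p$ incident to a vertex of $G_i$ that is not an interior-ear edge actually lies in $G_{k-1} = G'$ — hence restricting to $E(G')$ only removes edges both of whose endpoints are interior ear vertices, leaving every vertex of $G'$ except $v_p = v$ covered. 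I would spell this out as a short preliminary paragraph, then apply factor-criticality and the symmetric-difference argument above to conclude. A subtlety to double-check is that $u$ really is covered by $M'$ (not exposed), which holds because $u \neq v_p$ and $u \in V(G')$; and that $v \in V(G')$ with $v$ being $M'$-exposed, which is immediate.
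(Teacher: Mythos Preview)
Your proposal is correct and follows essentially the same approach as the paper: invoke factor-criticality of $G'$ to obtain a near-perfect matching $M_u$ exposing $u$, then take the symmetric difference with the matching exposing $v$ to extract the even $M_p$-alternating $v$--$u$ path. The paper compresses this into two lines and does not spell out the verification that $M_p$ restricted to $G'$ is a near-perfect matching of $G'$ exposing $v$; your added care on that point and on the parity count is welcome but not a different method.
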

\begin{proof}
  Since $G'$ is factor-critical, there exists a nearly perfect matching $M_u$ that exposes $u$.
  Taking $M_p\triangle M_u$, we see that there exists 
an even $M_p$-alternating path $P$ from $v$ to $u$ in $G'$.
\end{proof}

Let $C$ be the cycle consisting of $P$ and $Q$.
Then $C$ is an odd $M$-alternating cycle in $G$.
  We will show that we can reconfigure $p$ so that the first $\ell$ pieces are on the ear $Q$, using the cycle $C$.
  We consider the following two cases, depending on whether $C$ is a Hamilton cycle or not.

  \begin{claim}\label{clm:nonHamilton}
  Suppose that $C$ is not a Hamilton cycle of $G$.
  Then we can reconfigure $p$ to a placement $p'$ so that $p'(1), \dots, p'(\ell)$ are placed on $Q$ in this order from $v$ to $u$, using $O(\ell (t+n^2))$ \textsf{slide} operations.
  \end{claim}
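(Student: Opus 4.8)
The plan is to use the fact that $C$ is not a Hamilton cycle to find ``room'' outside $C$ that lets us permute the pieces on $C$ freely, which is exactly what a bare rotation along $C$ (Observation~\ref{obs:rotation}) cannot do by itself. First I would observe that since $C$ is a proper odd $M_p$-alternating cycle containing the exposed vertex $v$, the placement $p$ restricted to $C$ is aligned with $C$, so Observation~\ref{obs:rotation} applies: we may rotate $p$ along $C$ to move the exposed vertex to any odd-indexed position of $C$ and, more usefully, to cyclically shift which pieces occupy which edges of $C$. The first step is therefore to set up coordinates on $C = (1,2,\dots,2k+1)$ with the pieces $p(1),\dots,p(k)$ on $C$, and to record precisely how a single \textsf{slide} along $C$ permutes the pieces (it performs a transposition-like cyclic shift on a contiguous block, as in the $p_{j,h}$ notation). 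Because $C$ is not Hamiltonian, there is a vertex $w \notin V(C)$ that is adjacent (in $G$) to some vertex $x$ of $C$; by $2$-connectedness and the triangular-grid structure I would actually want $w$ adjacent to an edge of $C$, i.e.\ $w$ adjacent to two consecutive vertices of $C$ forming a triangle — and if that fails, a short argument using factor-criticality of $G$ and the degree-$2$ structure of the last ear lets us still use $w$ via an even alternating path.

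The key mechanism is then a ``parking'' move: rotate $p$ along $C$ so that the exposed vertex sits at $x$ (or adjacent to $x$ along the triangle with $w$), then slide the piece on the $C$-edge at $x$ off of $C$ onto the edge $xw$ or $wx'$; now that piece is ``parked'' off the cycle and the remaining $k-1$ pieces sit on a path inside $C$ with two exposed vertices available on $C$. With one piece removed from the cycle we gain the freedom to rotate the shortened configuration, reinsert the parked piece at a different location, and repeat. Carrying this out, I would argue that the parking/unparking operation implements an arbitrary transposition (or at least a generating set of transpositions) on the multiset of pieces lying on $C$, so that after polynomially many such operations we can realize any target permutation of $\{p(1),\dots,p(k)\}$ along $C$; in particular we can arrange that the pieces currently indexed so as to become $q(1),\dots,q(\ell)$ end up in order on $Q$ from $v$ to $u$. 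Each parking operation costs $O(n)$ slides to walk the exposed vertex into position plus $O(k^2) = O(n^2)$ slides for a rotation, and we perform $O(\ell)$ of them (plus an application of the inductive reconfigurability of $G'$, costing $t$ slides, whenever we need to rearrange pieces within $G'$ off the cycle), giving the claimed $O(\ell(t+n^2))$ bound.

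The main obstacle I anticipate is verifying that the parking move is actually available and that it genuinely enlarges the set of achievable permutations — a single slide off the cycle only helps if, once a piece is parked on $xw$, the vertex $w$ together with its neighbours gives enough local connectivity to ``stir'' the remaining pieces, or alternatively if we can thread an even $M$-alternating path through $w$ back to $C$ so that the parked piece can be reinserted at a chosen spot. Making this precise will require a careful case analysis of how $w$ attaches to $C$ in a triangular grid (one edge, one vertex, or a longer sub-path of the complement), and checking in each case that the induced subgraph on $V(C)\cup\{w\}$ (or a suitable superset) already contains the $5$-vertex reconfigurable gadget — a pentagon or an odd cycle with a diamond — so that Lemma~\ref{lem:reconfear} or Lemma~\ref{lem:phase2} applied to a smaller ear decomposition finishes the job. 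I would structure the argument so that the non-Hamiltonicity of $C$ is used exactly once, to produce this attaching vertex $w$, and everything downstream is bookkeeping of slide counts.
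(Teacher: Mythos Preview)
Your intuition is right that non-Hamiltonicity of $C$ provides the extra room needed to permute pieces on $C$, but the concrete mechanism you propose has a gap and overlooks the tool that does all the work. Your parking move --- ``rotate so the exposed vertex sits at $x$, then slide the piece on the $C$-edge at $x$ off of $C$ onto $xw$'' --- cannot be executed as stated: a \textsf{slide} always moves a piece \emph{toward} the exposed vertex, so to put a piece on $xw$ you would need $w$ to be exposed, not $x$. Getting the exposed vertex out to $w$ disturbs the alignment on $C$ in a way you have not controlled, and your proposed remedy (case analysis on how $w$ attaches to $C$, looking for a pentagon or diamond gadget) is both vague and unnecessary here.

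The paper's argument avoids all of this by using the hypothesis of Lemma~\ref{lem:phase2} directly: $G' = G_{k-1}$ is already assumed reconfigurable in $t$ \textsf{slide} operations. Non-Hamiltonicity of $C$ is used only to observe that $G'$ contains a matching edge $e \in M_p$ not on the path $P$; this $e$ is the parking spot, and it lives entirely inside $G'$. The procedure for each $j = 2,\dots,\ell$ is then: (1) get $\tilde p(j)$ onto $P$ (by a rotation along $C$ or a reconfiguration inside $G'$); (2) reconfigure inside $G'$ to swap $\tilde p(j)$ with the piece on $e$; (3) rotate along $C$ so that $\tilde p(j-1)$ sits on the matching edge $e_2$ of $Q$ adjacent to $u$; (4) reconfigure inside $G'$ to swap $\tilde p(j)$ (now on $e$) with the piece on $e_1$, the matching edge of $P$ covering $u$. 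After step (4), $\tilde p(1),\dots,\tilde p(j)$ lie consecutively on $C$ in the correct order; a final rotation puts them on $Q$. Each iteration uses a constant number of $G'$-reconfigurations (cost $t$ each) and a constant number of rotations along $C$ (cost $O(n^2)$ each by Observation~\ref{obs:rotation}), giving $O(\ell(t+n^2))$. No local structural analysis of the triangular grid is needed in this claim; the full power of the inductive hypothesis on $G'$ replaces it.
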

  \begin{proof}
      In this case, the graph $G'$ has an edge $e\not\in E(P)$ such that $e\in M_p$.
      Let $e_1$ be the edge of $M_p$ covering $u$, and $e_2$ be the edge of of $M_p$ covering the vertex adjacent to $u$ on $Q$.
      See Figure~\ref{fig:YP_not_hamilton}.

      We may assume that $p(1)$ is on the cycle $C$, as otherwise $p(1)$ is contained in $G'$, and hence we can reconfigure the current placement in $G'$ so that $p(1)$ is on $P$, keeping $v$ exposed, using $t$ \textsf{slide} operations by the assumption.

      We reconfigure $p$ by the following $4$ steps for $j=2,\ldots, \ell$.
      Initially, we set $\tilde{p}=p$.

          \begin{enumerate}
            \item We move the piece $\tilde{p}(j)$ so that $\tilde{p}(j)$ is  on $P$ as follows.
            \begin{enumerate}
            \item If $\tilde{p}(j)$ is on $Q$, we rotate the current placement $\tilde{p}$ along $C$ so that $\tilde{p}(j)$ is located on $P$.
            \item If $\tilde{p}(j)$ is contained in $G'$ but not on $P$, then we reconfigure the current placement $\tilde{p}$ on $G'$ so that $\tilde{p}(j)$ is located on $P$, keeping that $\tilde{p}(1), \dots, \tilde{p}(j-1)$ are on $C$.
            \end{enumerate}
            
            \item We reconfigure the current placement $\tilde{p}$ on $G'$ to swap $\tilde{p}(j)$ and the piece on $e$. 
            Thus $\tilde{p}(j)=e$.
            \item We rotate the current placement $\tilde{p}$ along $C$ so that $\tilde{p}(j-1)$ is  $e_2$.
            \item We reconfigure the current placement $\tilde{p}$ on $G'$ to swap $\tilde{p}(j)$ and the piece on $e_1$.
            Thus $\tilde{p}(j)$ has been changed to $e_1$. 

            In the end of the $j$-th iteration, $\tilde{p}(1), \dots, \tilde{p}(j)$ are located on $C$ in this order from $v$ to $u$.
          \end{enumerate}
          Therefore, in the end of the above procedure, the pieces $\tilde{p}(1), \dots, \tilde{p}(\ell)$ are located on $C$ in this order from $v$ to $u$.
          Thus we can rotate $\tilde{p}$ along $C$ so that they are on $Q$. 
          
          In the above procedure, for each $j$, we reconfigure the placement restricted on $G'$ in a constant number of times, and we rotate the placement along $C$ at most twice.
          Therefore, the total number of \textsf{slide} operations is $O(\ell (t+n^2))$ by Observation~\ref{obs:rotation}.
  \end{proof}

  \begin{claim}
  Suppose that $C$ is a Hamilton cycle of $G$.
  Then we can reconfigure $p$ to a placement $p'$ so that $p'(1), \dots, p'(\ell)$ are placed on $Q$ in this order from $v$ to $u$, using $O(\ell n(t+n))$ \textsf{slide} operations.
  \end{claim}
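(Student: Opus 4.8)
The plan is to mimic the proof of Claim~\ref{clm:nonHamilton}, but the new difficulty is that when $C$ is a Hamilton cycle every edge of $M_p$ lies on $C$ (as $C$ is an odd $M_p$-alternating cycle through all $2n+1$ vertices), so $G'$ contains no matching edge outside $E(P)$ and the ``buffer'' edge $e$ used in Claim~\ref{clm:nonHamilton} is unavailable. Moreover a rotation along $C$ only cyclically shifts the $n$ pieces around $C$, preserving their cyclic order, so rotations alone can never change the order in which the labels $1,\dots,\ell$ end up on $Q$. So I would use two kinds of moves in alternation: (i) rotations along $C$, which realize an arbitrary cyclic shift of the $n$ pieces (Observation~\ref{obs:rotation}); and (ii) reconfigurations performed entirely inside $G'$, which, invoking the reconfigurability of $G_{k-1}$ assumed in Lemma~\ref{lem:phase2}, let me permute at cost $O(t)$ the pieces that currently lie in $G'$ while leaving the pieces on $Q$ untouched and bringing the exposed vertex back to $v$.

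Why this is enough: the pieces of $M_p$ lying in $G'$ occupy a contiguous arc of $C$, namely $P$, and since $G_{k-1}$ has at least $5$ vertices the ear $Q$ has at most $2n-3$ edges, so $P$ carries at least two pieces. Hence move (ii) realizes an arbitrary permutation of a block of $\ge 2$ consecutive pieces on $C$, and conjugating it by the one-notch shift coming from move (i) realizes such a permutation on the block shifted by one position; two such blocks overlap, and iterating shows that moves (i) and (ii) together generate the full symmetric group on all $n$ pieces. In particular they can bring the labels $1,\dots,\ell$ to be exactly the pieces on $Q$, in the order from $v$ to $u$; that placement is the desired $p'$.

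For the length bound $O(\ell n(t+n))$, I would realize the target rearrangement one piece at a time, fixing the labels $\ell,\ell-1,\dots,1$ in turn: each target piece is carried to its slot by $O(n)$ elementary steps, where a step is an in-$G'$ permutation followed by a one-notch rotation along $C$ (so $O(t+n)$ slides per step), arranged so that the pieces already placed are carried along rigidly by the rotations and restored to their slots rather than scrambled. (Alternatively, one single in-$G'$ move can put a piece onto a chord of $P$, after which one re-chooses the even $M$-alternating path from $v$ to $u$ so that the resulting cycle is no longer Hamilton and falls back on Claim~\ref{clm:nonHamilton}.) The step I expect to be the main obstacle is making this bookkeeping precise, and in particular checking that every intermediate placement restricted to $G'$ satisfies the alignment conditions for the ear decomposition $G_1,\dots,G_{k-1}$ required to invoke the reconfigurability of $G_{k-1}$: one must re-align inside $G'$ via Lemma~\ref{lem:alignedear} and move the exposed vertex via Lemma~\ref{lem:exposed} between such calls, while verifying that neither operation disturbs the pieces already positioned on $Q$.
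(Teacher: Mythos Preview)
Your proposal is essentially correct and rests on the same mechanism as the paper's proof: rotations along the Hamilton cycle $C$ only cyclically permute the $n$ pieces, so to change the cyclic order one must invoke the reconfigurability of $G'=G_{k-1}$ on the pieces lying on $P$, and since $G'$ has at least $5$ vertices there are at least two such pieces. The paper packages this more concretely than you do: it fixes two consecutive matching edges $e_1,e_2$ on $P$, observes that the assumed reconfigurability of $G'$ lets one swap the two pieces sitting on $e_1,e_2$ in $t$ slides, and then runs a bubble sort---alternating single-position rotations along $C$ (cost $O(n)$) with swaps at $e_1,e_2$ (cost $O(t)$)---to bring labels $1,\dots,\ell$ to $Q$ in order. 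This gives $O(\ell n)$ swaps at $O(t+n)$ each, matching the bound directly and bypassing your group-generation detour. Your concern about alignment (that the hypothesis on $G_{k-1}$ only speaks of placements aligned with $G_1,\dots,G_{k-1}$) is legitimate; the paper's proof glosses over it just as you anticipated, implicitly relying on Lemma~\ref{lem:alignedear} to re-align before and after each swap, which does not affect the asymptotic count.
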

  \begin{proof}
          Since $G'$ has at least $5$ vertices, $P$ has at least $2$ edges of $M_p$.
          Let $e_1, e_2$ be two edges in $M_p\cap E(P)$ such that $e_1, e_2$ appear consecutively along $P$.
          We can swap the $2$ pieces on $e_1$ and $e_2$ by reconfiguring on $G'$, using $t$ \textsf{slide} operations.
          By using the strategy similar to the bubble sort algorithm, we can obtain a placement $p'$ such that  $p'(1),\ldots,p'(\ell)$ are on $C$ in this order from $v$ to $u$. This requires $O(\ell n)$ swaps. 
          Since each swap takes $O(t+n)$ \textsf{slide} operations, it takes 
          $O(\ell n(t+n))$ \textsf{slide} operations in total.  
  \end{proof}

  In each case, we can reconfigure $p$ to a placement $p'$ so that the pieces $p'(1),\ldots,p'(\ell)$ are located on $Q$ in this order from $v$ to $u$.
  Since we can reconfigure the placement on $G'$ to any placement, 
  we can reconfigure $p'$ to $q$.
  The total number of \textsf{slide} operations is $O(\ell n(t+n))=O(n^2(t+n))$.
\end{proof}

By applying Lemma~\ref{lem:phase2} recursively, 
we see that, if $G_j$ is a reconfigurable subgraph with at least $5$ vertices for some $j < k$, then $G_k=G$ is reconfigurable.
In particular, if a given ear decomposition is admissible, then $G$ is shown to be reconfigurable.

Below we upper-bound the number of operations to reconfigure two placements aligned with an admissible ear decomposition.
We will show each case of the definition of an admissible ear decomposition separately.
We assume that a graph has $2n+1$ vertices for $n\geq 2$.

\begin{restatable}{lemma}{pentagon}
  \label{lem:pentagon}
  Let $G_1, \dots, G_k$ be an admissible ear decomposition such that $G_1$ is a cycle of length $5$~(Figure~\ref{fig:pentagon_ear}).
  Then we can reconfigure an arbitrary placement $p$ aligned with $G_1,\ldots,G_k$ to another placement aligned with $G_1,\ldots,G_k$ in 
  at most $n^{2n}$ \textsf{slide} operations.
\end{restatable}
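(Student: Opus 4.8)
The plan is to bootstrap from a small reconfigurable base and then apply Lemma~\ref{lem:phase2} repeatedly along the ear decomposition. Concretely, since the ears of length $1$ can be permuted freely in an odd and proper ear decomposition, I would first reorder the ears so that $P_1$ and $P_2$ are the two inner edges of the pentagon $G_1$; then $G_3 = G_1 + P_1 + P_2$ is a \emph{pentagon} in the sense of the paper (three mutually adjacent triangles forming a $5$-vertex graph). The first step is therefore to show that this $5$-vertex pentagon is reconfigurable and, more precisely, that any placement aligned with $G_1, P_1, P_2$ can be reconfigured to any other such placement using $O(1)$ \textsf{slide} operations. This is a finite check: the pentagon has $5$ vertices, so there are only $5 \cdot 3 = 15$ placements (choice of exposed vertex times choice of matching on the remaining $P_4$), and one verifies by hand (or by the rotation machinery of Observation~\ref{obs:rotation} combined with a single ``flip'' across a diamond) that the associated reconfiguration graph is connected; moreover all pieces can be reached in every order, so we get full reconfigurability, not just of the unlabeled matching. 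Call the resulting constant bound $c_0$.

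Next I would set up the recursion. Write $G_3, G_4, \dots, G_k$ for the tail of the ear decomposition starting from the pentagon; this is itself an odd and proper ear decomposition of $G$ (with $G_3$ playing the role of the starting graph — note Lemma~\ref{lem:phase2} only needs $G_{k-1}$ to have $\ge 5$ vertices and to be reconfigurable for placements aligned with the decomposition, it does not require the base to be a cycle). Define $t_3 = c_0$, and for $i = 4, \dots, k$ let $t_i$ be the bound on the number of \textsf{slide} operations needed to reconfigure between two placements of $G_i$ aligned with $G_1, \dots, G_i$. Since each $G_i$ ($i \ge 3$) has at least $5$ vertices, Lemma~\ref{lem:phase2} applies at every step and gives a recurrence of the shape $t_{i+1} = O\!\bigl(n_i^2 (t_i + n_i)\bigr)$ where $n_i = |V(G_i)|/2 \le n$; crudely, $t_{i+1} \le C n^2 (t_i + n) \le 2 C n^2 t_i$ for $t_i \ge n$ and a universal constant $C$. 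Unrolling this over at most $k - 3 < n$ steps gives $t_k \le c_0 \cdot (2Cn^2)^{k} \le (n^{O(1)})^{n}$, which is comfortably below $n^{2n}$ for $n$ large; the few small cases can be absorbed by taking the constant generously, or handled directly since the statement only claims $n \ge 2$. The bookkeeping here is entirely routine — the only thing to be careful about is that ``reconfigurable for placements aligned with the decomposition'' is exactly the hypothesis Lemma~\ref{lem:phase2} both consumes and produces, so the induction is clean.

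The one genuine obstacle is the base case: proving that the $5$-vertex pentagon $G_3$ is fully reconfigurable with a constant number of moves, and in a form compatible with Lemma~\ref{lem:phase2}, i.e.\ for \emph{all} placements aligned with the sub-ear-decomposition $G_1, P_1, P_2$ — in particular with arbitrary orderings of the two pieces. I expect this to require an explicit (small) case analysis rather than an appeal to a general principle: one exhibits, for the pentagon, a short sequence realizing a transposition of the two pieces and a short sequence moving the exposed vertex anywhere, and checks these generate the full symmetric action. Since the pentagon is ``a subgraph induced by three adjacent triangles'', it has a degree-$4$ (or higher) center vertex, and the two triangles sharing an edge give a diamond through which one can flip the orientation and swap pieces, exactly as in Lemma~\ref{lem:ThreeEarBase}; so in fact the pentagon analysis can likely be quoted from or reduced to that lemma. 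I would state the pentagon reconfigurability as a small claim inside the proof, prove it by this finite check (or reduction), and then present the recursion as above.

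\begin{proof}[Proof of Lemma~\ref{lem:pentagon}]
Since the ears of length $1$ may be reordered in an odd and proper ear decomposition, we assume that $P_1$ and $P_2$ are the two inner edges of the cycle $G_1$ of length $5$. Then $G_3 = G_1 + P_1 + P_2$ is a pentagon, i.e.\ the $5$-vertex graph induced by three mutually adjacent triangles.

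\begin{claim}\label{clm:pentagonbase}
Any placement of $G_3$ aligned with $G_1, P_1, P_2$ can be reconfigured to any other placement of $G_3$ aligned with $G_1, P_1, P_2$ using $O(1)$ \textsf{slide} operations.
\end{claim}

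\noindent\emph{Proof sketch.} The graph $G_3$ has $5$ vertices, hence exactly $5 \cdot 3 = 15$ placements (a choice of the exposed vertex and a choice of one of the three matchings on the remaining $4$ vertices). A direct inspection shows the reconfiguration graph on these $15$ placements is connected with diameter $O(1)$: rotation along the two triangles incident to the center vertex (Observation~\ref{obs:rotation}) moves the exposed vertex arbitrarily and realizes the transposition of the two pieces, exactly as in Lemma~\ref{lem:ThreeEarBase}. Hence all placements are mutually reachable in a constant number of \textsf{slide} operations. \hfill$\diamond$

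Now consider the tail $G_3, G_4, \dots, G_k = G$, which is an odd and proper ear decomposition of $G$. For $i = 3, \dots, k$ let $t_i$ denote an upper bound on the number of \textsf{slide} operations needed to reconfigure between any two placements of $G_i$ aligned with $G_1, \dots, G_i$. By Claim~\ref{clm:pentagonbase} we may take $t_3 = c_0$ for a constant $c_0$. Each $G_i$ with $i \ge 3$ has at least $5$ vertices, so Lemma~\ref{lem:phase2} applies at every step and yields
\[
t_{i+1} = O\!\bigl( n^2 (t_i + n) \bigr) \le 2 C n^2 t_i
\]
for a universal constant $C$ and $n \ge 2$, assuming inductively $t_i \ge n$ (which holds since $t_3 = c_0 \ge n$ is guaranteed for the relevant small $n$, and is preserved). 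Since there are at most $k - 3 < n$ applications, unrolling gives
\[
t_k \le c_0 \, (2 C n^2)^{\,n} \le n^{2n}
\]
for all sufficiently large $n$, and the finitely many remaining small cases are absorbed into the constant. In particular any placement $p$ of $G$ aligned with $G_1, \dots, G_k$ can be reconfigured to any other such placement in at most $n^{2n}$ \textsf{slide} operations.
\end{proof}
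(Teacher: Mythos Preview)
Your overall strategy --- reorder so that $P_1,P_2$ are the inner edges of the $5$-cycle, verify the pentagon $G_3$ is reconfigurable by a finite check, then climb up the ear decomposition via Lemma~\ref{lem:phase2} --- is exactly what the paper does. The structure is fine; the problem is in the bookkeeping of the final bound.

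Your unrolling gives $t_k \le c_0\,(2Cn^2)^{n} = c_0\,(2C)^{n}\, n^{2n}$, and this is \emph{not} at most $n^{2n}$: the factor $(2C)^n$ blows up. The sentence ``$\le n^{2n}$ for all sufficiently large $n$'' is simply false, and ``absorb the small cases into the constant'' cannot repair an exponential overshoot. The paper avoids this by inducting on the size parameter rather than on the ear index: writing $f(m)$ for the bound on a graph with $2m+1$ vertices, Lemma~\ref{lem:phase2} gives $f(n)\le n^{2}\bigl(f(n-\ell)+n\bigr)\le n^{2}f(n-1)+n^{3}$, and then one proves $f(n)\le n^{2n}-2n$ by induction (base $f(2)\le 8$), the point being that $(n-1)^{2(n-1)}\cdot n^{2}\le n^{2n}$ absorbs the quadratic factor at each step. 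By freezing $n$ and iterating on $i$ you lose exactly this cancellation. If you want your version to go through, you must parametrize by $n_i=|V(G_i)|$ and track how it grows, not replace every $n_i$ by the final $n$.

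A few smaller slips, none fatal but worth fixing: (a) the pentagon has $14$ labeled placements, not $15$ --- the number of nearly perfect matchings exposing a given vertex is not always $3$; (b) the bound $k-3<n$ is unjustified since length-$1$ ears can make $k$ large --- what you need is that the number of ears of length $\ge 3$ is at most $n-2$, and length-$1$ ears leave $t_i$ unchanged; (c) the claim ``$t_3=c_0\ge n$'' is false for large $n$, so the step $t_i+n\le 2t_i$ needs a different justification at the start of the induction.
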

\begin{proof}
  For $i\in [k-1]$, we denote by $P_i$ the $i$-th ear in $G_1,\ldots,G_k$.
  The cycle $G_1$ of length $5$ induces a pentagon consisting of three triangles.
  We may assume that $P_1$ and $P_2$ are the inner edges of the pentagon~(Figure~\ref{fig:pentagon_ear}), as we may change the ordering of an ear of length one in the ear decomposition.
  The pentagon is a reconfigurable graph in a constant number of \textsf{slide} operations.
  Since the pentagon $G_3=G_1+P_1+P_2$ is reconfigurable, we can see that $G_i$ is reconfigurable for any $i\geq 3$ by applying Lemma~\ref{lem:phase2} recursively.
  
  We now estimate the number of \textsf{slide} operations for reconfiguration.
  Let $f(n)$ be the number of \textsf{slide} operations for a graph with $2n+1$ vertices where $n\geq 2$.
  Then it follows from Lemma~\ref{lem:phase2} that $f(n)\leq n^2 (f(n-\ell)+n)\leq n^2 f(n-1)+n^3$ for $n\geq 3$, where $2\ell+1$ is the length of $P_{k-1}$. 
  
  We will show that $f(n)\leq n^{2n}-2n$ for $n\geq 2$.
  Indeed, if $n=2$, then a graph is a pentagon and it is not difficult to see that the graph is reconfigurable in at most $8$ \textsf{slide} operations.
  When $n\geq 3$, it holds by induction that
  \begin{align*}
  f(n) &\leq n^2 \left((n-1)^{2(n-1)}-2(n-1)\right) + n^3 \leq n^{2n} - 2n^2(n-1)+n^3 \\
  &=n^{2n} - n (n^2 - 2n) \leq n^{2n} - 2n,
  \end{align*}
  where the last inequality holds since $n^2-2n-2\geq 0$ for $n\geq 3$.
  Thus $f(n)\leq n^{2n}-2n$ holds.
\end{proof}


We next discuss the second case of an admissible ear decomposition.
The following lemma says that the base case is reconfigurable.

Let $\tilde{G}=(V, E)$ be a triangular grid graph with $2n+1$ vertices for $n\geq 3$ as in Figure~\ref{fig:can_exampleLabel}.
More specifically, $V=[2n+1]$, and it consists of an odd cycle $C$ of length $2n-1$ with vertex set $[2n-1]$, attached to a diamond $D$ with vertex set $\{2n-2, 2n-1, 2n, 2n+1\}$.

\begin{figure}
  \centering
\includegraphics[keepaspectratio,width=0.5\textwidth]{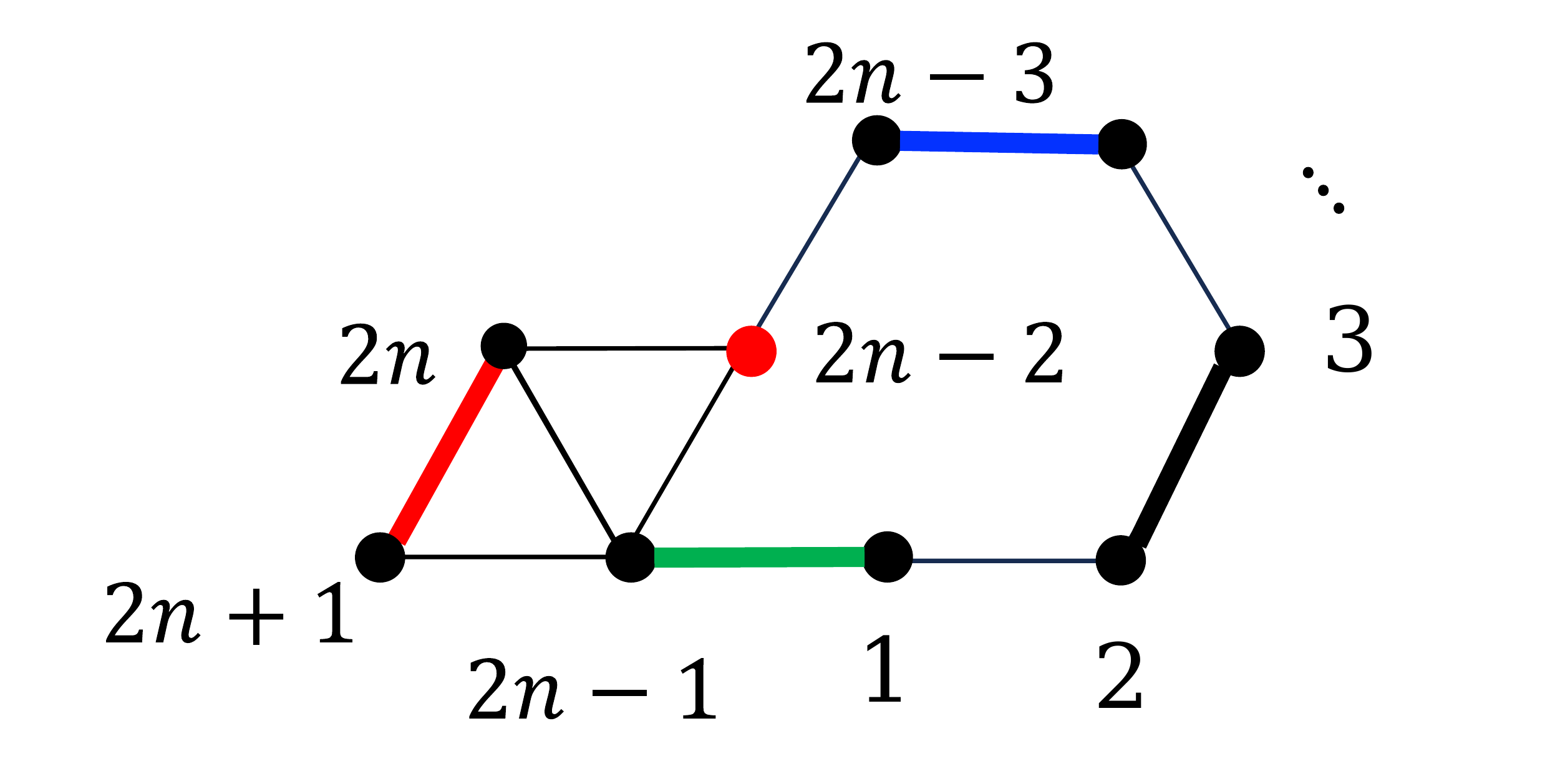}
  \caption{A factor-critical graph that is reconfigurable.}
  \label{fig:can_exampleLabel}
\end{figure}

\begin{lemma}\label{lem:ThreeEarBase}
The graph $\tilde{G}$ defined above with $2n+1$ vertices~($n\geq 3$) is reconfigurable in at most $n^3+n^2$ operations.
\end{lemma}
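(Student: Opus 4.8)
The graph $\tilde{G}$ is an odd cycle $C$ of length $2n-1$ on vertex set $[2n-1]$ together with a diamond $D$ on $\{2n-2, 2n-1, 2n, 2n+1\}$ glued along the edge $\{2n-2, 2n-1\}$. My plan is to show directly that any placement of $\tilde{G}$ can be reconfigured to a fixed canonical placement, from which reconfigurability follows by composing reconfiguration sequences. The key structural observation is that the diamond $D$, together with the single edge of $C$ adjacent to it, forms a small ``engine'': when the exposed vertex sits inside the diamond, we have the freedom to reroute a piece around the two triangles of the diamond, and in particular to \emph{swap} the labels of the two pieces lying on consecutive edges of $C$ near the diamond. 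This is exactly the local swap move that Lemma~\ref{lem:phase2} needs, so morally $\tilde{G}$ behaves like ``$C$ plus a swap gadget''.

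\textbf{Step 1: reduce to sorting along the cycle.} First I would use Lemma~\ref{lem:exposed} to bring the exposed vertex to a fixed vertex of $C$, say vertex $1$, so that all but a bounded number of pieces lie on edges of $C$ in some order. A placement aligned with $C$ is then essentially a cyclic ordering of the $n-1$ (or so) pieces on $C$ together with the configuration on the diamond; rotation along $C$ (Observation~\ref{obs:rotation}) lets me cyclically shift this ordering at cost $O(n^2)$, and also lets me move any designated piece into the diamond region or into the edge of $C$ adjacent to the diamond. So the task reduces to: realize an arbitrary permutation of the pieces on $C$.

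\textbf{Step 2: the swap gadget.} The heart of the argument is a constant-size calculation: when the exposed vertex is inside $D$ (or at vertex $2n-2$ or $2n-1$), the pieces occupying the two edges incident to the diamond side of the cycle — call their positions $a$ and $b$ — can have their labels exchanged using $O(1)$ \textsf{slide} operations, while leaving every other piece untouched and restoring the exposed vertex to its starting position. I would verify this by an explicit short sequence of slides moving pieces around the two triangles of the diamond (this is the analogue of the pentagon computation in Lemma~\ref{lem:pentagon}, and the analogue of using the diamond as a ``turn/pivot'' gadget in the Gourds setting). Given this gadget, I can perform an adjacent transposition of any two consecutive pieces on $C$: rotate along $C$ to bring them to positions $a,b$, fire the swap gadget, rotate back. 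Each adjacent transposition therefore costs $O(n^2 + 1) = O(n^2)$, or more carefully $O(n)$ for the rotation when only shifting a few positions plus $O(1)$ for the gadget — and iterating bubble-sort over $O(n)$ pieces with $O(n^2)$ adjacent transpositions gives a bound of $O(n^3)$; tightening constants to land inside $n^3 + n^2$ is the bookkeeping part.

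\textbf{Step 3: assembling and counting.} Since adjacent transpositions generate the full symmetric group on the pieces of $C$, I can reconfigure any aligned placement to the canonical one; handling the pieces that live on the diamond is absorbed into the gadget and into a final constant-cost cleanup. Reconfigurability of $\tilde{G}$ follows. The main obstacle I anticipate is the explicit swap-gadget computation together with the operation count: one must check that the diamond really does permit an honest label swap of the two cycle-adjacent pieces (and not merely a rotation, which cyclic shifts already give for free), and then carefully bound the number of rotations so that the bubble-sort total, including the $O(n)$-cost alignment from Lemma~\ref{lem:exposed} and the rotations of Observation~\ref{obs:rotation}, does not exceed $n^3 + n^2$. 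Everything else — the reduction to sorting, the use of $C$ as an odd alternating cycle, the recursion-free direct argument — is routine given the machinery already established in Sections~\ref{sec:pre} and~\ref{subsec:year_algorithm}.
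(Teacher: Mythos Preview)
Your approach diverges from the paper's and contains a genuine gap at precisely the point you flag: the $O(1)$ swap gadget does not exist. The only vertices of $\tilde G$ with degree greater than $2$ are $2n-2$ and $2n-1$ (and the diamond vertices $2n,2n+1$); every other cycle vertex has degree exactly $2$. So once the exposed vertex leaves the four diamond vertices, the \emph{only} available slides cascade along $C$. The diamond itself contributes just two extra vertices, enough to park a single piece, and the five-vertex subgraph $\{2n-3,2n-2,2n-1,2n,2n+1\}$ is \emph{not} a pentagon (there is no triangle through $2n-3$), so the pentagon analogy fails. Concretely: starting from any aligned placement with the exposed vertex in the diamond, after at most two slides you are forced to move the exposed vertex to $2n-3$ or to $1$, which immediately disturbs the next piece on $C$; there is no short closed walk that swaps two $C$-pieces while fixing the rest.

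What the paper actually does is quite different and worth knowing. It observes that $\tilde G$ has a Hamilton cycle $C'$ of length $2n+1$ (replace the edge $(2n-2,2n-1)$ of $C$ by the length-$3$ path through $2n,2n+1$). One then has \emph{two} odd cycles: $C'$ carrying all $n$ pieces and $C$ carrying only $n-1$ of them (the piece on $(2n,2n+1)$ is fixed under rotation along $C$). A full rotation along $C'$ is an $n$-cycle on the pieces, and a full rotation along $C$ is an $(n-1)$-cycle fixing the diamond piece; together these generate $S_n$. The paper turns this into an insertion-sort style procedure: for $j=1,\dots,n$, rotate along $C'$ to bring piece $j$ to the edge $(2n,2n+1)$, then rotate along $C$ to place piece $j-1$ at $(2n-1,1)$ (restoring the order of pieces $1,\dots,j-1$). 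Each iteration is a single rotation on each cycle, costing at most $n^2+n$ by Observation~\ref{obs:rotation}, and $n$ iterations give exactly $n^3+n^2$. The insight you are missing is this second cycle $C'$ and the ``off-by-one'' in piece counts between $C$ and $C'$, not a constant-size local swap.
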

\begin{proof}
Let $p$ and $q$ be an initial and target placements of $\tilde{G}$, respectively.
We may assume that the target pieces $q(1), \dots , q(n-1)$ are located in the anti-clockwise order along $C$, and that $D$ has pieces $q(n-1)$ and $q(n)$.
Let $C'$ be the Hamilton cycle of length $2n+1$ in $\tilde{G}$.

We present a reconfiguration sequence as follows: 
Initially, we set $\tilde{p}=p$.
For $j=1, 2, \dots, n$, we do the following $2$ steps.

\begin{enumerate}
\item We rotate the current placement $\tilde{p}$ along $C'$ so that $\tilde{p}(j)$ is equal to the edge $(2n, 2n+1)$.
\item We rotate the current placement $\tilde{p}$ along $C$ so that $\tilde{p}(j-1)$ is equal to the edge $(2n-1, 1)$.
Then $\tilde{p}(1), \dots, \tilde{p}(j)$ are located on $C'$ in the anti-clockwise order.
\end{enumerate}

In the end of the procedure, $\tilde{p}(1), \dots, \tilde{p}(n)$ are located on $C'$ in the anti-clockwise order, which is the desired placement $q$.
In each iteration, we rotate the current placement along $C$ or $C'$.
By choosing the shorter one between the clockwise rotation and the anti-clockwise rotation, it requires at most $n^2+n$ \textsf{slide} operations by Observation~\ref{obs:rotation}.
Hence the total number of \textsf{slide} operations is at most $n^3+n^2$.
\end{proof}

We next show the case when an admissible ear decomposition satisfies the second case.
This, together with Lemma~\ref{lem:pentagon},  proves Lemma~\ref{lem:reconfear}.

\begin{restatable}{lemma}{threeear}
  \label{lem:3_ear}
  Let $G_1, \dots, G_k$ be an admissible ear decomposition such that $P_1$ is of length $3$ and has the end vertices $u, v$ which are adjacent in $G_1$~(Figures~\ref{fig:3_ear} and~\ref{fig:can_exampleLabel}).
  Then we can reconfigure an arbitrary placement $p$ aligned with $G_1,\ldots,G_k$ to another placement in 
  at most $n^{2n}$ \textsf{slide} operations.
\end{restatable}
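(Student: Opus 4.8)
The plan is to follow the proof of Lemma~\ref{lem:pentagon} essentially verbatim, with the pentagon base case replaced by the graph $\tilde{G}$ of Lemma~\ref{lem:ThreeEarBase}. As explained in the proof overview, I may reorder the length-$1$ ears so that $P_2$ is the inner edge of $P_1$: the $4$-cycle formed by $P_1$ and the edge $uv \in E(G_1)$ is not induced (a unit rhombus in the triangular lattice has a diagonal of unit length), its chord $e$ has both ends already in $G_2 = G_1 + P_1$ and hence appears in the given decomposition as a length-$1$ ear, and moving it to position $2$ preserves oddness, properness, admissibility, and the alignment of the two given placements (only length-$1$ ears are permuted). After this reordering, $G_3 = G_1 + P_1 + P_2$ is the odd cycle $G_1$ with a diamond attached along $uv$ (Figures~\ref{fig:3_ear} and~\ref{fig:can_exampleLabel}).

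Next I would show that $G_3$ is reconfigurable and has at least $5$ vertices. If $G_1$ is a triangle, then $G_3$ has exactly $5$ vertices: it is the union of three triangles, namely $G_1$ and the two triangles of the attached diamond, hence a pentagon, which is reconfigurable in a constant number of \textsf{slide} operations just as in the proof of Lemma~\ref{lem:pentagon}. Otherwise $|V(G_1)| \ge 5$, and $G_3$ is isomorphic to the graph $\tilde{G}$ of Lemma~\ref{lem:ThreeEarBase} --- an odd cycle of length $|V(G_1)|$ carrying a pendant diamond whose far vertex has degree $2$ --- so $G_3$ is reconfigurable by that lemma. In both cases $G_3$ is reconfigurable and has at least $5$ vertices, so I would invoke Lemma~\ref{lem:phase2} recursively along the ears $P_3, \dots, P_{k-1}$ to conclude that any two placements aligned with $G_1, \dots, G_k$ can be reconfigured to each other.

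Finally, I would bound the number of operations exactly as in Lemma~\ref{lem:pentagon}. Let $f(n)$ be the worst-case number of \textsf{slide} operations to reconfigure between two placements aligned with the ear decomposition of a graph on $2n+1$ vertices. The base values are $f(2) \le 8$ (the pentagon) and $f(3) \le 3^3 + 3^2$ (Lemma~\ref{lem:ThreeEarBase}), and Lemma~\ref{lem:phase2} gives $f(n) \le n^2(f(n-\ell) + n) \le n^2 f(n-1) + n^3$ for larger $n$, where $2\ell + 1$ is the length of the last ear. The induction carried out in the proof of Lemma~\ref{lem:pentagon} then gives $f(n) \le n^{2n} - 2n \le n^{2n}$ for all $n \ge 2$, which is the claimed bound.

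I expect the step needing the most care to be pinning down the base graph $G_3$: one must verify that, after the reordering, $G_3$ really is the graph treated in Lemma~\ref{lem:ThreeEarBase} (identifying the two triangles of the diamond and the cycle edge they straddle) and separately dispose of the degenerate case in which $G_1$ is a triangle and $G_3$ is a pentagon rather than an odd cycle with a pendant diamond. Once the base case is settled, the recursive application of Lemma~\ref{lem:phase2} and the resulting $n^{2n}$ estimate are a verbatim repeat of the pentagon case.
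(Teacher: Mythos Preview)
Your proposal is correct and follows essentially the same approach as the paper: reorder so that $P_2$ is the diagonal of the diamond formed by $P_1$ together with the edge $(u,v)$, observe that $G_3$ is reconfigurable (via Lemma~\ref{lem:ThreeEarBase}), then recurse with Lemma~\ref{lem:phase2} and bound exactly as in Lemma~\ref{lem:pentagon}. Your write-up is in fact slightly more careful than the paper's, since you explicitly handle the degenerate case $|V(G_1)|=3$ (where $G_3$ is a pentagon and Lemma~\ref{lem:ThreeEarBase}, stated for $n\ge 3$, does not literally apply) and you justify why the diagonal must appear as a length-$1$ ear.
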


\begin{proof}
  The edge $(v, u)$ with the path $P_1$ consists of a diamond with two triangles.
  We may assume that the diagonal line of the diamond is $P_2$.
Hence, by Lemma~\ref{lem:ThreeEarBase}, $G_3=G_2+P_2$ is reconfigurable in $O(t^2)$ operations, where $t$ is the number of vertices in $G_3$.
  By applying Lemma~\ref{lem:phase2} recursively, we can see that $G_i$ is reconfigurable for any $i\geq 3$, and hence so is $G=G_k$.
  In a way similar to the proof of Lemma~\ref{lem:pentagon}, 
  the total number of \textsf{slide} operations is bounded by $n^{2n}-2n$.
  Thus the lemma holds.
\end{proof}

\subsection{Proof of Theorem~\ref{thm:admissible}: Existence of Admissible Ear Decomposition}\label{sec:admissible_eardecomposition}

A subgraph $H$ of $G$ is called \textit{central} if $G-H$ has a perfect matching, where $G-H$ is the subgraph obtained by removing the vertices of $H$ from $G$.
Note that an odd cycle $C$ is central if and only if there exists a nearly perfect matching $M$ such that $C$ is an odd  $M$-alternating cycle that has an exposed vertex.

As remarked in~\cite{ref:ear_decomposition}, 
the proof of Proposition~\ref{prop:2connect_critical} shows the following statement, which says that any odd and proper ear decomposition of a subgraph can be extended to one for the whole graph.

\begin{lemma}[Theorem 5.5.2 in Lov\'{a}sz--Plummer \cite{ref:ear_decomposition}]\label{lem:centralcycle1}
  Let $G$ be a $2$-connected factor-critical graph, and $H$ be a central subgraph that has an odd and proper ear decomposition $G'_1,\ldots,G'_{k'}=H$. Then $G$ has an odd and proper ear decomposition $G_1,\ldots,G_k$ such that $G_i=G'_i$ for $i\in[k']$.
\end{lemma}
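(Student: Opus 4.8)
The plan is to revisit the proof of Proposition~\ref{prop:2connect_critical} and notice that it already yields this sharper statement: that proof builds a proper odd ear decomposition of a $2$-connected factor-critical graph $G$ by starting from an odd cycle and then repeatedly appending an odd ear while keeping the current subgraph \emph{central}, and the only property of the starting object it uses is that it is central and already carries a proper odd ear decomposition. So I would run the identical ``append one ear'' step, but seeded with $H$ and its given decomposition $G'_1,\dots,G'_{k'}$, until all of $G$ is absorbed.

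Concretely, I would induct on $|E(G)|-|E(H)|$. If $H=G$ there is nothing to do; otherwise it suffices to exhibit one odd ear $P$ of $H$ with distinct endpoints such that $H+P$ is again central, since then $G'_1,\dots,G'_{k'}=H,\,H+P$ is still an odd and proper ear decomposition and $|E(G)|-|E(H+P)|<|E(G)|-|E(H)|$. To find $P$ I would split into two cases. If $G$ has an edge $e$ with both ends in $V(H)$ but $e\notin E(H)$, take $P=e$: a length-one (hence odd) ear with distinct ends, and $H+P$ is central because $G-V(H+P)=G-V(H)$ still has the perfect matching we had. If instead $H=G[V(H)]$ is induced but $V(H)\neq V(G)$, fix a perfect matching $M$ of $G-V(H)$, so that, viewed inside $G$, $M$ exposes exactly $V(H)$ (and $|V(H)|$ is odd). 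I would then produce an odd $M$-alternating path $P$ with both endpoints in $V(H)$ and interior disjoint from $V(H)$; for such a $P$, its interior vertices are matched in consecutive pairs by the $M$-edges lying on $P$, so $M$ restricted to $V(G)\setminus V(H+P)$ is still a perfect matching and $H+P$ is central, and since $H$ is induced, $P$ has length at least $3$ and is a genuine new ear. Such a $P$ is obtained by choosing a suitable near-perfect matching $N$ of $G$ (using factor-criticality) and reading off a path component of $M\triangle N$: every vertex of $V(H)$ has degree at most one in $M\triangle N$, so the endpoints of such a component lie in $V(H)$ while its interior avoids $V(H)$.

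The main obstacle is precisely the second case: one must ensure the extracted alternating path has \emph{two distinct} endpoints in $V(H)$ and really leaves $V(H)$, i.e., that it is neither a chord already present in the induced subgraph $H$ nor a closed ear based at a single vertex. This is where $2$-connectivity is essential — it forces every connected component of $G-V(H)$ to have at least two neighbours in $V(H)$ (a component with a single neighbour $x$ would make $x$ a cut vertex, using that $|V(H)|\ge 3$, which holds since the ear decomposition of $H$ begins with an odd cycle), and one then routes an $M$-alternating path through such a component between two of its $V(H)$-neighbours by choosing the exposed vertex of $N$ accordingly. Making this routing precise is the real content; the rest is bookkeeping, and it is exactly what is carried out inside the proof of Proposition~\ref{prop:2connect_critical}.
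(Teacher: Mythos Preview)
Your proposal is correct and takes exactly the same approach as the paper: the paper does not give an independent proof of this lemma but simply notes (citing Lov\'asz--Plummer) that the proof of Proposition~\ref{prop:2connect_critical} already yields it. Your sketch spells out precisely this---seeding the standard ``append one odd ear while staying central'' argument with the given $H$ and its decomposition rather than with a bare odd cycle---and correctly identifies where $2$-connectivity enters to guarantee a proper ear.
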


In addition, we observe the following lemma.

\begin{lemma}\label{lem:centralcycle2}
  Let $G$ be a $2$-connected factor-critical graph, and $M$ be a nearly perfect matching.
  Then, for any edge $e$ incident to the exposed vertex $v$, $G$ has an odd $M$-alternating cycle having $e$.
\end{lemma}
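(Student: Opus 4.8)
The plan is to combine factor-criticality of $G$ at an endpoint of $e$ with a symmetric-difference argument against $M$. Write $e=(v,w)$; since $v$ is $M$-exposed we have $e\notin M$, and $w$ is covered by $M$. First I would invoke factor-criticality of $G$ to obtain a perfect matching of $G-w$, i.e.\ a nearly perfect matching $M_w$ of $G$ whose unique exposed vertex is $w$; note that $e\notin M_w$ as well, since $w$ is exposed by $M_w$.

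Next I would analyze $M\triangle M_w$. Every vertex has degree at most $2$ in it, and a vertex has odd degree there precisely when it is covered by exactly one of the two matchings, which here happens only at $v$ and $w$. Hence $M\triangle M_w$ is a vertex-disjoint union of even cycles and paths, with exactly one path $P$ joining $v$ and $w$. Tracing $P$ from $v$: its first edge lies in $M_w\setminus M$ and its last edge (at $w$) lies in $M\setminus M_w$, so, $P$ being alternating, it has even length; in particular $P$ is not the single edge $e$, and in fact $e\notin E(P)$ since $e$ lies in neither matching. Thus $P+e$ is a genuine simple cycle.

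Finally I would take $C=P+e$. It has odd length and contains $e$, and it is $M$-alternating everywhere except at $v$: at $w$ and at every internal vertex of $P$ the two incident edges of $C$ alternate between $M$ and $E\setminus M$, while at $v$ the two incident edges of $C$ (the first edge of $P$ and $e$) both lie outside $M$. Since $v$ is exactly the vertex exposed by $M$, $C$ is the desired odd $M$-alternating cycle through $e$. I do not expect any genuine obstacle here; the only point requiring a moment's care is verifying $P\neq e$ so that $P+e$ is a simple cycle, which the parity of $|E(P)|$ settles at once. (Note that $2$-connectedness of $G$ is not actually needed for this lemma, only factor-criticality.)
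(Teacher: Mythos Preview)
Your proof is correct and follows essentially the same approach as the paper: take a nearly perfect matching exposing the other endpoint of $e$, extract the even $M$-alternating path between the two exposed vertices from the symmetric difference, and close it up with $e$. Your write-up is in fact slightly more careful than the paper's in explicitly verifying that $e\notin E(P)$ so that $P+e$ is simple, and your remark that $2$-connectedness is unnecessary is also correct.
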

\begin{proof}
We denote $e=(u, v)$ for some vertex $u$ in $N(v)$.
Since $G$ is factor-critical, $G$ has a nearly perfect matching $M'$ exposing $u$.
Consider the subgraph with edge set $M\triangle M'$.
It contains an $M$-alternating path $P$ of even length from $u$ to $v$.
Hence, the path $P$ with the edge $(u, v)$ yields 
an $M$-alternating cycle of odd length in $G$.
Thus the lemma holds.
\end{proof}

The theorem below shows Theorem~\ref{thm:admissible}.

\begin{theorem}
  \label{thm:find_ear}
  Let $G$ be a $2$-connected triangular grid graph such that it has a vertex $o$ of degree $6$.
  If $G$ has a nearly perfect matching $M$ exposing $o$, then $G$ has an admissible ear decomposition.
  \end{theorem}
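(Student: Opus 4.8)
The goal is to show that a $2$-connected triangular grid graph $G$ with a degree-$6$ vertex $o$ and a nearly perfect matching $M$ exposing $o$ admits an admissible ear decomposition. By Lemma~\ref{lem:centralcycle1} it suffices to exhibit a \emph{central subgraph} $H$ of $G$ that is either a $5$-cycle (case (i)) or an odd cycle together with a chord joining two vertices adjacent in that cycle, where moreover the chord-free part is realized by $P_1$ of length $3$ with endpoints adjacent in $G_1$ (case (ii)); here ``central'' means $G-V(H)$ has a perfect matching, equivalently $H$ carries an odd $M'$-alternating cycle for some nearly perfect matching $M'$. So the real task is a local, combinatorial one: find such an $H$ using the geometry around $o$.

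\textbf{Exploiting the degree-$6$ vertex.} Around $o$ the six neighbours $v_1,\dots,v_6$ (in cyclic order around $o$ in the lattice) form a hexagon, and consecutive pairs $v_i,v_{i+1}$ are lattice-adjacent, so $G$ contains the six triangles $ov_iv_{i+1}$. I would first use Lemma~\ref{lem:centralcycle2}: the matching edge $e\in M$ covering $o$'s partner — say $e=(o,v_1)$ — lies on some odd $M$-alternating cycle $C_0$. This already gives a central odd cycle through $o$. Now I want to ``upgrade'' $C_0$ near $o$ into an admissible configuration. The plan is to do a case analysis on how $C_0$ enters and leaves the neighbourhood of $o$. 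If $C_0$ uses two consecutive neighbours $v_1,v_2$ (so $o v_1, o v_2$-ish structure appears and $v_1v_2$ is an edge of $G$), then the triangle $ov_1v_2$ together with $C_0$ should let me form either a $5$-cycle (three adjacent triangles around $o$) or a diamond-with-odd-cycle: the chord $v_1v_2$ lies inside the odd cycle, and because $o$ has degree $6$ there is always an adjacent triangle on the other side of $v_1v_2$, supplying the extra triangle needed to make $P_1$ have length $3$ with adjacent endpoints. If instead $C_0$'s two edges at $o$ go to non-consecutive neighbours, I would reroute: replace the detour of $C_0$ outside the hexagon by a path along the hexagon boundary, using alternation and the triangles $ov_iv_{i+1}$ to keep the cycle odd and $M'$-alternating for a suitably adjusted matching $M'$ (adjust $M$ on the symmetric difference of the reroute). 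Since $o$ has all six incident edges, there is enough room to always land in one of the two target shapes.

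\textbf{Building the admissible start.} Concretely, I expect the argument to produce a short odd cycle $C^\ast$ through $o$ all of whose vertices except possibly a few lie in $\{o,v_1,\dots,v_6\}$, plus one chord that is an edge of $G$ joining two cyclically adjacent vertices of $C^\ast$, with the region between the chord and the short arc it cuts off being exactly one extra triangle of $G$ (available because $o$ is interior). Declaring $G_1=C^\ast$, $P_1=$ the length-$3$ path cutting off that triangle (or, in the pentagon case, taking $C^\ast$ to be the $5$-cycle bounding three triangles at $o$ and $P_1,P_2$ the two interior edges), I get exactly conditions (i) or (ii) of admissibility. Centrality of the start follows because it is an odd $M'$-alternating cycle; then Lemma~\ref{lem:centralcycle1} extends it to an odd proper ear decomposition of all of $G$, which is admissible by construction. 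This proves Theorem~\ref{thm:admissible}.

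\textbf{Main obstacle.} The delicate part is the rerouting step: given that $C_0$ may wander far from $o$ before returning, I must argue that I can always push its behaviour near $o$ into one of the two canonical shapes while preserving oddness and $M'$-alternation, and that the triangle filling the chord actually exists in $G$ (it does, precisely because $\deg(o)=6$, but one must check the chord's two endpoints really are two consecutive $v_i$'s and that the opposite triangle $v_iv_{i+1}w$ with $w\neq o$ is present — this is where connectivity of $G$ and the lattice structure, rather than just $\deg(o)=6$, get used). I expect the proof to split into a handful of cases according to the cyclic distance between the two neighbours of $o$ used by the (possibly rerouted) cycle, with the distance-$1$ case being the heart of the matter and the others reduced to it.
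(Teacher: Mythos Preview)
Your plan has the right overall shape---reduce to finding a small central subgraph around $o$ and invoke Lemma~\ref{lem:centralcycle1}---but there are two genuine gaps that prevent it from going through.

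First, a factual slip: you write ``the matching edge $e\in M$ covering $o$'s partner --- say $e=(o,v_1)$''. But $o$ is \emph{exposed} by $M$, so no edge of $M$ is incident to $o$. Lemma~\ref{lem:centralcycle2} still gives you an odd $M$-alternating cycle through any chosen edge $(o,v_i)$, but that edge is not in $M$; the matching edges sit on the hexagon boundary $D$ or leave the hexagon. This matters because the centrality of the structures you want is governed precisely by where those matching edges are, not by where the cycle enters $o$.

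Second, and more seriously, your construction of the type-(ii) start relies on ``an adjacent triangle on the other side of $v_1v_2$'', i.e.\ a vertex $w\neq o$ adjacent to both $v_1$ and $v_2$. Nothing in the hypotheses guarantees this: the degree-$6$ condition gives you the six triangles around $o$ and nothing more, and $2$-connectivity does not force that outer triangle to exist. Even if it did exist, it is not clear how it yields a length-$3$ path $P_1$ whose endpoints are adjacent on an odd central cycle \emph{and} whose two internal vertices are matched to each other (which is what ``$G_1+P_1$ central'' requires). The paper avoids this entirely: its $P_1$ always lives inside the hexagon, of the form $o\text{--}v_i\text{--}v_{i+1}\text{--}v_{i+2}$ (e.g.\ $o\text{--}e\text{--}f\text{--}a$ in Case~(b)), so that the needed diamond is built from two of the six guaranteed triangles around $o$, and the internal matching edge $(v_i,v_{i+1})$ is controlled by the case hypothesis.

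Organisationally, the paper does \emph{not} case on how an alternating cycle enters and leaves $o$. Instead it cases on $|D\cap M|\in\{0,1,2,3\}$, where $D$ is the boundary hexagon. This immediately determines which boundary edges are matching edges, which in turn tells you exactly when a central $5$-cycle (three consecutive triangles at $o$) is available and, when it is not, which length-$3$ path $P_1\subset\{o,v_1,\dots,v_6\}$ is $M$-alternating. The residual difficulty---showing that some odd $M$-alternating cycle through $(o,a)$ avoids the intended $P_1$---is handled in Cases~(c) and~(d) by planarity arguments: two $M$-alternating cycles through $o$ in different directions must cross, and splicing at the first crossing point produces a cycle with the desired pair of consecutive edges at $o$. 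Your ``rerouting'' paragraph gestures at this but supplies no mechanism; this crossing-and-splicing argument is the actual content of the proof, and it is where the work lies.
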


\begin{proof}
  We denote by $a,b,c,d,e,f$ the $6$ neighbor vertices of $o$ in the anti-clockwise way as in Figure~\ref{fig:3_hexagon}.
  These $7$ vertices induce a hexagon.
  We denote by $D$ the edge set of the boundary cycle of the hexagon.

  In what follows, we will prove that $G$ has either
  \begin{itemize}
\item[(i)] a central cycle of length $5$, or
\item[(ii)] an odd central cycle $C$ and a path $P$ of length $3$ such that $P$ has the end vertices adjacent in $C$ and the subgraph $C+P$ is central.
\end{itemize}
  Then we can extend by Lemma~\ref{lem:centralcycle1} these structures to an admissible ear decomposition of the whole graph $G$.
  
  In what follows, we consider several cases according to the number of matching edges $|D\cap M|$ on the boundary cycle of the hexagon.

\medskip
\noindent
\textbf{(a) $|D\cap M|=3$.}

By symmetry, edges of $D\cap M$ are aligned as in Figure~\ref{fig:3_hexagon}.
This clearly has an $M$-alternating cycle of length $5$, and thus the structure~(i) exists.
Thus $G$ has an admissible ear decomposition by Lemma~\ref{lem:centralcycle1}.

          \begin{figure}
            \centering
            \includegraphics[keepaspectratio,width=0.3\textwidth]
            {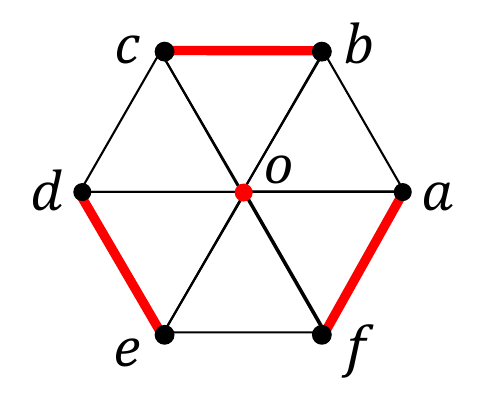}
            \caption{Case~(a) when $|D\cap M|=3$.}
            \label{fig:3_hexagon}
          \end{figure}

\medskip
\noindent
\textbf{(b) $|D\cap M|=2$.}

          By symmetry, we may assume that 
          either $D\cap M=\{(b, c), (d, e)\}$ or $D\cap M=\{(b,c), (e,f)\}$ as depicted in Figure~\ref{fig:2_hexagon}.
          When $D\cap M=\{(b, c), (d, e)\}$, it has a central cycle of length $5$, and hence the structure~(i) exists.
          We then consider the case when $D\cap M=\{(b,c), (e,f)\}$.
          Since the vertices $a$ and $d$ are covered by $M$, they are adjacent to some vertices $h$ and $g$, respectively, with edges of $M$.
          Note that we may assume that the edges $(d, h)$ and $(a, g)$ are horizontal edges as in Figure~\ref{fig:2_hexagon_G1}, as otherwise there would exist a central cycle of length $5$.
          
          \begin{figure}
            \centering
            \includegraphics[keepaspectratio,width=0.6\textwidth]
            {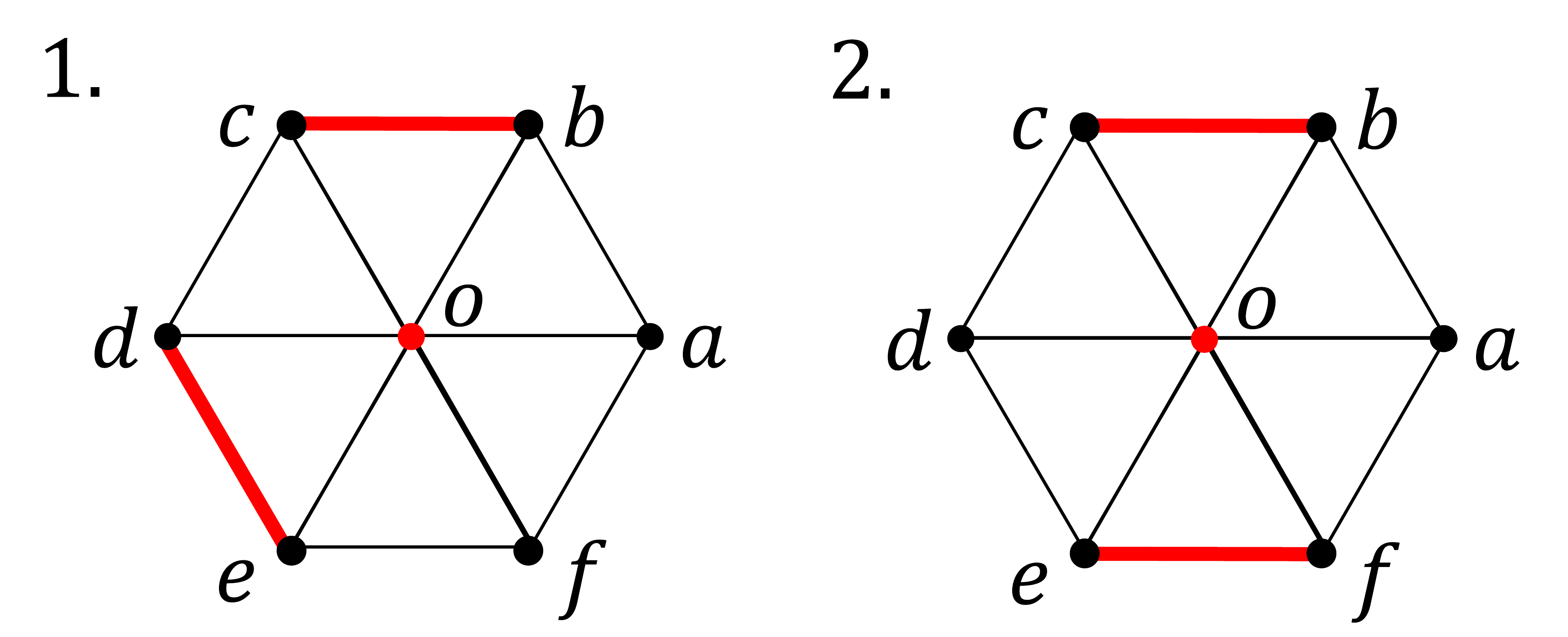}
            \caption{Case~(b) when $|D\cap M|=2$.}
            \label{fig:2_hexagon}
          \end{figure}

          By Lemma~\ref{lem:centralcycle2}, there exists an odd $M$-alternating cycle containing the edge $(o, a)$.
          Let $G_1$ be such an $M$-alternating cycle with minimum length.  
          By symmetry, we may assume that 
          $G_1$ has either $(b, o)$, $(c, o)$, or $(d, o)$~(Figure~\ref{fig:2_hexagon_G1}).
          Moreover, by the minimality of $G_1$, $G_1$ does not have $e$ or $f$.
          In each case, the path $P_1$ with vertices $\{o,e,f,a\}$ is of length $3$ and $G_1+P_1$ is a central subgraph.
          Therefore, the structure~(ii) exists in $G$.
          
          Thus, when $|D\cap M|=2$, $G$ has either structure~(i) or~(ii), which implies the existence of an admissible ear decomposition by Lemma~\ref{lem:centralcycle1}.
          
          \begin{figure}
            \centering
            \includegraphics[keepaspectratio,width=0.9\textwidth]
            {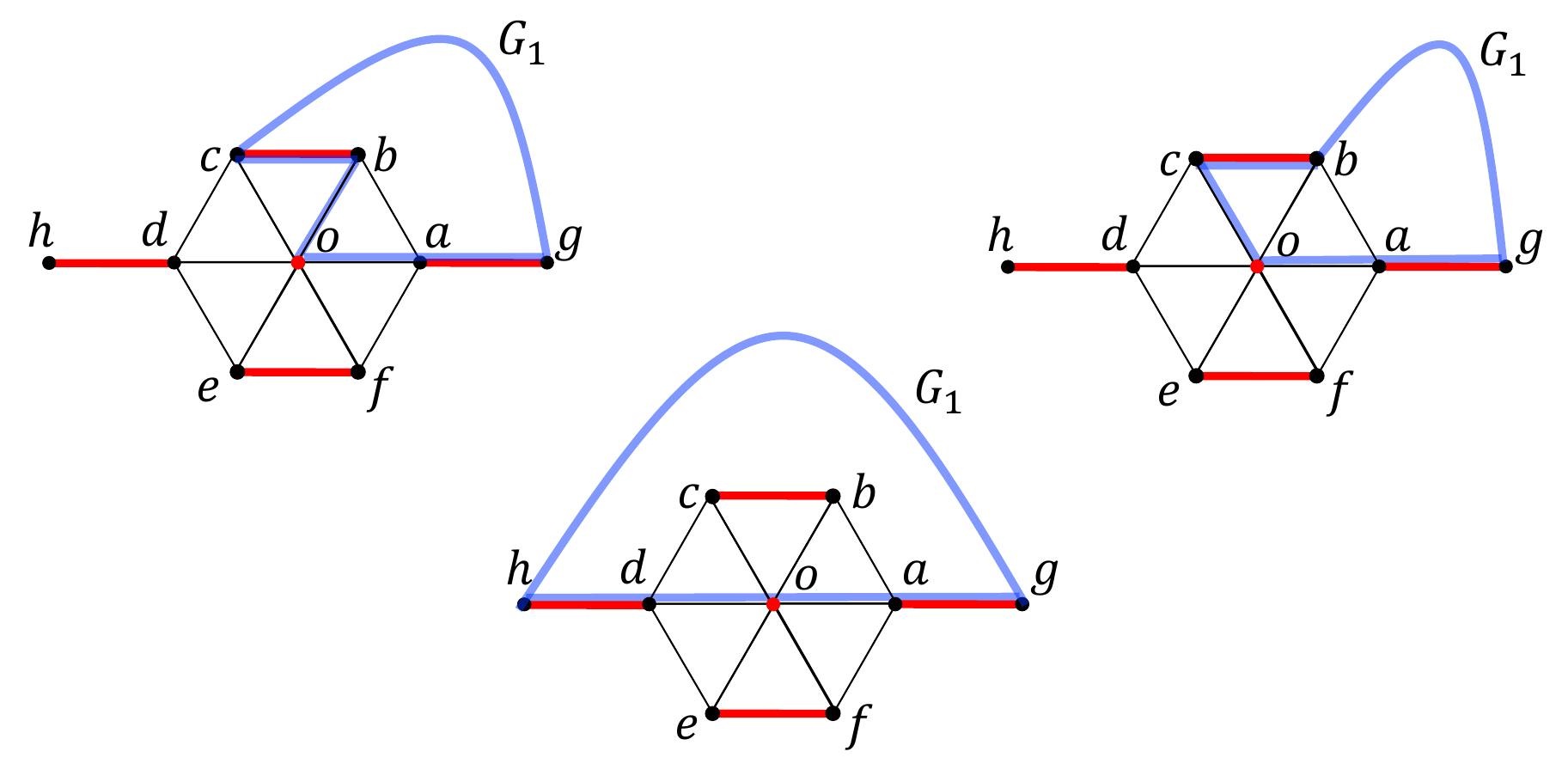}
            \caption{Case~(b) when $|D\cap M|=2$: The $3$ possibilities of $G_1$ through the edge $(o, a)$.}
            \label{fig:2_hexagon_G1}
          \end{figure}

\medskip
\noindent
\textbf{(c) $|D\cap M|=1$.}

          By symmetry, we may assume that $D\cap M=\{(b,c)\}$ as in Figure~\ref{fig:1_hexagon}~(left).
          Let us consider the edges of $M$ covering $a$ and $d$.
          Since we may assume that it has no central pentagons, there are $3$ possible configurations on the edges of $M$ as in Figure~\ref{fig:1_hexagon}~(middle).
          
          \begin{figure}
            \centering
            \includegraphics[keepaspectratio,width=1.0\textwidth]
            {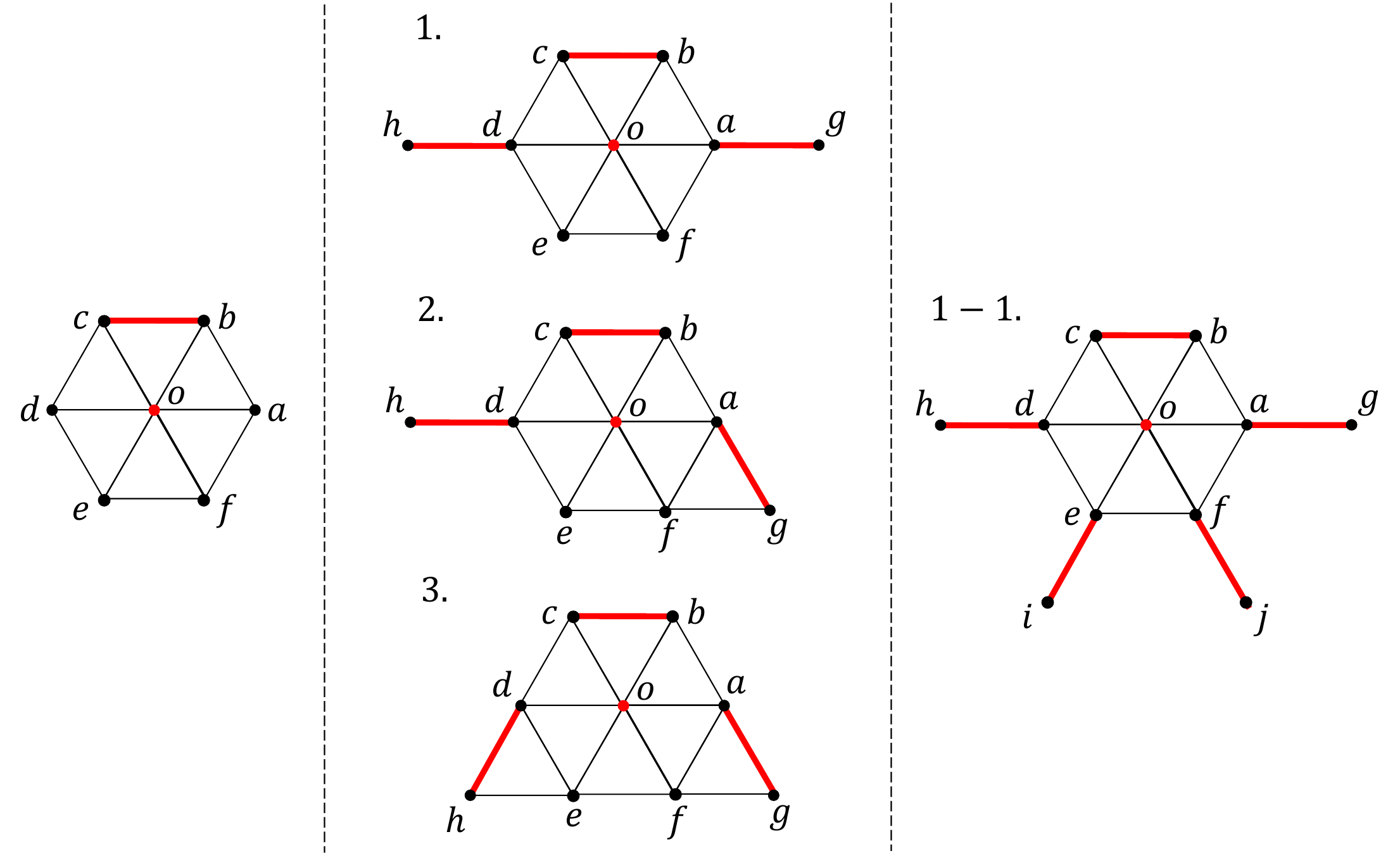}
            \caption{Case~(c) when $|D\cap M|=1$~(left).
            The $3$ possible patterns of matching edges covering $a$ and $d$~(middle).
            Assuming we have no central pentagon, it suffices to consider the right figure.}
            \label{fig:1_hexagon}
          \end{figure}

          We next consider the edges of $M$ covering $e$ and $f$.
          Then we can see that Patterns 2 and 3 in Figure~\ref{fig:1_hexagon}~(middle) have always central pentagons.
          Thus the structure~(i) exists.
          It remains to discuss Pattern 1 such that the vertices $e$ and $f$ have matching edges as in Figure~\ref{fig:1_hexagon}~(right).          
          By Lemma~\ref{lem:centralcycle2}, there exists an odd $M$-alternating cycle containing the edge $(o,a)$.
          Let $P$ be the path from $a$ to $o$ consisting of vertices $\{o,c,b,a\}$.

          Suppose that there exists an odd $M$-alternating cycle containing the edge $(o,a)$ such that it does not intersect with the path $P$.
          Let $G_1$ be such a cycle with minimum length.
          If the cycle $G_1$ has the edge $(d, o)$, then 
          $G_1$ with the path $P$ implies that the structure~(ii) exists.
          Similar arguments can be applied when the cycle $G_1$ has the edge $(e,o)$ or $(f,o)$.
          Thus we may assume that any odd $M$-alternating cycle containing the edges $(o,a)$ intersects with the path $P$.
          
          Suppose that there exists an odd $M$-alternating cycle containing the edges $(o,a)$ and $(c,o)$~(Figure~\ref{fig:C_bcoa}).
          Let $C$ be such an cycle with minimum length.
          The cycle $C$ has the edges $(b,c)$ and $(c,o)$.
          Let $Q$ be the subpath of $C$ between $a$ and $b$ consisting of vertices $\{b,c,o,a\}$, and $Q'$ be the subpath obtained from $C$ by removing the edges of $Q$.
          Also, let $M'=M\setminus\{(b,c)\}\cup\{(o,c)\}$.
          Then the cycle consisting of $Q'$ and the edge $(a,b)$ is an odd $M'$-alternating cycle $C'$, and $Q$ is an $M'$-alternating path such that the end vertices are adjacent on the cycle $C'$.
          The pair $C'$ and $Q$ yield the structure~(ii).
          
          \begin{figure}
            \centering
            \includegraphics[keepaspectratio,width=0.35\textwidth]
            {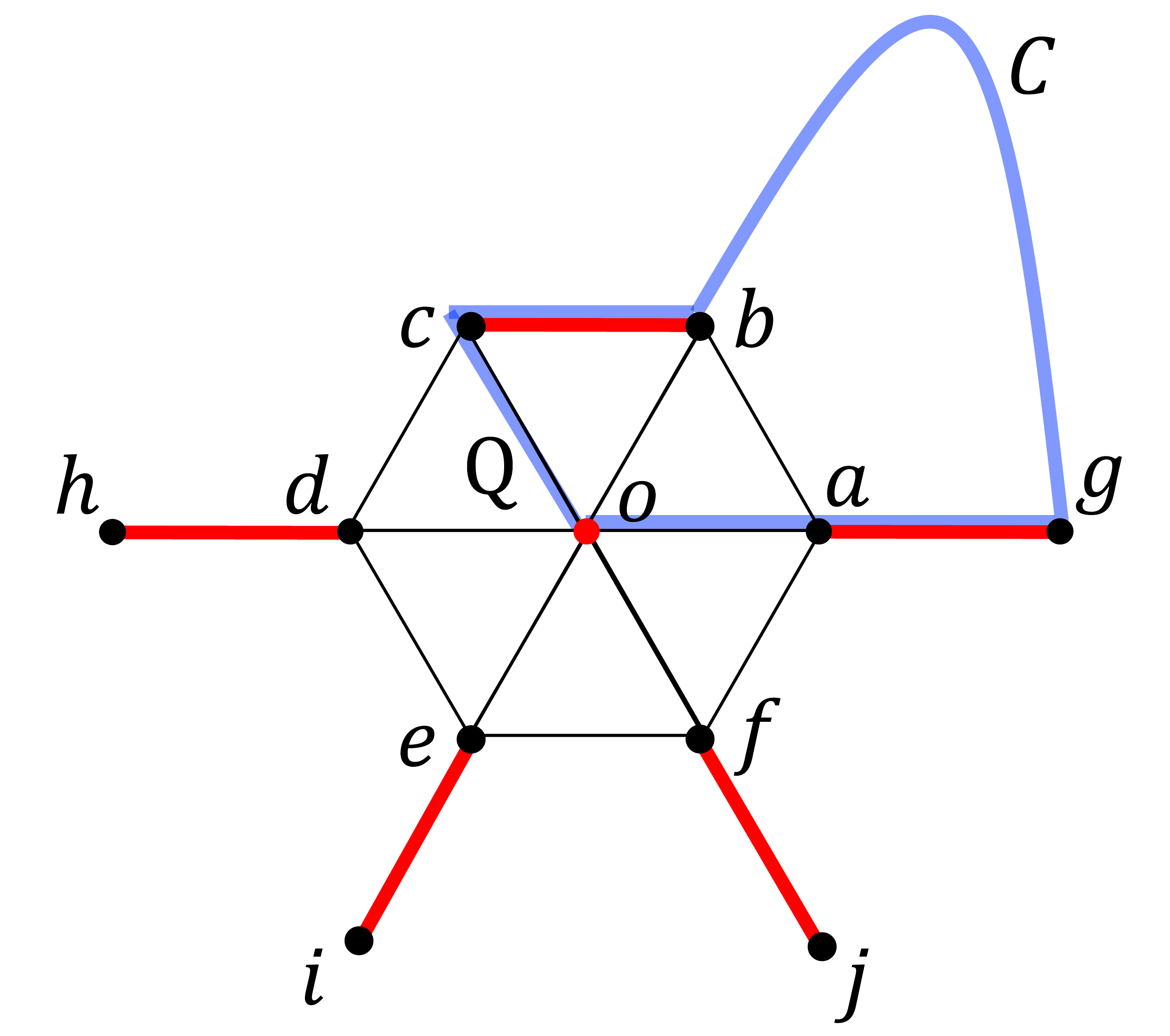}
            \caption{Case~(c) when the $M$-alternating cycle contains the edges $(a,o)$ and $(c,o)$.}
            \label{fig:C_bcoa}
          \end{figure}

          \begin{figure}
            \centering
            \includegraphics[keepaspectratio,width=0.35\textwidth]
            {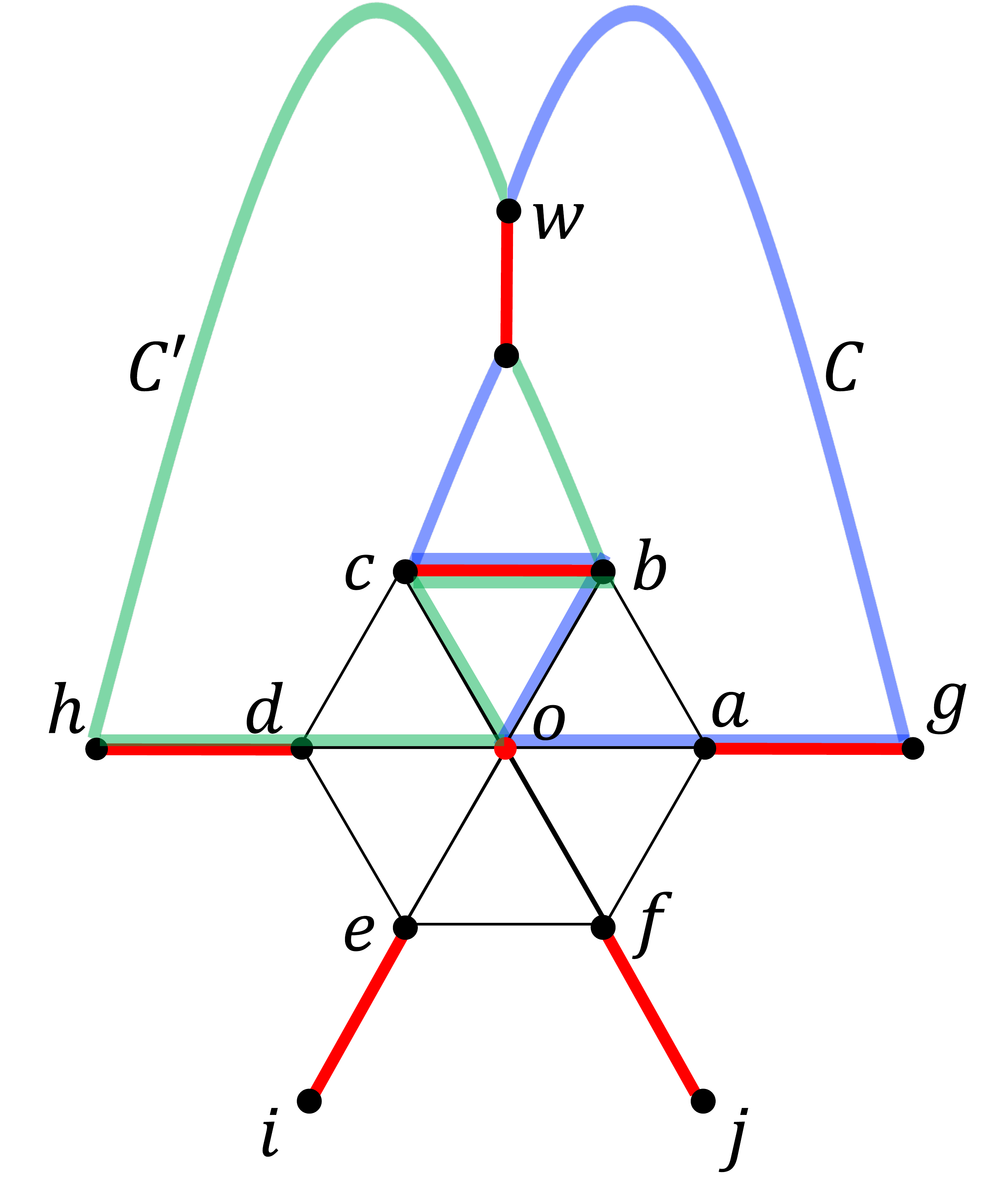}
            \caption{Case~(c) when the $M$-alternating cycle contains the edges $(a,o)$ and $(b,o)$.}
            \label{fig:1_hexagon_intersect}
          \end{figure}

          Therefore, we may suppose that any odd $M$-alternating cycle $C$ containing the edge $(o,a)$ has the edge $(b,o)$.
          This cycle $C$ contains the edge $(c,b)$~   (Figure~\ref{fig:1_hexagon_intersect}).
          Let $C'$ be an odd $M$-alternating cycle containing the edge $(o,d)$.
          Since we can apply the symmetrical argument to the vertex $d$, we may assume that $C'$ contains edges $(b,c$) and $(c,o)$.
          Since $C$ and $C'$ intersect by planarity and both are $M$-alternating cycles, 
          $C$ and $C'$ share at least one vertex.
          Let $w$ be the first vertex such that $C$ and $C'$ share with when traversing $C$ from $o$ through $a$.
          Let $Q$ be the cycle consisting of 
          the subpath of $C$ from $o$ to $w$ through $a$ and the subpath of $C'$ from $w$ to $o$ through $b$.
          Then $Q$ is an odd $M$-alternating cycle containing edges $(o,a)$ and $(c,o)$, which contradicts the assumption of this paragraph.
          
Thus, when $|D\cap M|=1$, $G$ has either structure~(i) or~(ii), which implies the existence of an admissible ear decomposition by Lemma~\ref{lem:centralcycle1}.

\medskip
\noindent
\textbf{(d) $D\cap M=\emptyset$.}

          The edges of $M$ adjacent to the vertices $a,b,c,d,e,f$ are placed as in the following $2$ patterns in Figure~\ref{fig:0_hexagon}.
          Let $g,h,i,j,k,l$ be the other end vertices of these edges of $M$.
          
          \begin{figure}
            \centering
            \includegraphics[keepaspectratio,width=0.6\textwidth]
            {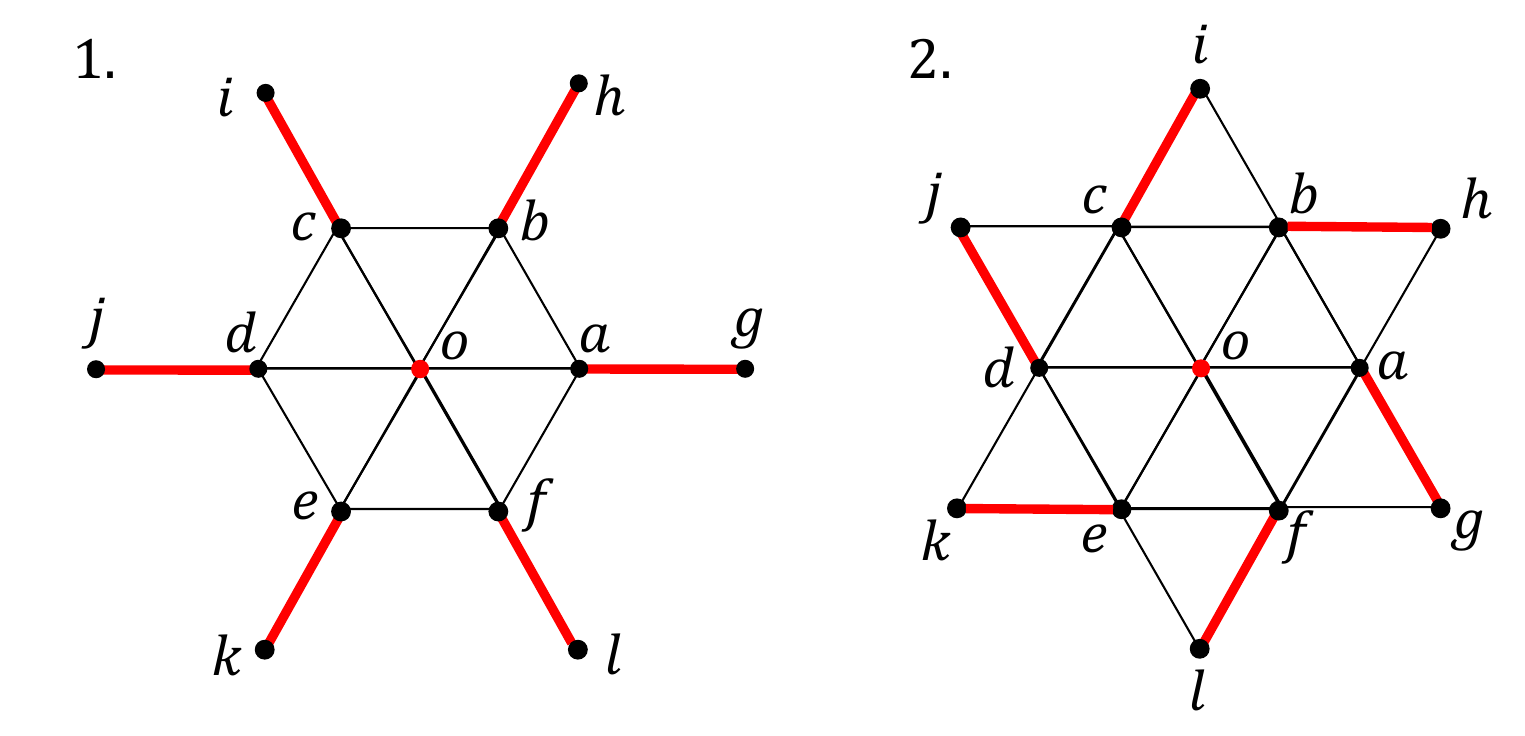}
            \caption{Case~(d) when $D\cap M=\emptyset$.}
            \label{fig:0_hexagon}
          \end{figure}

          We first show the following claim.
          
          \begin{claim}\label{clm:lastcase}
          There exist the adjacent vertices $u, v$ in $\{a, b,c,d,e,f\}$ such that $G$ has an odd $M$-alternating cycle containing the edges $(o,u)$ and $(o,v)$.    
          \end{claim}
          \begin{proof}
          By Lemma~\ref{lem:centralcycle2}, there exists an odd $M$-alternating cycle $C$ containing the edge $(o,a)$.
          By symmetry, we may suppose that $C$ has either $(b,o)$, $(c,o)$, or $(d,o)$.
          The claim holds if $C$ contains edge $(b,o)$.
          
            Suppose that there exists an odd $M$-alternating cycle $C$ containing the edges $(o,a)$ and $(c,o)$~(Figure~\ref{fig:0_hexagon_last}~(left)).
            Let $C'$ be an odd $M$-alternating cycle containing the edge $(o,b)$, which exists by Lemma~\ref{lem:centralcycle2}.
            Then, by planarity, $C$ intersects with $C'$.
            Let $w$ be the first common vertex in $C$ and $C'$ when traversing $C'$ from $o$ through $b$, and $Q$ be the subpath of $C'$ from $o$ to $w$ through $b$.
            Then the union of $C$ and $Q$ forms $2$ cycles, one through $(o, a)$ and $(o, b)$ and the other through $(o, b)$ and $(o, c)$.
            We see that one of them is an $M$-alternating cycle of odd length.
            Hence there exists an odd $M$-alternating cycle that satisfies  the condition of the claim, by taking either $\{u, v\}=\{a, b\}$ or $\{u, v\}=\{b, c\}$.
            
            Therefore, we may suppose that any odd $M$-alternating cycle containing the edge $(o,a)$ has the edge $(d,o)$.
            Moreover, by applying the symmetrical argument to the vertex $d$,
            we may suppose that any odd $M$-alternating cycle containing the edge $(o,d)$ has the edge $(a,o)$.
            Let $C'$ be an odd $M$-alternating cycle containing the edge $(o,b)$~(Figure~\ref{fig:0_hexagon_last}~(right)).
            First assume that $C'$ intersects with $C$.
            Then, similarly to the previous paragraph, the union of $C$ and $C'$ forms $2$ cycles, and one of them is $M$-alternating.
            If the cycle through edges $(o,a)$ and $(o,b)$ is $M$-alternating, then 
            the claim holds.
            Otherwise, if the cycle through edges $(o,b)$ and $(o,d)$ is $M$-alternating, it contradicts the assumption of this paragraph.
            Therefore, we may suppose that $C'$ does not intersect with $C$.
            Then $C'$ contains the edge $(c,o)$, and $C'$ satisfies the condition of the claim.
          \end{proof}

          \begin{figure}
            \centering
            \includegraphics[keepaspectratio,width=0.7\textwidth]
            {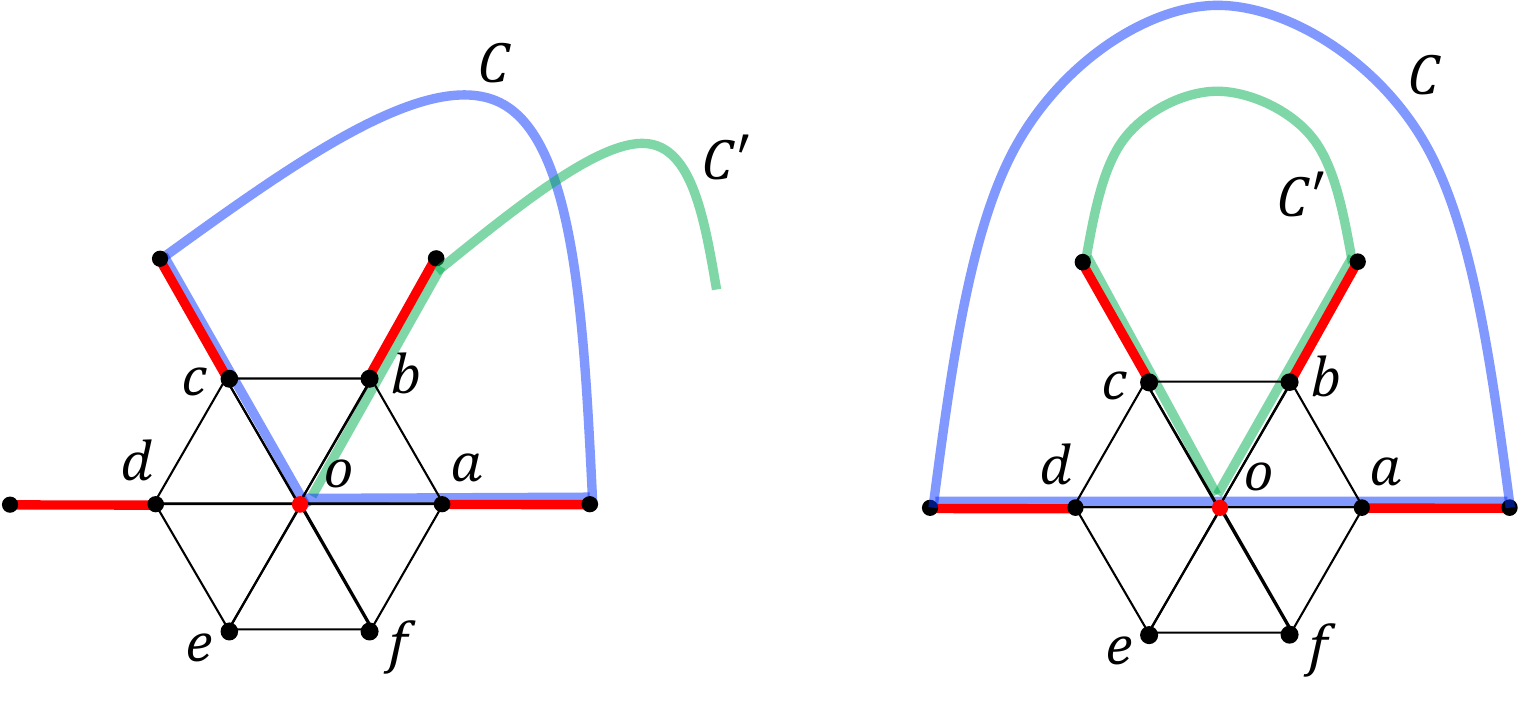}
            \caption{Proof of Claim~\ref{clm:lastcase}.}
            \label{fig:0_hexagon_last}
          \end{figure}

            Let $C$ be an $M$-alternating cycle found in Claim~\ref{clm:lastcase}.
            By symmetry, we may assume that $C$ has the edges $(o,a)$ and $(o,b)$.
            Let $C'=C\setminus\{(o,a), (o,b)\}\cup\{(a,b)\}$.
            Then $C'$ is an even $M$-alternating cycle.
            Define a nearly perfect matching $M'$ as $M'=M\triangle E(C')$.
            Then we can reduce to Case~(c) when we have exactly one matching edge on the boundary cycle $D$.            
            Therefore, it follows from Case~(c) that there is an admissible ear decomposition.

    Therefore, in each case, we can find an admissible ear decomposition.
\end{proof}



\section{Reconfiguration on Locally-Connected Graphs}
\label{sec:locally_connect}

In this section, we consider a triangular grid graph which is locally-connected.
  A vertex $u$ in a graph $G$ is said to be \textit{locally-connected} if the subgraph $G[N(u)]$ is connected.
  A graph $G$ is called \textit{locally-connected} if every vertex is locally-connected.
It is observed that, if $G$ is locally-connected, then it is $2$-connected, since, for a cut vertex $u$, the subgraph $G[N(u)]$ is disconnected.

The following theorem says that a locally-connected triangular grid graph, except for the Star of David graph~(Figure~\ref{fig:davide}), is Hamiltonian.
We note that, since their proof is constructive, a Hamilton cycle can be found in polynomial time.

\begin{theorem}[Gordon, Orlovich, and Werner~\cite{ref:triangle}]
  \label{thm:hamiltonian}
  Let $G$ be a triangular grid graph. 
  If $G$ is locally-connected, but not isomorphic to the Star of David graph, then it has a Hamilton cycle.
\end{theorem}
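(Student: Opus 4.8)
The plan is to prove, by induction on $|V(G)|$, a statement a little stronger than the theorem: that a connected locally-connected triangular grid graph $G$ on at least three vertices which is not the Star of David graph (Figure~\ref{fig:davide}) is not only Hamiltonian, but has a Hamilton cycle through any prescribed edge on its outer boundary, and a Hamilton path between the two endpoints of any prescribed length-$2$ path on the boundary --- with a finite list of small exceptional instances for which only a weaker conclusion is claimed. The strengthening is what lets the induction close: after deleting a small boundary piece and re-attaching it, the cycle or path furnished recursively must use a prescribed edge or join a prescribed pair of vertices. Two facts are used throughout. First, by the remark opening Section~\ref{sec:locally_connect}, local connectivity implies $2$-connectivity, hence minimum degree at least $2$, and (for $|V(G)|\ge 3$) an outer boundary that is a simple cycle. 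Second, traversing that cycle with the interior on the left, the turning angles sum to $360^\circ$ while each equals $180^\circ$ minus an interior angle equal to $60^\circ\cdot(\deg-1)$; hence some boundary vertex has degree $2$ (a ``$60^\circ$ corner'': the apex of a single triangle whose other two vertices are adjacent) or degree $3$ (a ``$120^\circ$ corner'' $v$, for which $G[N(v)]$ is a path whose two ends are the boundary neighbours of $v$ and whose middle vertex is interior). For the base cases I would go through all locally-connected triangular grid graphs up to slightly beyond the size of the Star of David graph: that graph is the unique non-Hamiltonian one, and the finitely many exceptional instances --- for example the Star of David graph with one tip deleted, paired with the boundary edge that closes the vacated triangle --- are recorded here.

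Suppose first that $G$ has a degree-$2$ corner $v$, with $N(v)=\{a,b\}$; then $a\sim b$, and $2$-connectivity forces $\deg(a),\deg(b)\ge 3$ (else $a$ or $b$ is a cut vertex). Let $G'=G-v$. Then $G'$ is connected (as $v$ is not a cut vertex), and locally-connected: in $G[N(a)]$ the vertex $v$ is merely a leaf hanging off $b$, and symmetrically at $b$, so removing it keeps every closed neighbourhood connected, while $\deg(a),\deg(b)\ge 3$ keeps every degree at least $2$. Moreover $G'$ is not the Star of David graph, since otherwise $G$ would be the Star of David graph with one added vertex adjacent to exactly two adjacent vertices, which the triangular lattice does not accommodate: in the Star of David graph every edge already has both of its incident triangles filled, and filling the unique empty triangular cell beside an edge always yields a new vertex of degree $3$, not $2$. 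The edge $ab$ lies on the boundary of $G'$ (deleting the corner $v$ merges its triangle into the outer face), so by the induction hypothesis $G'$ has a Hamilton cycle through $ab$ (checking $(G',ab)$ is not exceptional); replacing $ab$ by the path $a$-$v$-$b$ gives a Hamilton cycle of $G$. A Hamilton cycle through an arbitrarily prescribed boundary edge of $G$ is obtained by picking a degree-$2$ corner whose triangle avoids that edge, and a prescribed Hamilton path analogously.

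Suppose now that $G$ has no degree-$2$ vertex. Then the turning-angle count yields a $120^\circ$ corner $v$, with $N(v)=\{a,b,c\}$, $G[N(v)]$ the path $a$-$b$-$c$, and $b$ the interior neighbour. If $G-v$ is still locally-connected --- the only danger being at $b$, where $v$ lies strictly between $a$ and $c$ in the cyclic order of $N(b)$ --- I finish as before, re-inserting $v$ by rerouting the recursive cycle through $a$-$v$-$c$ in place of one of the edges $ab,bc$. The resistant case is $\deg(b)=3$, i.e.\ $N(b)=\{a,v,c\}$: then $\{a,b,v,c\}$ spans a diamond (the two triangles on the common edge $bv$) attached to the rest of $G$ only through $a$ and $c$, and no single-vertex deletion preserves local connectivity; instead I remove a larger boundary piece and use that a Hamilton cycle of $G$ must cross the diamond as $a$-$v$-$b$-$c$ or $a$-$b$-$v$-$c$. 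If $\deg(a),\deg(c)\ge 4$, then $G$ is Hamiltonian iff $G-\{v,b\}$ has a Hamilton $a$-$c$ path, which the Hamilton-path part of the induction hypothesis provides; if instead $\deg(a)=3$ (say), so that the diamond meets the rest of $G$ only through the edges joining $a$ and $c$ to outside vertices $x$ and $y$, then $G$ is Hamiltonian iff $G-\{a,b,v,c\}$ has a Hamilton $x$-$y$ path --- equivalently a Hamilton cycle through $xy$ when $x\sim y$. In each sub-case the remainder is a strictly smaller connected locally-connected triangular grid graph distinct from the Star of David graph (by the same ``no room to attach'' reasoning), so the induction applies; this is precisely why Hamilton paths between prescribed boundary vertices have to travel inside the inductive statement.

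The main obstacle --- and where most of the work lies --- is the regime with no degree-$2$ vertex, together with the careful accounting of local connectivity under deletion. One must enumerate the finitely many local boundary configurations of the triangular lattice at a $120^\circ$ corner, decide in each which small boundary piece (a corner, a diamond, or a short chain of corners and triangles) can be removed so that the remainder stays connected, locally-connected, and distinct from the Star of David graph, and check that the removed piece admits a Hamilton path between the two vertices through which it re-attaches to the recursively obtained cycle or path. Folded into this are pinning down the finite exceptional set in the strengthened statement, and verifying that the Star of David graph is the unique global obstruction --- the unique graph for which every available reduction either destroys local connectivity or reproduces the Star of David graph. Since each reduction is explicit and strictly decreases $|V(G)|$, the argument is constructive and converts into a polynomial-time algorithm producing a Hamilton cycle, as noted after the theorem statement.
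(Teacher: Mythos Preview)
The paper does not prove Theorem~\ref{thm:hamiltonian}: it is quoted as an external result of Gordon, Orlovich, and Werner~\cite{ref:triangle} and used as a black box, with only the remark that their proof is constructive and yields a polynomial-time algorithm. There is therefore no ``paper's own proof'' to compare your attempt against.

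As for your sketch itself, the overall shape---induction on $|V(G)|$ with a strengthened hypothesis (Hamilton cycle through a prescribed boundary edge, Hamilton path between prescribed boundary endpoints), removing a low-degree boundary vertex or a small boundary piece and re-inserting it---is the natural strategy and is in the spirit of the cited paper. But several steps are not yet pinned down. You never specify the ``finite list of small exceptional instances'' for the strengthened statement, so the induction cannot actually be checked to close; and in the degree-$3$ case with $\deg(b)=3$, after deleting $\{v,b\}$ you assert the remainder is locally-connected when $\deg(a),\deg(c)\ge 4$, but removing two consecutive neighbours from $N(a)$ can disconnect $G[N(a)\setminus\{v,b\}]$ depending on the local geometry, and you do not argue this. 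You also do not address holes: a locally-connected triangular grid graph in this paper may have interior holes, and your deletions must not create a vertex whose neighbourhood becomes disconnected across a hole boundary. Finally, the claim that re-attaching a degree-$2$ vertex to the Star of David graph is impossible in the lattice is not quite argued (the Star of David has tips of degree~$2$, and boundary edges whose outside lattice triangle is empty). None of these is fatal to the plan, but each needs an explicit case analysis before the proof is complete.
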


It follows from the above theorem that a locally-connected triangular grid graph, except for the Star of David graph, is factor-critical, as it has an odd and proper ear decomposition starting from a Hamilton cycle such that all the ears are single edges.
On the other hand, the Star of David graph is not factor-critical, and hence it is not reconfigurable~(see also~\cite{ref:gourds}).

The main theorem of this section is the following.

\begin{theorem}
  \label{thm:locally_algorithm}
Let $G=(V, E)$ be a triangular grid graph with $2n+1$ vertices.
If $G$ is locally-connected, but not isomorphic to the Star of David graph,  
then $G$ is reconfigurable. 
Moreover, a reconfiguration sequence using $O(n^3)$ \textsf{slide} operations can be found in polynomial time.
\end{theorem}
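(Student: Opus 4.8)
The plan is to reduce the statement to the machinery already developed for admissible ear decompositions, but starting from a Hamilton cycle rather than a pentagon or a diamond-augmented odd cycle. By Theorem~\ref{thm:hamiltonian}, since $G$ is locally-connected and not the Star of David graph, $G$ has a Hamilton cycle $H$, which is an odd cycle on all $2n+1$ vertices. Then $G_1=H$ together with all remaining edges of $G$ taken one at a time as ears of length~$1$ gives an odd and proper ear decomposition $G_1,\ldots,G_k=G$; in particular $G$ is $2$-connected and factor-critical, so all earlier results apply. The first step is therefore to fix this ear decomposition and, using Lemma~\ref{lem:alignedear}, reconfigure the initial placement $p$ and the target placement $q$ to placements $p'$ and $q'$ aligned with it, in $O(kn)=O(n^2)$ \textsf{slide} operations each (note $k=O(n^2)$ since $G$ is planar with $O(n)$ vertices, hence $O(n)$ edges, so in fact $k=O(n)$ and this is $O(n^2)$). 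It then suffices to reconfigure $p'$ to $q'$, both aligned with $G_1,\ldots,G_k$.

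The key point is that the base graph $G_1=H$ is itself reconfigurable: it is a single odd cycle on $2n+1$ vertices, so any two placements aligned with it are of the form $p_{j,h}$ and $p_{j',h'}$ in the notation of Observation~\ref{obs:rotation}, and hence can be reconfigured to each other in $O(n^2)$ \textsf{slide} operations by rotation along $H$. Now I would invoke Lemma~\ref{lem:phase2} iteratively: since $G_1$ has at least $5$ vertices (as $n\ge 2$; the cases $n\le 2$ are trivial) and is reconfigurable with $t=O(n^2)$, the lemma yields that $G_2$ is reconfigurable, then $G_3$, and so on up to $G_k=G$. Unwinding the recursion from Lemma~\ref{lem:phase2} naively gives a bound like $f(i+1)=O(n^2(f(i)+n))$, which is exponential — exactly the blow-up remarked after Lemma~\ref{lem:reconfear}. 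So the polynomial bound cannot come from blindly iterating Lemma~\ref{lem:phase2}; this is the main obstacle.

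To get the $O(n^3)$ bound I would instead argue directly, exploiting that here \emph{every} ear has length~$1$. Reconfiguring $p'$ to $q'$ amounts to permuting the labels of the $n$ pieces while all pieces stay ``on'' the cycle-plus-chords structure; concretely, I would sort the pieces into their target positions along the Hamilton cycle $H$ by a bubble-sort-style procedure, where the elementary move is swapping two pieces that lie on two consecutive edges of $H$. Such a swap can be performed using a single length-$1$ ear (a chord $e$ of $H$ spanning the two matched edges, which exists because $G[N(u)]$ connected forces enough chords — in fact the two consecutive edges of $H$ together with a chord form a triangle, and a diamond-type local move on that triangle realizes the transposition) in $O(1)$ \textsf{slide} operations, plus $O(n)$ rotation operations along $H$ to bring the relevant pair into a fixed position. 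With $O(n^2)$ swaps and $O(n)$ \textsf{slide}s per swap, this is $O(n^3)$ \textsf{slide} operations; combined with the $O(n^2)$ alignment cost this gives the claimed $O(n^3)$ total. Everything here — finding $H$ (Theorem~\ref{thm:hamiltonian}), computing the alignment paths (symmetric differences of nearly perfect matchings, found greedily), and running the sort — is polynomial-time, giving the ``found in polynomial time'' clause. The delicate step to check carefully is that local connectivity really does supply, for every pair of consecutive edges of $H$ that we need to swap, a chord realizing the transposition without disturbing already-placed pieces; I would handle this by noting that after rotating, the pair to be swapped can always be placed on a triangle of $G$ incident to a fixed vertex, and a diamond/triangle move there effects the swap.
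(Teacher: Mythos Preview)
Your overall architecture---find a Hamilton cycle $H$, align both placements with $H$, then bubble-sort the piece labels around $H$ using a local swap primitive---matches the paper's. But the swap primitive is where the real content lies, and your proposal has two genuine gaps there.

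First, your claim that ``the base graph $G_1=H$ is itself reconfigurable'' is false. On a bare odd cycle the only reconfigurations are rotations, so the reachable placements from $p$ are exactly the cyclic shifts $p_{j,h}$ of Observation~\ref{obs:rotation}; you cannot realise an arbitrary permutation of the labels. This is why the paper never tries to feed $H$ into Lemma~\ref{lem:phase2} as a reconfigurable base.

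Second, and more seriously, a single triangle cannot swap two consecutive pieces. Two pieces occupy four vertices; together with the exposed vertex that is five, so a triangle (three vertices) is simply too small. What the paper actually uses is a \emph{diamond} $\{a,b,c,d\}$ sitting against $H$ in a specific way (Lemma~\ref{lem:heikou_cycle} / Proposition~\ref{prop:possible_cycle}): the two chords $(a,c)$ and $(a,d)$ create two nested odd cycles through $H$, and rotating along one and then the other realises a single adjacent transposition in $O(n)$ slides. Local connectivity at a vertex $u$ only tells you that $G[N(u)]$ is connected; it does \emph{not} guarantee that the two $H$-neighbours of $u$ are themselves adjacent, nor that the parity conditions of Proposition~\ref{prop:possible_cycle} hold. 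Establishing that some suitable diamond exists is the content of Lemma~\ref{lem:LC_localstructure}, whose proof is a nontrivial case analysis on the dual graph of $G$ relative to $H$ (Lemmas~\ref{lem:3degree_tree}--\ref{lem:substructure}); in one sub-case the paper even has to replace $H$ by a modified Hamilton cycle $H'$. Your sentence ``local connectivity really does supply \ldots\ a chord realizing the transposition'' is precisely the step that needs this machinery, and your sketch does not supply it.
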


\subsection{Proof Overview}

The proof for Theorem~\ref{thm:locally_algorithm} exploits a Hamilton cycle in $G$ to design a reconfiguration sequence.
We note that the proof for $2$-connected graphs with no holes by Hamersma et al.~\cite{ref:gourds} 
 also uses a Hamilton cycle.
Our proof refines their proof so that we can deal with holes.

Suppose that we are given two placements $p$ and $q$. 
The proposed algorithm to reconfigure $p$ to $q$ consists of the following three phases.

\begin{enumerate}
  \item Reconfigure $p$ to a placement aligned with a Hamilton cycle $H$, denoted by $p'$.
  \item  Reconfigure $p'$ to another placement $q'$ aligned with $H$.
  \item  Reconfigure $q'$ to the target placement $q$.
\end{enumerate}

In Phase~1, we first reconfigure  the initial placement $p$ to a placement aligned with the Hamilton cycle $H$, which is denoted by $p'$.
Applying the same procedure to the target placement $q$, we obtain a placement aligned with $H$, denoted by $q'$.
We then reconfigure $p'$ to $q'$ in Phase~2.
In Phase~3, the placement $q'$ can be reconfigured to the target placement $q$ by taking the inverse of Phase~1 operations for $q$.

It was shown in Hamersma et al.~\cite{ref:gourds} that we can reconfigure a placement to a placement aligned with a Hamilton cycle.
We recall that a locally-connected graph is $2$-connected.

\begin{theorem}[Hamersma et al.~\cite{ref:gourds}]
  \label{thm:align}  
    Let $G$ be a $2$-connected triangular grid graph with $2n+1$ vertices.
    Then we can reconfigure any placement to a placement aligned with a Hamilton cycle $H$, using $O(n^2)$ \textsf{slide} operations.
\end{theorem}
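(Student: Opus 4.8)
The plan is to reduce ``aligned with $H$'' to the purely combinatorial goal of getting every piece onto the Hamilton cycle, and then to place the pieces on $H$ one at a time. The first observation is that a matching of size $n$ inside the odd cycle $H=C_{2n+1}$ is automatically a near-perfect $M$-alternating matching of $H$ (it covers $2n$ of the $2n+1$ vertices, leaving a single exposed vertex around which the edges alternate). Hence a placement is aligned with $H$ if and only if all $n$ pieces lie on edges of $H$, and it suffices to drive the potential $\Phi(M_p)$, defined as the number of pieces lying on chords (edges of $G$ not in $H$), down to $0$. Note also that the existence of $H$ makes $G$ factor-critical — take the ear decomposition whose first graph is $H$ and whose remaining ears are single chords — so Lemma~\ref{lem:exposed} is available to relocate the exposed vertex in $O(n)$ \textsf{slide} operations.

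First I would fix a traversal $H=(v_1,\dots,v_{2n+1})$ and build the target matching $\{(v_1,v_2),(v_3,v_4),\dots,(v_{2n-1},v_{2n})\}$ greedily from one end, maintaining the invariant that after step $i$ the prefix edges $(v_1,v_2),\dots,(v_{2i-1},v_{2i})$ carry pieces and are never touched again, while the exposed vertex and all remaining chord-pieces live in the ``workspace'' $W_i=G[\{v_{2i+1},\dots,v_{2n+1}\}]$. To advance from $i$ to $i{+}1$, I would (a) route the hole inside $W_i$ to $v_{2i+1}$ along an $M_p$-alternating path, and then (b) pull some piece covering $v_{2i+2}$ onto the cycle edge $(v_{2i+1},v_{2i+2})$ by a single \textsf{slide}: a piece $(v_{2i+2},x)$ pivots to $(v_{2i+2},v_{2i+1})$, exposing $x\in W_{i+1}$. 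The engine for rearranging the workspace so that $v_{2i+2}$ is covered by a pivotable neighbor is rotation along short $M_p$-alternating odd cycles, in particular triangles of the grid, via Observation~\ref{obs:rotation}. Since each of (a) and (b) costs $O(n)$ \textsf{slide} operations and there are $n$ steps, the total is $O(n^2)$, matching the claimed bound.

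The hard part will be justifying step (a): the induced workspace $W_i$ need not be $2$-connected or factor-critical, so one cannot blindly invoke Lemma~\ref{lem:exposed} inside it. The key is that $W_i$ always contains the sub-path $v_{2i+1}v_{2i+2}\cdots v_{2n+1}$ of $H$ (so $W_i$ is connected) and that the restricted matching is near-perfect on the odd vertex set of $W_i$; using the triangular-grid structure together with the global $2$-connectivity of $G$, I would show that the hole can always be brought to $v_{2i+1}$ within $W_i$ along an alternating path, and that there is a piece incident to $v_{2i+2}$ that can be pivoted in without re-entering the solved prefix. The delicate cases are those where the relevant triangles along $(v_{2i+1},v_{2i+2})$ are missing or already occupied — precisely the situations created by holes and by the outer boundary — and these are handled by selecting, among the constantly many local configurations around $(v_{2i+1},v_{2i+2})$, an alternating triangle or a slightly longer alternating odd cycle through the hole to rotate along. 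Verifying that such a short, net-progress maneuver always exists, i.e.\ that $\Phi$ can be decreased by one in $O(n)$ \textsf{slide} operations without disturbing the solved prefix, is the crux, and it is exactly where both the $2$-connectivity and the special geometry of the triangular lattice are used.
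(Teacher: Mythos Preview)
The paper does not prove this statement; Theorem~\ref{thm:align} is quoted from Hamersma et al.\ and used as a black box, so there is no in-paper argument to compare against directly.

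Your greedy-prefix strategy is sound and does give $O(n^2)$, but you have misdiagnosed where the work lies. Step~(a) is \emph{not} hard and needs neither factor-criticality of $G[W_i]$ nor any triangle or local-configuration analysis. The sub-path $v_{2i+1}v_{2i+2}\cdots v_{2n+1}$ of $H$ already furnishes a near-perfect matching $M'=\{(v_{2i+2},v_{2i+3}),\dots,(v_{2n},v_{2n+1})\}$ of $G[W_i]$ exposing exactly $v_{2i+1}$. The symmetric difference of the current matching $M_p|_{W_i}$ with $M'$ then contains an alternating path from the hole to $v_{2i+1}$ lying entirely inside $G[W_i]$; sliding along it costs at most $n$ moves and never touches the solved prefix. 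This is precisely the mechanism behind Lemma~\ref{lem:exposed}, applied not via factor-criticality but via one explicit witness matching. Step~(b) is then a single \textsf{slide}, and your own argument already shows the new hole lands in $W_{i+1}$. The closing paragraph about ``constantly many local configurations'' and rotating short odd cycles is therefore unnecessary; as written it is not a proof, and invoking it suggests a difficulty that is not there.

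For comparison, the same conclusion also falls out of the paper's Lemma~\ref{lem:alignedear} by taking the ear decomposition $G_1=H$ with every chord as a length-$1$ ear: each intermediate $G_j$ still contains $H$ and is hence factor-critical, so Lemma~\ref{lem:exposed} applies at every step, and planarity bounds the number of chords by $O(n)$.
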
  

Therefore, Phases~1 and 3 can be implemented in $O(n^2)$ \textsf{slide} operations.
Thus it suffices to implement Phase~2 to reconfigure any placement aligned with $H$ to another placement aligned with $H$.
This step is realized by the following theorem.

\begin{theorem}\label{thm:lc_main}
Let $G$ be a triangular grid graph with $2n+1$ vertices, which is locally-connected, but not isomorphic to the Star of David graph.
Let $H$ be a Hamilton cycle.
For a pair of placements $p$, $q$ aligned with $H$, 
we can reconfigure $p$ to $q$ in $O(n^3)$ \textsf{slide} operations. 
\end{theorem}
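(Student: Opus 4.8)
The plan is to reconfigure $p$ to $q$ by sorting the pieces along the Hamilton cycle $H$ one at a time, using short "detours" through the triangular structure to swap neighbouring pieces. Since both $p$ and $q$ are aligned with $H$, each of them places its $n$ pieces on alternating edges of $H$ with a single exposed vertex; thus the data of $p$ (relative to $H$) is a cyclic sequence of $n$ labels together with the position of the hole, and similarly for $q$. Reconfiguring $p$ to $q$ therefore amounts to realising an arbitrary permutation of the pieces around $H$ by \textsf{slide} operations that, at the end, restore the alignment with $H$. By Observation~\ref{obs:rotation}, rotating along $H$ (an odd cycle of length $2n+1$) is free and costs $O(n)$ per unit of rotation, so the only real work is to implement a single \emph{transposition} of two pieces that are adjacent along $H$.

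First I would establish the key local lemma: given a placement aligned with $H$ and two pieces $e_1,e_2$ occupying consecutive matching edges of $H$, one can swap their labels and return to a placement aligned with $H$ using $O(1)$ \textsf{slide} operations, provided there is a triangle "available" near that stretch of $H$. Concretely, after rotating so that the hole sits just past $e_2$, the three vertices covering $e_1,e_2$ plus the hole span two consecutive edges of $H$; because $G$ is a triangular grid, at least one of the endpoints of these edges has a neighbour off $H$ forming a triangle with an edge of $H$ — this is where local connectivity (hence $2$-connectivity, hence the absence of the degenerate Star-of-David obstruction) is used, guaranteeing such a triangle exists somewhere along every sufficiently long arc of $H$. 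Sliding $e_1$ or $e_2$ onto the off-cycle vertex, rotating a bounded amount, and sliding back performs the transposition; this is essentially the same $3$-cycle gadget used by Hamersma et al.~\cite{ref:gourds}, but I must check it still works when the adjacent face is a hole rather than a triangle, by choosing the triangle on the side of $H$ that is not a hole.

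With the swap gadget in hand, the algorithm for Phase~2 is a selection/bubble sort: for $j=1,\dots,n$, rotate along $H$ so the piece that should end up in position $j$ is adjacent (along $H$) to position $j$, then apply a chain of $O(n)$ adjacent transpositions to bring it into place, never disturbing positions $1,\dots,j-1$ (they get carried along by the global rotations and restored). Each transposition costs $O(1)$ \textsf{slide} operations and each rotation costs $O(n)$, and there are $O(n)$ rotations and $O(n^2)$ transpositions in total, giving $O(n^3)$ \textsf{slide} operations; finally a last $O(n)$ rotation aligns the hole with its target position. To turn "found in polynomial time" into a statement, I note each step (locating the next triangle, computing the rotation amount) is a trivial scan over the $O(n)$-size description, so the whole sequence is produced in polynomial time.

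The main obstacle I anticipate is the swap gadget in the presence of holes: near a hole the usual triangle adjacent to an edge of $H$ may be missing, and I need to argue that one can always rotate $p$ along $H$ to bring the pair of pieces to a location where a usable triangle is present on the correct side. This requires a structural observation — that in a locally-connected triangular grid graph other than the Star of David, every vertex of $H$ lies on a triangle two of whose edges, or at least one of whose edges together with an incident edge of $H$, are in $G$, and that these triangles are "dense enough" along $H$ that the hole can always be parked next to one. If a single pair of pieces can get stuck between two long holes, I would instead move the hole (the exposed vertex) itself to a good triangle first and perform the swap there, which is legitimate because the identity of the exposed vertex is not part of the target constraint until the very end. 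Handling this case analysis cleanly, and verifying the bubble-sort bookkeeping keeps positions $1,\dots,j-1$ fixed through the rotations, is the part of the argument that needs the most care.
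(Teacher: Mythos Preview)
Your high-level strategy---reduce Phase~2 to a bubble sort along $H$, implemented via a local swap gadget plus rotations---is exactly the paper's strategy. However, two concrete points need correction.

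First, the swap gadget cannot cost $O(1)$ in the way you describe. You propose sliding a piece onto ``a neighbour off $H$'', but $H$ is a Hamilton cycle: \emph{every} vertex lies on $H$, so there is no off-cycle vertex to park a piece on. Any third vertex of your triangle is already covered by some other piece of the placement, and displacing that piece cascades. The paper's gadget (Proposition~\ref{prop:possible_cycle}) instead uses a \emph{diamond} $\{a,b,c,d\}$ whose diagonal $(a,c)$ is a chord of $H$ and whose sides satisfy a parity condition; the swap is performed by rotating along two auxiliary odd cycles built from subpaths of $H$ together with edges of the diamond (Claims~\ref{clm:lc_swap1} and~\ref{clm:lc_swap2}), and costs $O(n)$, not $O(1)$. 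With an $O(n)$ swap and $O(n^2)$ swaps in bubble sort one still gets $O(n^3)$, so the final bound survives, but your cost accounting and the mechanism are both off.

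Second, and more seriously, the structural claim you flag as ``the main obstacle'' is indeed the crux, and your proposed fix (rotate the pair to wherever a triangle sits) is not yet a proof. What one must show is that some Hamilton cycle $H$ and some diamond satisfy one of two specific incidence patterns with $H$ (Lemma~\ref{lem:heikou_cycle}); a bare triangle is not enough, because the swap needs the extra chord to form the second auxiliary cycle. The paper proves existence of such a diamond (Lemma~\ref{lem:LC_localstructure}) via a nontrivial case analysis on the dual forest of $H$: it shows the two dual forests on the triangle faces (inside/outside $H$) have a component with at least two vertices, uses a leaf-structure lemma to locate a diamond or pentagon on its boundary, and then checks, case by case using local connectivity at hole corners, that the required pattern occurs---possibly after modifying $H$ locally. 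Your sketch does not supply this argument, and the worry you raise about a pair ``stuck between two long holes'' is precisely what the case analysis resolves.
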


The proof of Theorem~\ref{thm:lc_main} adopts a similar strategy to that of Theorem~\ref{thm:factoralgorithm}, where we employs a Hamilton cycle instead of an ear decomposition.
We identify a small subgraph that can be used to reconfigure placements aligned with $H$.
Recall that a \textit{diamond} is a subgraph induced by two adjacent triangles.

\begin{figure}
    \centering
    \includegraphics[keepaspectratio,width=0.6\textwidth]
    {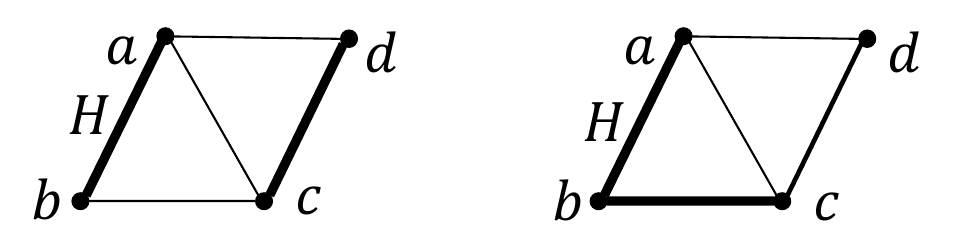}
    \caption{Diamonds and Hamilton cycles satisfying the condition in Lemma~\ref{lem:heikou_cycle}.}
    \label{fig:heikou_cycle}
  \end{figure}

\begin{restatable}{lemma}{heikoucycle}
  \label{lem:heikou_cycle}
  Let $G$ be a triangular grid graph with $2n+1$ vertices.
  Let $H$ be a Hamilton cycle of $G$.
  Suppose that $G$ has a diamond, whose vertices are $a, b, c, d$ aligned in the anti-clockwise order~(Figure~\ref{fig:heikou_cycle}), such that either 
  \begin{itemize}
  \item[\textup{(i)}] $H$ contains the edges $(a,b)$ and $(c,d)$, but does not contain $(a,c)$, or 
  \item[\textup{(ii)}] $H$ contains the edges $(a,b)$ and $(b,c)$.
  \end{itemize}
   Then we can reconfigure any placement aligned with $H$ to another placement aligned with $H$ in $O(n^3)$ \textsf{slide} operations.
\end{restatable}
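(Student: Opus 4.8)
The plan is to follow the same high-level strategy as in the proof of Lemma~\ref{lem:ThreeEarBase}: use the given diamond together with the Hamilton cycle $H$ to realize a \emph{swap} of two consecutive pieces along $H$, and then sort the pieces by a bubble-sort-style argument. Concretely, let $p$ and $q$ be two placements aligned with $H$. By Lemma~\ref{lem:exposed} we may assume both expose the same vertex, and by reindexing the pieces we may assume that in $q$ the pieces $q(1),\dots,q(n)$ are laid out in order along $H$ starting from the exposed vertex. It then suffices to reconfigure $p$ so that its pieces appear along $H$ in the order $p(1),\dots,p(n)$; once that is done, $p$ and $q$ agree (both being aligned with $H$ with the same exposed vertex and the same cyclic order of labels, up to a rotation along $H$ which is free by Observation~\ref{obs:rotation}).

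The heart of the argument is a local \textbf{swap gadget}. Consider the diamond on vertices $a,b,c,d$. In case~(ii), $H$ uses the two edges $(a,b),(b,c)$ of one triangle of the diamond, so the diamond together with the Hamilton path $H-b$ (or an appropriate closed walk) behaves exactly like the ``odd cycle with a diamond'' situation of Lemma~\ref{lem:ThreeEarBase} localized at this spot: rotating along $H$ brings any desired pair of pieces to sit on the two edges incident to $b$, and then a short sequence of \textsf{slide} operations using the diagonal $(a,c)$ and the vertex $d$ exchanges those two pieces while returning everything else to its place. In case~(i), $H$ uses the ``parallel'' edges $(a,b)$ and $(c,d)$ but not the diagonal $(a,c)$; here the diagonal $(a,c)$ together with the diamond edges gives a second short cycle through which one can route a piece so as to swap the two pieces currently on $(a,b)$ and $(c,d)$ without disturbing the rest. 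In both cases one checks, by a short explicit case analysis on where the exposed vertex sits relative to the gadget, that a \emph{transposition of two consecutive pieces along $H$} can be implemented in $O(n)$ \textsf{slide} operations (each rotation along $H$ costs $O(n)$ by Observation~\ref{obs:rotation}, and only a constant number of rotations and constant-size local moves are needed).

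Given the swap gadget, the rest is routine: first rotate along $H$ to bring a misplaced piece adjacent (along $H$) to the gadget, perform a transposition, and repeat. Sorting $n$ pieces into the target order needs $O(n^2)$ transpositions in the worst case (bubble sort), and each transposition costs $O(n)$ \textsf{slide} operations, giving the claimed $O(n^3)$ bound; all steps are clearly computable in polynomial time. I expect the main obstacle to be the case analysis inside the swap gadget — verifying in cases~(i) and~(ii) that the two pieces to be exchanged can always be brought into the gadget by rotation along $H$ \emph{and} that the local move truly restores every other piece (in particular handling the positions of the exposed vertex, and making sure the gadget's moves do not require edges of $G$ that need not exist, using only the diamond edges and the diagonal guaranteed by the hypothesis). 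Once that local lemma is nailed down, the global sorting argument and the slide-count bookkeeping are straightforward.
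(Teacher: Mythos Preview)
Your overall framework---build a local swap gadget at the diamond, rotate pieces along $H$ into the gadget, and bubble-sort for an $O(n^3)$ bound---is exactly the paper's strategy. But the swap gadget is where the real content lies, and your description of it does not work as stated.

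The paper's key move is a \emph{parity reduction}. It first shows (Proposition~\ref{prop:possible_cycle}) that the swap can be carried out whenever the subpath $P_1$ of $H$ from $d$ to $a$ avoiding $b$ has \emph{even} length and the subpath $P_2$ from $b$ to $c$ avoiding $a$ has \emph{odd} length. Under that parity hypothesis, the two cycles $P_1\cup\{(a,d)\}$ and $P_1\cup\{(a,b),(b,c),(c,d)\}$ are both \emph{odd}, and rotating along them (Claims~\ref{clm:lc_swap1} and~\ref{clm:lc_swap2}) performs the transposition in $O(n)$ slides. Lemma~\ref{lem:heikou_cycle} is then proved by checking that each of your hypotheses (i) and (ii), together with $|H|=2n+1$ odd, forces exactly one of the two relevant subpaths to be even and the other odd, so after relabelling the diamond the parity hypothesis of Proposition~\ref{prop:possible_cycle} is met.

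Your sketch misses this. In case~(ii) you claim the configuration ``behaves exactly like the odd cycle with a diamond'' of Lemma~\ref{lem:ThreeEarBase}, but that lemma's graph has the diamond attached to the cycle along a single boundary edge with the other two diamond vertices \emph{off} the cycle; here $d$ lies on $H$ at an unknown position, and there is no second short odd cycle available without the parity argument. In case~(i) your ``second short cycle through the diagonal $(a,c)$'' is a triangle, and rotating a triangle cannot swap the pieces on $(a,b)$ and $(c,d)$ while leaving the rest of $H$ untouched. In both cases the cycles actually used for the swap are of length $\Theta(n)$ (they contain $P_1$), and their oddness is what makes rotation meaningful; without pinning down the parity of the $d$--$a$ subpath you cannot guarantee an odd auxiliary cycle, and the gadget collapses.

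So the missing idea is: use the global parity of $H$ to locate an even subpath between two diamond vertices, and build the swap out of the two odd cycles that this subpath forms with the diamond edges $(a,d)$ and $(a,b),(b,c),(c,d)$.
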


See Section~\ref{sec:locallyconnectedreconf} for the proof.

We then show that such a diamond with a Hamilton cycle always exists if $G$ is a locally-connected triangular grid graph, but not isomorphic to the Star of David graph.
See Section~\ref{sec:lc_localstructure} for the details.
This, together with Lemma~\ref{lem:heikou_cycle}, shows Theorem~\ref{thm:lc_main}.

\begin{restatable}{lemma}{lclocal}
  \label{lem:LC_localstructure}
  Let $G$ be a triangular grid graph with $2n+1$ vertices.
If $G$ is a locally-connected triangular grid graph, but not isomorphic to the Star of David graph, then there exist a Hamilton cycle $H$ and a diamond, whose vertices are denoted by $a, b, c, d$~(Figure~\ref{fig:heikou_cycle}),  that satisfy either 
  \textup{(i)} or \textup{(ii)} in Lemma~\textup{\ref{lem:heikou_cycle}}.
\end{restatable}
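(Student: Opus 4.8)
The plan is to start from a Hamilton cycle $H$ of $G$, which exists by Theorem~\ref{thm:hamiltonian} since $G$ is locally-connected and not the Star of David graph, and to argue that $H$ together with some diamond of $G$ must realize one of the two configurations. The key observation is that every edge of the triangular lattice lies on at least one triangle of $G$ (since $G$ is locally-connected: if $(x,y)\in E$, then $y\in N(x)$, and connectivity of $G[N(x)]$ together with the lattice structure forces a common neighbor of $x,y$ — here one should be slightly careful and note that an isolated-in-$N(x)$ vertex $y$ is impossible unless $\deg(x)$ is tiny, a case handled separately). So pick any edge $(a,b)$ of $H$; it lies on a triangle $\{a,b,z\}$ of $G$. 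If the third triangle edge from this triangle behaves suitably we are close to a diamond already; more carefully, I would look at the two triangles sharing the edge $(a,b)$, or a triangle on an adjacent $H$-edge, to assemble the four vertices $a,b,c,d$ of a diamond (two triangles glued along a common edge).

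The heart of the argument is a local case analysis at a well-chosen vertex. Walk along $H$ and consider a vertex $v$ with its two $H$-neighbors $x,y$; since $\deg_G(v)\ge 3$ wherever $G$ is locally connected and bigger than a single triangle, $v$ has a third neighbor $w$, and local connectivity of $G[N(v)]$ gives structure among $x,y,w$ in the lattice around $v$. The aim is to produce a diamond one of whose triangles has two consecutive $H$-edges at $v$ (giving case (ii) directly), or else a diamond with two parallel $H$-edges $(a,b),(c,d)$ whose short diagonal $(a,c)$ is avoided by $H$ (case (i)). I would argue that if case (ii) never occurs — i.e.\ $H$ never ``turns inside a triangle,'' so at every vertex the two $H$-edges do not bound a common triangle — then $H$ is forced into a rigid parallel-strip behavior locally, and then picking a diamond straddling two consecutive parallel $H$-edges yields case (i), with $H$ avoiding the short diagonal precisely because $H$ already uses the two long sides and is a simple cycle (using the short diagonal would create a vertex of $H$-degree $3$ or a chord). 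The Star of David graph must be excluded because it is the unique locally-connected triangular grid graph with no diamond (its triangles are vertex-disjoint except at shared vertices but never share an edge), which is exactly the obstruction that kills both (i) and (ii).

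I expect the main obstacle to be the completeness of the case analysis: ruling out configurations of $H$ near a degree-$3$ or degree-$4$ vertex where $H$ is forced through a ``pinch'' and no diamond in the right position exists, and verifying that the \emph{only} such global obstruction is the Star of David graph. Concretely, one must show that a locally-connected triangular grid graph other than the Star of David contains a diamond (two triangles sharing an edge) — this is a purely structural claim about the lattice and local connectivity — and then, given such a diamond $abcd$, show that \emph{some} Hamilton cycle can be chosen (or the given one modified by a short local rerouting through the diamond, which is a $2$-opt-style swap that preserves Hamiltonicity) to meet (i) or (ii). The rerouting step is where I would spend the most care: exchanging, say, the pair of $H$-edges $(a,c),(b,d)$ for $(a,b),(c,d)$ inside the diamond keeps the cycle connected iff the two removed edges were ``crossing'' in $H$, so one argues that among the finitely many ways $H$ can traverse the four diamond vertices, at least one admits such a swap into the desired form. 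Once the diamond-plus-cycle configuration is secured, Lemma~\ref{lem:heikou_cycle} finishes Theorem~\ref{thm:lc_main}.
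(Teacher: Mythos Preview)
Your outline does not yet constitute a proof, and I do not think the path you sketch can be completed without importing most of the machinery the paper actually uses. Two concrete gaps:

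\emph{The ``parallel-strip'' dichotomy is false as stated.} Saying that case~(ii) never occurs means only that at every vertex the two $H$-edges meet at an angle of $120^\circ$ or $180^\circ$, never $60^\circ$. This does not force $H$ into parallel strips; $H$ can still wind arbitrarily, and there is no reason a diamond straddling two nearby $H$-edges must have its short diagonal avoided by $H$. You would need a global argument here, not a local one, and you have not indicated what it would be.

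\emph{The rerouting step is not under control.} A $2$-opt swap on four vertices of a Hamilton cycle preserves Hamiltonicity only when the two removed edges are interleaved along $H$; you cannot assume this for an arbitrary diamond. The paper does perform exactly one such rerouting, but only after a substantial case analysis has pinned down the local picture enough to guarantee the swap is legal.

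The paper's actual argument is structurally different and worth knowing. It passes to the dual: the triangles of $G$ are vertices, inner edges of $G$ are dual edges, and the Hamilton cycle $H$ cuts this dual graph into two pieces $G^\ast_1, G^\ast_2$. Both pieces are forests of maximum degree~$3$ (a cycle in either would separate $G$ and block $H$). One then shows that some component has at least two vertices (this is where local connectivity and the hole geometry are used), and any tree with $\ge 2$ vertices has either a leaf adjacent to a degree-$2$ vertex or a degree-$3$ vertex with two leaf neighbors. These two shapes correspond to a diamond or a pentagon on the boundary of the component, and a short case analysis on which boundary edges lie in $H$ (using that two consecutive boundary edges cannot both miss $H$) produces configuration (i) or (ii), with one case requiring a single explicit rerouting of $H$. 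Your local-vertex analysis never sees this forest structure, which is what makes the case analysis finite and tractable.

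A side remark: your explanation of why the Star of David is excluded is off. The real obstruction is that it has no Hamilton cycle at all (it is the exception in Theorem~\ref{thm:hamiltonian}), so the hypothesis of the lemma cannot even be set up there.
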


This section is concluded with stating our results on the Gourds puzzle.

\begin{corollary}
  \label{cor:gourds2}
Let $B$ be a hexagonal grid such that the dual graph is locally-connected, but not isomorphic to the Star of David graph.
Then any two configurations of the same set of $n$ pieces on $B$ can be reconfigured to each other, using $O(n^3)$ moves.
\end{corollary}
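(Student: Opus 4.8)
The plan is to derive Corollary~\ref{cor:gourds2} from Theorem~\ref{thm:locally_algorithm} by passing to the dual graph, exactly as Corollary~\ref{cor:gourds} is derived from Theorem~\ref{thm:factoralgorithm}. Given the hexagonal grid board $B$ (possibly with holes), I form its dual graph $G$: place one vertex at each hexagonal cell and join two vertices whenever the corresponding cells share an edge. Since $B$ sits inside the hexagonal tiling whose dual is a triangular lattice, $G$ is a (connected) triangular grid graph on $2n+1$ vertices, and by hypothesis $G$ is locally-connected and not isomorphic to the Star of David graph. Under this correspondence a $1\times 2$ piece (occupying two adjacent cells) becomes an edge of $G$, the unique empty cell becomes the exposed vertex, a nearly perfect matching of $G$ is a placement of the $n$ pieces, and a \textsf{slide} of a piece on $B$ translates into a \textsf{slide} operation on the placement. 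Forgetting the orientations (the numbered labels) of the pieces, a configuration of Gourds on $B$ is therefore precisely a placement of $G$, so Theorem~\ref{thm:locally_algorithm} reconfigures the initial placement to the target one in $O(n^3)$ \textsf{slide} operations, computable in polynomial time.

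What remains is to account for the orientations, since a Gourds configuration records for each piece an \emph{ordered} pair of cells. Observe first that a \textsf{slide} never changes the orientation of a piece — both numbered halves simply translate — so after running the unoriented reconfiguration above we reach a configuration occupying the correct set of cells but possibly with a subset $S$ of pieces flipped relative to the target. To repair this I use the diamond $a,b,c,d$ together with the Hamilton cycle $H$ furnished by Lemma~\ref{lem:LC_localstructure} (the same object used in Lemma~\ref{lem:heikou_cycle}): one of the two triangles of the diamond gives, in the Gourds model, a legal turn/pivot move that reverses the orientation of the piece sitting on one of its edges when the opposite cell is empty — this is the analogue, for the locally-connected setting, of the remark following Corollary~\ref{cor:gourds} that the admissible-ear structures (i)--(ii) let one flip pieces arbitrarily. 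Processing the pieces of $S$ one at a time, I rotate the current placement along $H$ to bring the chosen piece onto that edge with the required cell exposed (Observation~\ref{obs:rotation}), flip it, and continue; a final rotation along $H$ then restores the positions of the target configuration. Because flips touch only pieces of $S$ (each the correct number of times) and \textsf{slide}s preserve all other orientations, the result is exactly the target Gourds configuration.

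For the move count, the orientation-repair phase performs $O(n)$ rotations along $H$, each costing $O(n^2)$ \textsf{slide} operations by Observation~\ref{obs:rotation}, hence $O(n^3)$ in total, which is absorbed into the bound already coming from Theorem~\ref{thm:locally_algorithm}; all steps are polynomial-time since the cited results (including Theorem~\ref{thm:hamiltonian}) are constructive. I expect the only genuinely delicate point to be the orientation bookkeeping — verifying that rotating along $H$ indeed preserves orientations, that the diamond's triangle provides a legitimate Gourds move flipping the incident piece in both cases (i) and (ii) of Lemma~\ref{lem:heikou_cycle}, and that the partial rotations can be closed up to restore positions without reintroducing an orientation error — but all of this should be routine given the explicit local picture of the diamond and the Hamilton cycle.
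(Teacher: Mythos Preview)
Your proposal is essentially correct and matches the paper's intended (but unwritten) argument: the paper states Corollary~\ref{cor:gourds2} without proof, relying on the same duality-plus-local-structure reasoning it sketched before Corollary~\ref{cor:gourds}, and you have filled that in faithfully by invoking Theorem~\ref{thm:locally_algorithm} on the dual triangular grid graph and then using the diamond from Lemma~\ref{lem:LC_localstructure} to repair orientations.

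One caution on the orientation bookkeeping you already flag as delicate: the assertion that ``a \textsf{slide} never changes the orientation of a piece --- both numbered halves simply translate'' is not literally true, since a \textsf{slide} in the paper's sense may correspond to a Gourds \emph{turn} or \emph{pivot} rather than a straight slide, and these rotate the piece. What is true (and is what you need) is that every such move carries the piece rigidly, so the orientation is transported in a well-defined way along the Hamilton cycle; in particular, rotating a piece around $H$ and back to its starting edge by the reverse sequence returns it with the same orientation. Your flip-and-rotate scheme therefore works provided you are careful to undo rotations rather than merely ``close up'' with an arbitrary cyclic shift, or alternatively to re-run the unoriented reconfiguration of Theorem~\ref{thm:lc_main} at the end (which costs another $O(n^3)$ and does not disturb the net orientation corrections). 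With that adjustment the argument is complete and matches the paper's intent.
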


\subsection{Reconfiguration using a Hamilton Cycle}\label{sec:locallyconnectedreconf}

This section is devoted to proving Lemma~\ref{lem:heikou_cycle}.
That is, we show that there exists a subgraph that can be used to reconfigure placements aligned with a Hamilton cycle.
The existence of such a subgraph will be discussed in Section~\ref{sec:lc_localstructure}.


By rotating along with a Hamilton cycle, we show that a triangular grid graph is reconfigurable if there exists a diamond satisfying parity conditions on a Hamilton cycle as below.

\begin{proposition}
  \label{prop:possible_cycle}
  Let $G$ be a triangular grid graph with $2n+1$ vertices.
  Let $H$ be a Hamilton cycle.
  Suppose that $G$ has a diamond, whose vertices are $a,b,c,d$ aligned in the anti-clockwise order, such that it satisfies the following $3$ conditions~(see Figure~\ref{fig:possible_cycle}):
  \begin{itemize}
    \item $H$ contains the edge $(a,b)$, and does not contain the edge $(a,c)$.
    \item The subpath of $H$ from $d$ to $a$ not containing $b$, denoted by $P_1$, has even length.
    \item The subpath of $H$ from $b$ to $c$ not containing $a$, denoted by $P_2$,  has odd length.
  \end{itemize}
  Then we can reconfigure any placement aligned with $H$ to another placement aligned with $H$ in $O(n^3)$ \textsf{slide} operations.
\end{proposition}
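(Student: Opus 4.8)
The strategy is to use the diamond together with the Hamilton cycle $H$ to simulate an adjacent transposition of two consecutive pieces, and then sort an arbitrary placement aligned with $H$ into a target one by a bubble-sort-style argument, exactly as in the proof of Lemma~\ref{lem:ThreeEarBase} and the Hamilton-cycle case of Lemma~\ref{lem:phase2}. Concretely, consider a placement $p$ aligned with $H$. The matching $M_p$ consists of $n$ pieces lying on the edges of $H$ (in the even positions along $H$, relative to the exposed vertex). We want to show that for two pieces occupying adjacent matching edges on $H$ we can swap their labels using $O(n^2)$ \textsf{slide} operations, while leaving every other piece back on its original $H$-edge. Once we have such a swap gadget, any placement aligned with $H$ can be brought to any other placement aligned with $H$ by $O(n^2)$ swaps (bubble sort on $n$ items), giving $O(n^3)$ \textsf{slide} operations in total, which matches the claimed bound.

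The swap gadget is built from the diamond $abcd$ and two cycles. First observe that the diamond gives a short even $M$-alternating detour: the edge $(a,c)$ together with the subpath $P_1$ (from $d$ to $a$, even length, not through $b$) and the edge $(d,c)$, or symmetrically using $(b,d)$. More usefully, consider the cycle $C$ obtained from $H$ by deleting the edge $(a,b)$ and inserting the path $a\!-\!c\!-\!b$ (using the two diagonal-ish edges $(a,c)$ and $(c,b)$ of the diamond region — here I use that $a,b,c$ form a triangle, so $(b,c)$ is an edge, and $(a,c)$ is an edge). Since $(a,b)$ has length $1$ and $a\!-\!c\!-\!b$ has length $2$, the cycle $C$ has length $2n+2$, which is even — so $C$ is \emph{not} an odd cycle and cannot directly be used for a rotation. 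This is where the parity hypotheses come in: $P_1$ has even length and $P_2$ has odd length, and $P_1,P_2$ together with $(a,b)$ make up $H$, so $|H|=|P_1|+|P_2|+1$ is even, consistent with $|V|=2n+1$ (the exposed vertex is off $H$'s edge-count parity as expected). The point is that we can form \emph{two} odd cycles: $C_1$ consisting of $P_1$, the edge $(a,c)$, and a suitable return path through the diamond to $d$; and $C_2$ consisting of $P_2$, the edge $(b,c)$, and a return. I would pick these so that $H = (C_1 \text{-part}) \cup (C_2\text{-part})$ overlap only in the diamond, each is odd, each can carry a rotation by Observation~\ref{obs:rotation}, and rotating along $C_1$ then $C_2$ (and back) cyclically shifts the pieces of $H$ in a controlled way that realizes a transposition of two neighbors near the diamond while fixing the rest.

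The cleanest implementation: reduce to the already-proved base lemma. Note that $H$ together with the edge $(a,c)$ (a chord of $H$ not in $H$) forms a $2$-connected graph that contains an odd cycle with a diamond attached in the sense of Lemma~\ref{lem:ThreeEarBase} / Figure~\ref{fig:3_ear}: the subpath $P_2$ of odd length plus the triangle on $\{a,b,c\}$ plus the pendant even path $P_1$. By the parity conditions, one can verify that this subgraph is isomorphic (as far as the matching/slide structure near the relevant edges is concerned) to the graph $\tilde G$ of Lemma~\ref{lem:ThreeEarBase}, or is handled by Lemma~\ref{lem:phase2} with $G_{k-1}$ playing the role of the reconfigurable base. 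So I would: (1) exhibit the odd cycle $C^\ast$ (using $P_2$, parity forces odd) and the length-$3$ path $P^\ast$ through the diamond whose endpoints are adjacent on $C^\ast$, observe $C^\ast + P^\ast$ is central by the even-length of $P_1$; (2) invoke Lemma~\ref{lem:ThreeEarBase} (or its proof technique) to swap two pieces on $C^\ast$ using $O(n^2)$ slides while restoring all other pieces; (3) run bubble sort over the $n$ pieces along $H$, $O(n^2)$ swaps, $O(n^3)$ slides total; (4) check that each intermediate placement stays aligned with $H$ (it does, since a swap returns every piece to an $H$-edge and keeps the exposed vertex where it was).

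**Main obstacle.** The delicate point is the parity bookkeeping: verifying that the conditions "$P_1$ even, $P_2$ odd, $(a,b)\in H$, $(a,c)\notin H$" are exactly what is needed to make the rotation cycles odd \emph{and} $M_p$-alternating at each stage, and that after a swap the exposed vertex and all non-swapped pieces are genuinely restored to their $H$-positions (not merely to \emph{some} aligned placement). Getting the two rotation cycles to interact so that their composition is a single transposition — rather than a longer cyclic shift — is the real content; the rest is bubble sort and the counting $O(n^2)\cdot O(n) = O(n^3)$ via Observation~\ref{obs:rotation}. I expect the write-up to spend most of its effort on a careful figure-driven case description of the swap gadget and a parity lemma guaranteeing oddness of the auxiliary cycles.
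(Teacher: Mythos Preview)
Your high-level strategy---build an adjacent-transposition gadget near the diamond and then bubble-sort along $H$---is exactly the paper's approach. The gap is that you never actually produce a working swap gadget, and the concrete candidates you write down do not work.

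Specifically, your proposed cycles are even, not odd. Your $C_1$ (``$P_1$, the edge $(a,c)$, and a return through the diamond to $d$'') is $P_1 \cup \{(a,c),(c,d)\}$, of length $|P_1|+2$, which is even; your $C_2$ (``$P_2$, the edge $(b,c)$'') has length $|P_2|+1$, also even. Neither can carry a rotation in the sense of Observation~\ref{obs:rotation}. The reduction to Lemma~\ref{lem:ThreeEarBase} also does not go through: that lemma needs an odd cycle with a length-$3$ ear whose endpoints are adjacent on the cycle, and no such ear is available inside the diamond (the only internal $a$--$c$ paths through $d$ have length $2$). You are also looking in the wrong place for the parity role: you try to build the rotation cycles from $P_2$, whereas the hypothesis that $|P_1|$ is \emph{even} is what actually makes the gadget work.

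The paper's swap uses $P_1$ only. It takes the two odd cycles
\[
C_1 \;=\; P_1 \cup \{(a,d)\}\quad\text{(length $|P_1|+1$)},\qquad
C_2 \;=\; P_1 \cup \{(a,b),(b,c),(c,d)\}\quad\text{(length $|P_1|+3$)},
\]
both odd precisely because $|P_1|$ is even. With $c$ exposed, the swap of the two pieces at $(v_2,v_1)$ and $(a,b)$ is: slide the piece on $(a,b)$ to $(b,c)$ (exposing $a$); do one full rotation along $C_1$; do one full rotation along $C_2$. Each rotation is $O(n)$ slides, so the swap costs $O(n)$, not $O(n^2)$. (There is also a separate base case $|P_1|=2$, where the five vertices $\{a,b,c,d,v\}$ induce a pentagon and the swap is constant-time.) The oddness of $|P_2|$ is used only to guarantee that, with $c$ exposed and the placement aligned with $H$, the edge $(a,b)$ is indeed a matching edge so that the gadget is applicable.

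Finally, your complexity count is inconsistent: you claim $O(n^2)$ per swap and $O(n^2)$ swaps but conclude $O(n^3)$. With the correct $O(n)$ swap above, bubble sort gives $O(n^2)\cdot O(n)=O(n^3)$ as stated.
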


  \begin{figure}
    \centering
    \includegraphics[keepaspectratio,width=0.35\textwidth]
    {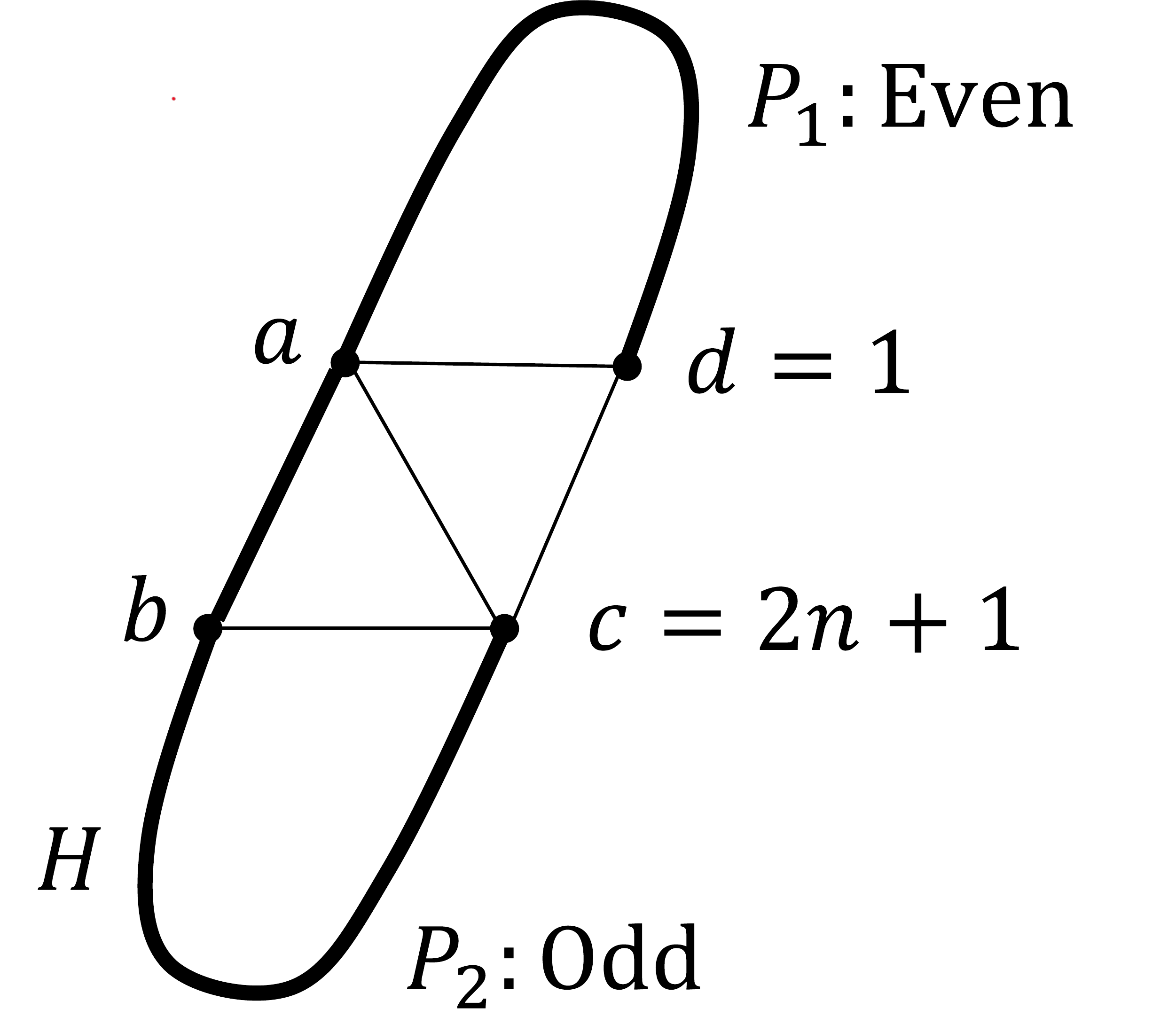}
    \caption{A diamond and a Hamilton cycle $H$~(thick cycle) satisfying the condition in  Proposition~\ref{prop:possible_cycle}}.
    \label{fig:possible_cycle}
  \end{figure}

Assuming the above proposition is true, we can prove Lemma~\ref{lem:heikou_cycle} as below.

\begin{proof}[Proof of Lemma~\ref{lem:heikou_cycle}]
 Suppose that $H$ contains the edges $(a,b)$ and $(c,d)$, but does not contain the edge $(a,c)$.
 Let $Q_1$ be the subpath of $H$ between $a$ and $d$ without $b$, and $Q_2$ be the subpath of $H$ between $b$ and $c$ without $a$.
 Since $H$ has odd length, we can see that one of $Q_1$ and $Q_2$ has odd length, and the other has even length.
Hence $H$ satisfies the condition of Proposition~\ref{prop:possible_cycle}.

 Next suppose that $H$ contains the edges $(a,b)$ and $(b,c)$.
 Note that $H$ goes through the vertex $d$.
 Let $Q_1$ be the subpath of $H$ between $a$ and $d$ without $b$, and $Q_2$ be the subpath of $H$ between $c$ and $d$ without $b$.
 Since $H$ has odd length, 
 one of $Q_1$ and $Q_2$ has odd length, and the other has even length.
 By symmetry, we may assume that $Q_1$ has even length, which is denoted by $P_1$.
 Then, letting $P_2$ be the path consisting of the edge $(b,c)$, 
 $H$ satisfies the condition of Proposition~\ref{prop:possible_cycle}.
 Thus Lemma~\ref{lem:heikou_cycle} holds.
\end{proof}

It remains to prove Proposition~\ref{prop:possible_cycle}.

\begin{proof}[Proof of Proposition~\ref{prop:possible_cycle}]
   We assume that the vertices $1,2,\dots, 2n+1$ of $G$ are aligned with the Hamilton cycle $H$ in the anti-clockwise order, where $d=1$ and $c=2n+1$ in the given diamond.
   We may also assume that $p(1), p(2), \dots, p(n)$ are aligned with $H$ in the anti-clockwise order and that $v_p=c=2n+1$, that is, $p(i) =(2i-1, 2i)$ for $i\in [n]$.
Note that we can make $c$ exposed in $O(n)$ \textsf{slide} operations along $H$ by  Observation~\ref{obs:rotation}.

  We will first prove that two consecutive pieces on $H$ can be swapped in $O(n)$ operations, by considering the following two cases according to the length of $P_1$.

\begin{claim}\label{clm:lc_swap1}
          Suppose that $P_1$ has length $2$.
          Then $p(1)$ and $p(2)$ can be swapped in a constant number of \textsf{slide} operations.
\end{claim}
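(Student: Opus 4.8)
The plan is to unwind the labelling conventions already fixed in the proof of Proposition~\ref{prop:possible_cycle} and then to exhibit an explicit slide sequence of constant length that lives entirely inside a small subgraph around the diamond. Recall that the vertices of $G$ are named $1,\dots,2n+1$ in anti-clockwise order along $H$ with $d=1$ and $c=2n+1$, that $p(i)=(2i-1,2i)$ for $i\in[n]$, and that $v_p=c=2n+1$. Since $(a,c)=(a,2n+1)\notin H$, the length-$2$ subpath $P_1$ of $H$ from $d=1$ to $a$ avoiding $b$ cannot be the path $1,2n+1,2n$; hence it is the path $1,2,3$, so $a=3$, and then $b=4$, because $b$ is an $H$-neighbour of $a=3$ distinct from the internal vertex $2$ of $P_1$. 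In particular $p(1)=(1,2)$ and $p(2)=(3,4)$, and the exposed vertex is $2n+1$.

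Next I would record the edges available near the diamond. The diamond on $\{a,b,c,d\}=\{3,4,2n+1,1\}$ supplies the boundary edges $(3,4),(4,2n+1),(2n+1,1),(1,3)$ and the diagonal $(3,2n+1)$, while $H$ supplies $(1,2)$ and $(2,3)$; thus the five vertices $1,2,3,4,2n+1$ span the three triangles $\{1,2,3\}$, $\{1,3,2n+1\}$ and $\{3,4,2n+1\}$, i.e.\ a pentagon with apex $3$. The only pieces of $p$ incident to any of these five vertices are $p(1)$ on $(1,2)$ and $p(2)$ on $(3,4)$; the pieces $p(3),\dots,p(n)$ cover the vertices $5,\dots,2n$, and we will simply never move them.

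Then I would write down a $5$-step slide sequence inside this pentagon that interchanges $p(1)$ and $p(2)$ and restores the exposed vertex to $2n+1$. Writing a configuration as $\bigl(p(1),\,p(2),\,\text{exposed vertex}\bigr)$ and abbreviating $2n+1$ by $Z$, the sequence is
\begin{align*}
\bigl((1,2),(3,4),Z\bigr) &\to \bigl((1,2),(4,Z),3\bigr) \to \bigl((2,3),(4,Z),1\bigr) \to \bigl((2,3),(Z,1),4\bigr) \\
&\to \bigl((3,4),(Z,1),2\bigr) \to \bigl((3,4),(1,2),Z\bigr),
\end{align*}
where the successive moves slide $p(2)$ about its endpoint $4$, then $p(1)$ about its endpoint $2$, then $p(2)$ about its endpoint $Z$, then $p(1)$ about its endpoint $3$, then $p(2)$ about its endpoint $1$. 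Each move is a legal \textsf{slide}: the edge it uses is $(4,Z)$, $(2,3)$, $(Z,1)$, $(3,4)$, $(1,2)$ respectively, all listed above, and every intermediate exposed vertex stays in $\{1,2,3,4,2n+1\}$. The final configuration is exactly $p$ with $p(1)$ and $p(2)$ swapped, so the exchange is achieved in $5$ slides, a constant.

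The argument has no real obstacle; the only point that needs care is the edge bookkeeping of the second paragraph — checking that the hypotheses of the claim genuinely force all of $(1,2),(2,3),(3,4),(4,2n+1),(2n+1,1)$ (and, if one wants the full pentagon, also $(1,3),(3,2n+1)$) to be edges of $G$. This is immediate: $(1,2),(2,3)$ and $(2n+1,1)=(c,d)$ are edges of $H$, while $(3,4)=(a,b)$ and $(4,2n+1)=(b,c)$ are sides of the given diamond; these five edges are exactly the ones used by the sequence above.
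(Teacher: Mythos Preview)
Your proof is correct and follows essentially the same approach as the paper: both identify the pentagon induced by $\{a,b,c,d,v\}=\{3,4,2n+1,1,2\}$ and use it to swap the two pieces. You are simply more explicit than the paper—carefully deriving $a=3$, $b=4$ from the labeling conventions and the hypothesis $(a,c)\notin H$, and writing out a concrete $5$-step \textsf{slide} sequence—whereas the paper merely asserts that the pentagon allows the swap in a constant number of operations.
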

   \begin{proof}
          In this case, $P_1$ has the unique inner vertex $v$~(Figure~\ref{fig:P1_short}).
          Thus $p(1)=(v,d)$ and $p(2)=(a,b)$.
          Then $p(1)$ and $p(2)$ can be swapped with the pentagon induced by the vertex set $\{a,b,c,d,v\}$ in a constant number of \textsf{slide} operations.
  Thus the claim holds.
   \end{proof}
          \begin{figure}
            \centering
            \includegraphics[keepaspectratio,width=0.3\textwidth]
            {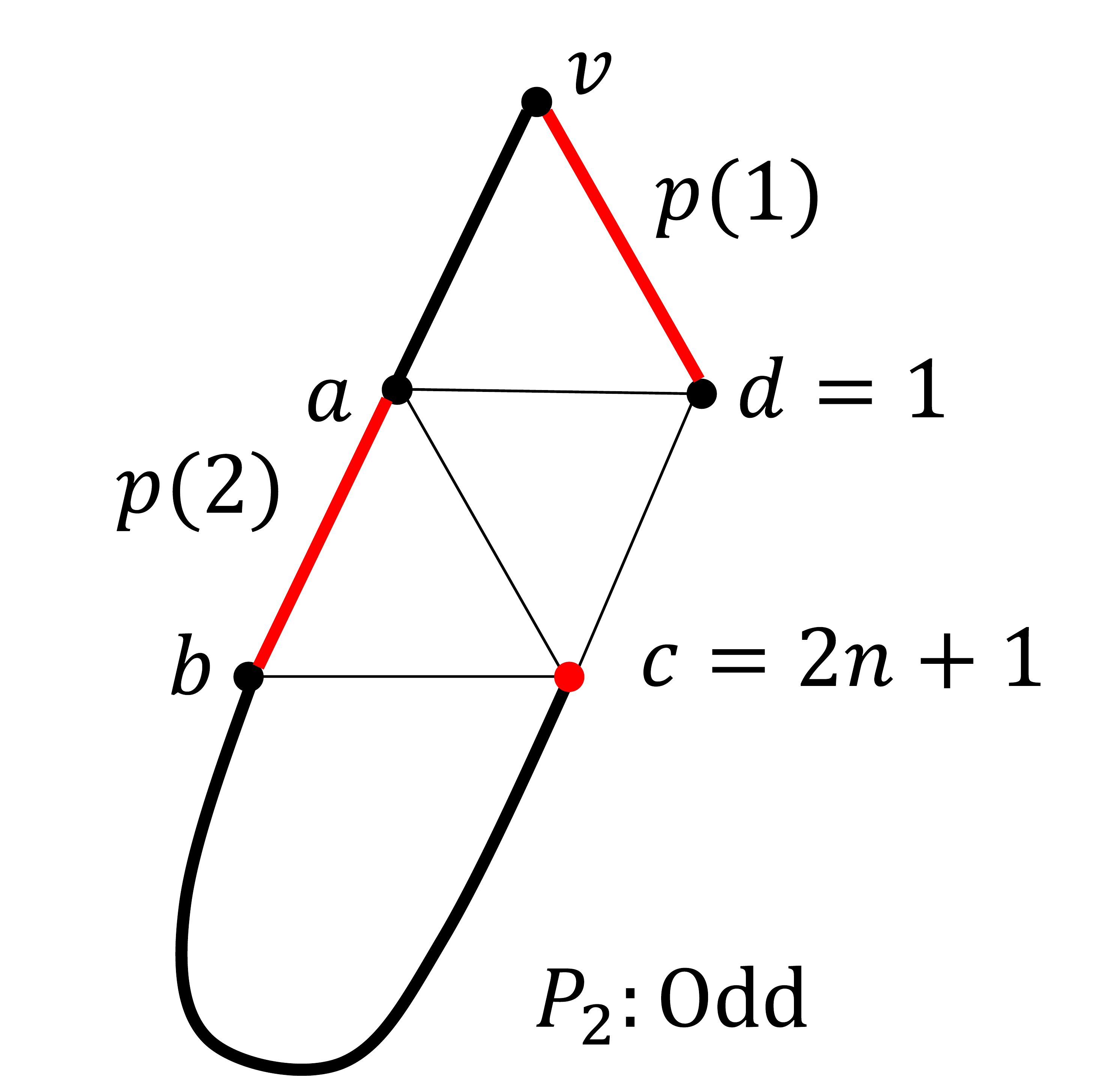}
            \caption{Proof of Claim~\ref{clm:lc_swap1} when $P_1$ has length $2$.}
            \label{fig:P1_short}
          \end{figure}
   
          Suppose that $P_1$ has length at least $4$.
          We denote by $v_1$ and $v_2$ the vertices next to $a$ on $P_1$, and by $v_3$ the vertex adjacent to $d$ on $P_1$.
          See Figure~\ref{fig:P1_long}.
          We assume that $p(j)=(v_2,v_1)$ and $p(j+1)=(a,b)$ for some $j\in [n-1]$.

          \begin{figure}
            \centering
            \includegraphics[keepaspectratio,width=0.9\textwidth]
            {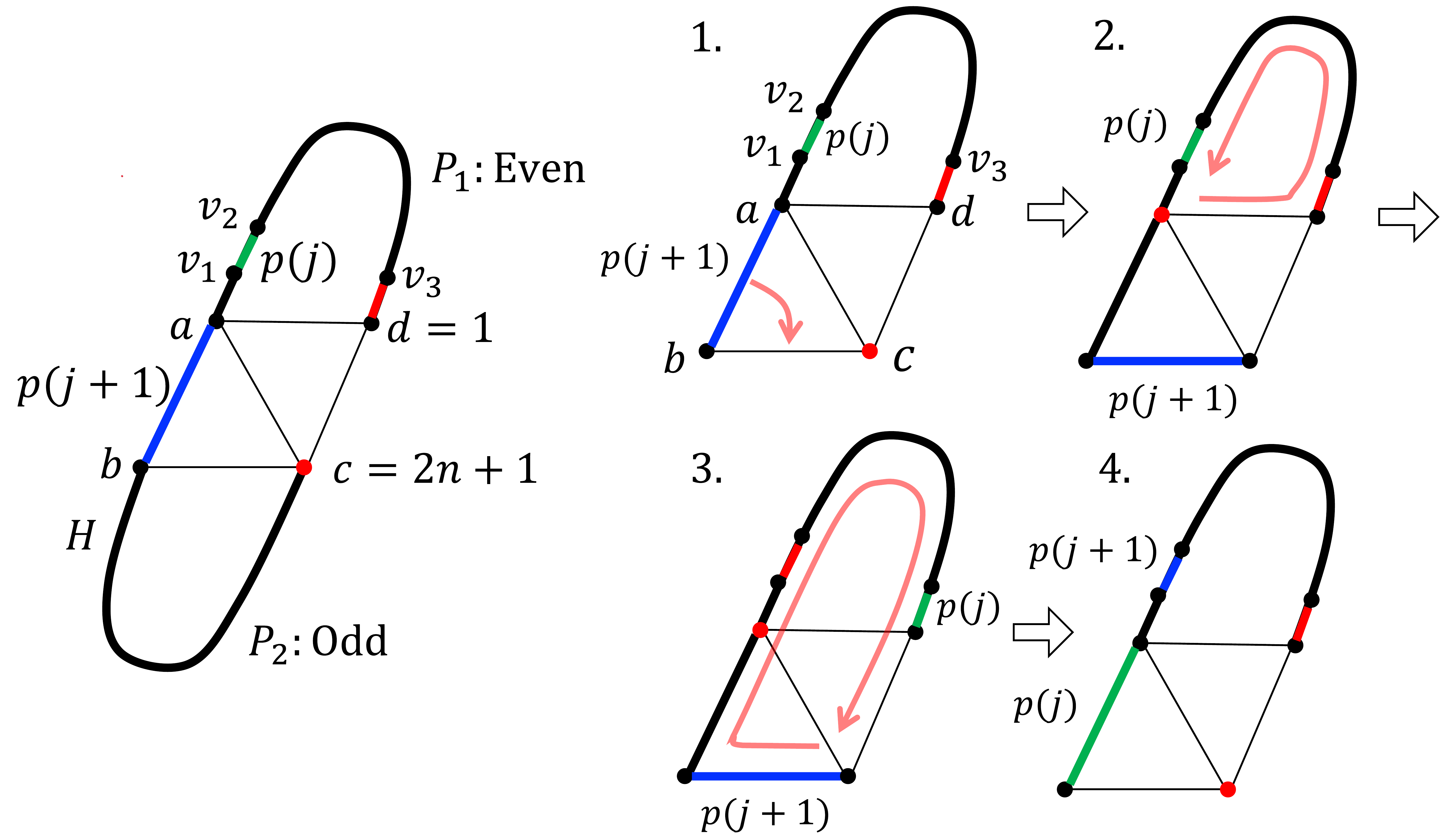}
            \caption{Proof of Claim~\ref{clm:lc_swap2} when $P_1$ has length at least $4$~(left). Reconfiguration steps to swap $p(j)$ and $p(j+1)$~(right).}
            \label{fig:P1_long}
          \end{figure}

\begin{claim}\label{clm:lc_swap2}
          Suppose that $P_1$ has length at least $4$.
          Then $p(j)$ and $p(j+1)$ can be swapped in $O(n)$ \textsf{slide} operations.
\end{claim}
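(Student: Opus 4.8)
The plan is to reduce the general case ($P_1$ of length $\geq 4$) to a sequence of local reconfigurations that never disturbs the pieces outside a small window, mimicking the idea of Claim~\ref{clm:lc_swap1} but now using the whole cycle $H$ rather than just a pentagon. First I would set up coordinates so that the relevant pieces are $p(j)=(v_2,v_1)$ and $p(j+1)=(a,b)$, with $v_1$ adjacent to $a$ on $P_1$ and $a$ adjacent to $b$ on $H$; recall that the diamond $a,b,c,d$ gives us the ``shortcut'' diagonal $(b,d)$ (the inner edge of the diamond), which is the extra edge not lying on $H$ that we will exploit. The key structural fact is that $H$, together with the diamond edge, contains an odd cycle $C$ using the subpath $P_1$ from $a$ to $d$ and then the edge $(d,b)$ and the edge $(b,a)$: since $P_1$ has even length, $C$ has odd length, and a placement aligned with $H$ restricted to the relevant portion is aligned with $C$ in the sense of Observation~\ref{obs:rotation}.

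The main steps I would carry out are: (1) rotate the current placement along $H$ so that the exposed vertex is moved next to the window containing $p(j),p(j+1)$ — this costs $O(n)$ \textsf{slide} operations by Observation~\ref{obs:rotation}; (2) perform a constant number of \textsf{slide} operations that route $p(j+1)=(a,b)$ through the diamond edge $(b,d)$ and onto $P_1$, effectively using the triangle $a,b,d$ (or $b,c,d$) to ``commute'' the two pieces past each other locally, exactly as the pentagon did in Claim~\ref{clm:lc_swap1}; (3) rotate back along $H$ to restore the exposed vertex to its original position. The net effect is that $p(j)$ and $p(j+1)$ have exchanged their edges while every other piece has been returned to its original position. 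Since steps (1) and (3) each cost $O(n)$ and step (2) is $O(1)$, the total is $O(n)$ \textsf{slide} operations, as claimed. Figure~\ref{fig:P1_long}~(right) is meant to depict precisely the constant-length maneuver of step~(2).

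The part I expect to be the main obstacle is verifying step (2): one must check that, after the rotation of step (1) positions the exposed vertex correctly, there genuinely is a constant-length sequence of legal \textsf{slide} moves that swaps the two pieces using only the edges incident to the diamond, \emph{and} that this sequence leaves every piece other than $p(j),p(j+1)$ untouched. This requires a careful case analysis of where exactly the hole (exposed vertex) sits relative to $v_1,v_2,a,b,c,d$ after step~(1), and possibly a small number of auxiliary \textsf{slide} moves to bring the hole flush against the edge $(a,b)$ before the swap; each such auxiliary segment is a short rotation along $P_1$ and hence still $O(n)$ overall. Once the hole is adjacent to $b$ with $p(j+1)=(a,b)$ in place, the swap is the same three-triangle dance used for the pentagon, so the bookkeeping — not any deep idea — is the real work. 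I would organize the write-up by first stating the position of the hole we aim for, then exhibiting the explicit short move sequence with reference to Figure~\ref{fig:P1_long}, and finally tallying the operation count.
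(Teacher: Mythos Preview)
Your proposal has a genuine gap in step~(2). The ``three-triangle dance'' you invoke from Claim~\ref{clm:lc_swap1} does not exist here: that pentagon was the induced subgraph on $\{a,b,c,d,v\}$, and it worked only because the edge $(v,d)$ was present. When $|P_1|\geq 4$, the vertex $v_1$ adjacent to $a$ on $P_1$ is not adjacent to $d$ (we may only use the edges of $H$ and of the diamond), so there is no pentagon containing both pieces $p(j)=(v_2,v_1)$ and $p(j+1)=(a,b)$. In fact every inner vertex of $P_1$ has degree~$2$ in the graph consisting of $H$ together with the diamond edges, so any cycle through the edge $(v_2,v_1)$ is forced to contain all of $P_1$; a constant-length local swap is therefore impossible. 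You also have the diamond's geometry off --- the diagonal is $(a,c)$, not $(b,d)$ --- and the cycle you describe has even length $|P_1|+2$, not odd.

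The paper's argument uses a different mechanism. After parking $p(j+1)$ on the edge $(b,c)$, it exploits \emph{two} nested odd cycles built from $P_1$: the short one $C_1=P_1\cup\{(a,d)\}$ of length $|P_1|+1$, carrying the pieces $p(1),\dots,p(j)$, and the long one $C_2=P_1\cup\{(a,b),(b,c),(c,d)\}$ of length $|P_1|+3$, carrying those same pieces together with $p(j+1)$. One full rotation around $C_1$ followed by one full rotation around $C_2$ in the opposite direction returns $p(1),\dots,p(j-1)$ to their original edges while $p(j)$ and $p(j+1)$ trade places. Each rotation costs $O(|P_1|)=O(n)$ \textsf{slide} operations, and that is where the $O(n)$ bound actually comes from; your steps~(1) and~(3) are unnecessary, since the exposed vertex $c$ is already on the diamond to begin with.
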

   \begin{proof}
          We swap $p(j)$ and $p(j+1)$ as follows~(Figure~\ref{fig:P1_long}~(right)).
          Initially, we set $\tilde{p}=p$.

             \begin{enumerate}
            \item Slide $\tilde{p}(j+1)$ so that $\tilde{p}(j+1)=(b,c)$ to make the vertex $a$ exposed. 
            \item We rotate the current placement $\tilde{p}$ along the cycle consisting of $P_1$ with edge $(a,d)$ in the anti-clockwise order so that $\tilde{p}(j)=(d,v_3)$ and the exposed vertex is $a$.
            \item We rotate the current placement $\tilde{p}$ along the cycle consisting of the path $P_1$ with edges $(a,b), (b,c), (c,d)$ in the clockwise order so that $\tilde{p}(j+1)=(v_1,v_2)$ and $\tilde{p}(j)=(a,b)$ and that the exposed vertex is $c$.
           \end{enumerate}
          Then $\tilde{p}(j+1)=p(j)$ and $\tilde{p}(j)=p(j+1)$ hold, which means that $p(j)$ and $p(j+1)$ have been swapped without changing the placement of the other pieces.
          The number of \textsf{slide} operations is $O(n)$, as we rotate along the two cycles, each of which requires $O(n)$ \textsf{slide} operations by Observation~\ref{obs:rotation}.
  Thus the claim holds.          
   \end{proof}

  Let $q$ be a target placement aligned with $H$.
  We may assume that the vertex $c=2n+1$ is exposed in the placement $q$. 
  Then there exists a permutation $\sigma$ on $[n]$ such that $q(i)=p(\sigma(i))$ for any $i\in [n]$.
  The placement $p$ can be reconfigured to $q$ by swapping two consecutive pieces in $p$ so that the resulting ordering corresponds to $\sigma$.
  This can be implemented by the bubble sort algorithm.
  By Claims~\ref{clm:lc_swap1} and~\ref{clm:lc_swap2}, we can swap two consecutive pieces in some position by $O(n^2)$ \textsf{slide} operations, and then we can move the next two consecutive pieces in the position by $O(n)$ \textsf{slide} operations.
  Since the bubble sort algorithm requires to swap two consecutive pieces $O(n^2)$ times, the total number of operations is $O(n^3)$.
\end{proof}

\subsection{Proof of Lemma~\ref{lem:LC_localstructure}: Existence of a Local Structure}\label{sec:lc_localstructure}

In this section, we show Lemma~\ref{lem:LC_localstructure}, saying that there exist a diamond and a Hamilton cycle that satisfy the condition of Lemma~\ref{lem:heikou_cycle} if $G$ is a locally-connected triangular grid graph, but not isomorphic to the Star of David graph.

To show the lemma, we introduce the dual graph.

\begin{figure}
  \centering
  \includegraphics[keepaspectratio,width=0.8\textwidth]
  {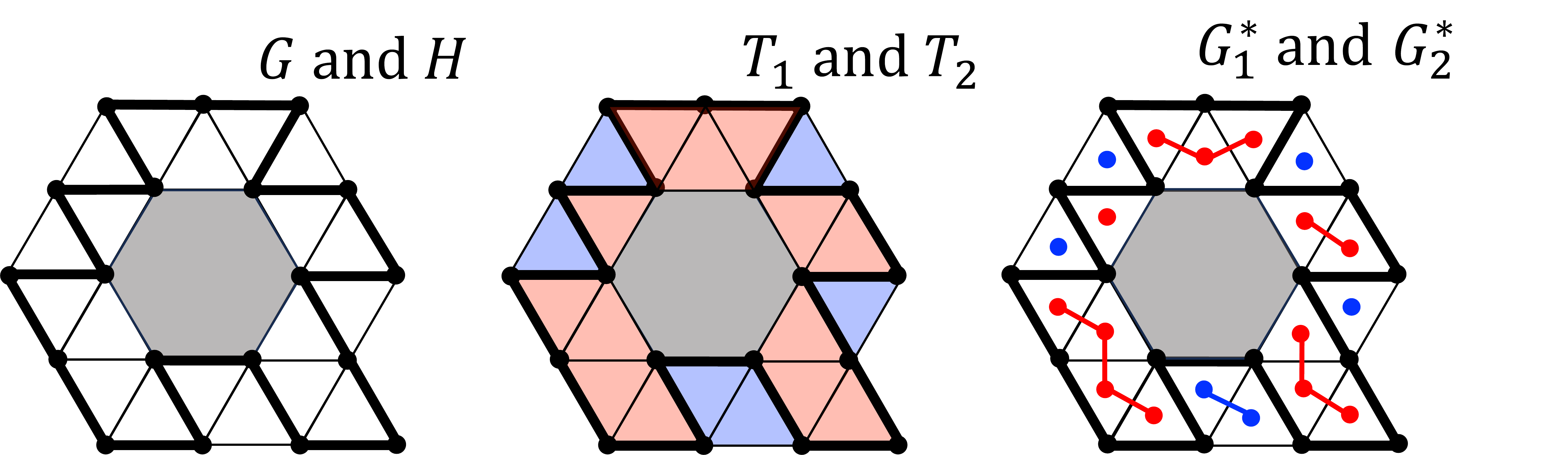}
  \caption{Examples of $G$ and $H$~(thick edges) and their dual graphs $G^\ast_1$~(blue) and $G^\ast_2$~(red).}
  \label{fig:G_T}
\end{figure}

Let $G$ be a triangular grid graph which is locally-connected, but not isomorphic to the Star of David graph.
We denote by $\B$ the family of the boundary cycles of the holes and the outer face.
We also denote the set of the inner edges of $G$ by $\Ein =E\setminus \bigcup_{C\in\B}E(C)$.
In other words, $\Ein$ is the set of edges that are not aligned on any boundary cycles in $\B$.

Since $G$ is planar, we can define the dual graph of $G$.
We here construct the dual graph $G^\ast$ only from the edge set $\Ein$.
That is, the vertex set of $G^\ast$ is the set of triangle faces in $G$, denoted by $T$.
Moreover, $G^\ast$ has an edge between two triangles $t$ and $t'$ if they are adjacent in $G$.
Thus the edge set of $G^\ast$ is identical to $\Ein$.

By Theorem~\ref{thm:hamiltonian}, $G$ has a Hamilton cycle $H$.
Since $H$ is a closed curve on the plane, the Hamilton cycle $H$ divides $G$ into two parts~(Figure~\ref{fig:G_T}~(left)).
Let $G^\ast_1$ and $G^\ast_2$ be the graphs obtained from the dual graph $G^\ast$ by removing the edges of $H\cap \Ein$ in $G^\ast$ such that $G^\ast_1$ corresponds to the part having the outer face.
Let $T_i$ be the set of triangles in $G^\ast_i$ for $i=1,2$.

We first present basic observations for the dual graphs.

\begin{observation}\label{obs:dual}
If an edge $e$ is shared with two triangles each from $G^\ast_1$ and $G^\ast_2$, then the edge $e$ is contained in $H$.
\end{observation}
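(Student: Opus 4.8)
The statement to prove is Observation~\ref{obs:dual}: if an edge $e \in \Ein$ is shared by two triangles $t_1 \in T_1$ and $t_2 \in T_2$, then $e$ lies on the Hamilton cycle $H$. The natural approach is a direct contrapositive argument using how $G^\ast_1$ and $G^\ast_2$ were constructed. Recall that $G^\ast$ has vertex set $T$ (all triangle faces) and edge set exactly $\Ein$, where the edge of $G^\ast$ associated to an inner edge $e$ of $G$ joins the two triangles of $G$ incident to $e$. The graphs $G^\ast_1$ and $G^\ast_2$ are obtained from $G^\ast$ by \emph{deleting} precisely the edges of $G^\ast$ corresponding to $H \cap \Ein$, and then $T_1, T_2$ are the triangle sets of the two resulting pieces (the one containing the outer face being $G^\ast_1$).

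\textbf{Key step.} First I would observe that an inner edge $e$ of $G$ corresponds to an edge $e^\ast$ of $G^\ast$ whose two endpoints are exactly the triangles $t_1, t_2$ incident to $e$. Suppose, for contradiction, that $e \notin H$. Then $e \notin H \cap \Ein$, so the edge $e^\ast$ is \emph{not} among the edges deleted when forming $G^\ast_1 \cup G^\ast_2$; hence $e^\ast$ survives in whichever of $G^\ast_1, G^\ast_2$ contains its endpoints. But a single edge of a graph cannot have one endpoint in one connected component and the other endpoint in a different component: $G^\ast_1$ and $G^\ast_2$ are vertex-disjoint subgraphs (they partition $T$, since $H$ separates the plane into exactly two regions and each triangle lies strictly on one side of the closed curve $H$). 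So $t_1$ and $t_2$ cannot lie in different parts $T_1$ and $T_2$ while $e^\ast$ is an undeleted edge joining them — contradiction. Therefore $e \in H$, as claimed.

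\textbf{Main obstacle.} The only subtlety is making precise that every triangle face of $G$ lies strictly inside exactly one of the two regions bounded by the closed Jordan curve $H$ — i.e., that $T = T_1 \sqcup T_2$ with no triangle straddling $H$. This follows because $H$ is a Hamilton cycle, so each triangle $t$, being a small face, has its interior disjoint from the curve $H$ (the curve passes through vertices and along edges, not through the interior of a face), and by the Jordan curve theorem the interior of $t$ lies entirely in one of the two complementary regions. Once this is granted, the argument above is essentially a one-line consequence of the definition of edge deletion in $G^\ast$. I would state the Jordan-curve observation explicitly and then close with the contrapositive as above.
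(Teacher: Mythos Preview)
Your proposal is correct and follows essentially the same approach as the paper, which simply states that the claim holds by definition of the dual graphs $G^\ast_1$ and $G^\ast_2$. You have spelled out in detail (including the Jordan-curve justification that $T_1$ and $T_2$ partition $T$) what the paper compresses into a single sentence, but the underlying argument---that an inner edge $e\notin H$ would survive in $G^\ast$ after deleting $H\cap\Ein$ and hence force its two incident triangles into the same part---is identical.
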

\begin{proof}
The claim holds by definition of the dual graphs.
\end{proof}

\begin{lemma}
  \label{lem:3degree_tree}
  Both $G^\ast_1$ and $G^\ast_2$ are forests with maximum degree at most $3$. 
\end{lemma}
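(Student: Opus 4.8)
The plan is to show that $G^\ast_1$ and $G^\ast_2$ each have maximum degree at most $3$ and contain no cycle. The degree bound is immediate: the dual graph $G^\ast$ already has maximum degree at most $3$, since a triangle in a triangular grid graph is adjacent to at most three other triangles (one across each of its three edges), and $G^\ast_i$ is obtained from $G^\ast$ by deleting edges, which cannot increase degrees. So the substance of the lemma is that each $G^\ast_i$ is acyclic.

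Suppose, for contradiction, that $G^\ast_1$ (say) contains a cycle $\Gamma$. Viewing $\Gamma$ in the plane, each vertex of $\Gamma$ is a triangular face of $G$ and each edge of $\Gamma$ is an inner edge shared by two consecutive such triangles; so $\Gamma$ traces out a closed region of the plane. I would argue that this region must enclose at least one vertex of $G$ in its interior, namely a vertex $v$ all of whose incident triangles lie ``inside'' $\Gamma$. Because $\Gamma$ lives entirely in $G^\ast_1$, none of the edges used by $\Gamma$ belongs to $H$ (by Observation~\ref{obs:dual}, an edge shared by triangles on opposite sides of $H$ must lie on $H$; here both triangles are on the same side, the $G^\ast_1$ side, but more to the point no edge of $\Gamma$ was removed when passing from $G^\ast$ to $G^\ast_1$, so no edge of $\Gamma$ lies in $H$). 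Now trace the Hamilton cycle $H$: since $v$ is a vertex of $G$, $H$ passes through $v$, so $H$ enters and leaves the region bounded by $\Gamma$. But to cross from outside $\Gamma$ to inside, $H$ must use an edge of $G$ that passes ``through'' the dual cycle $\Gamma$, i.e.\ an inner edge dual to one of the edges of $\Gamma$; equivalently, $H$ must use one of the edges of $G$ that $\Gamma$ separates, forcing that edge to be deleted when forming $G^\ast_1$ from $G^\ast$, contradicting $\Gamma \subseteq G^\ast_1$. Hence no such cycle exists, and $G^\ast_1$ is a forest; the same argument applies to $G^\ast_2$.

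The cleanest way to make the crossing argument rigorous is a Jordan-curve / parity count: the closed region $R$ bounded by $\Gamma$ is a union of triangular faces of $G$, all lying in $G^\ast_1$; its boundary $\partial R$ consists of edges of $G$ none of which is an inner edge dual to an edge of $\Gamma$. The Hamilton cycle $H$ visits every vertex of $G$, in particular vertices interior to $R$ and vertices exterior to $R$ (here I would note that $R$ cannot be all of $G$, since $R$ lies on one side of $H$ and $H$ is nontrivial), so $H$ crosses $\partial R$. Each such crossing uses an edge of $G$ joining an interior triangle to an exterior triangle across an edge of $\Gamma$, and Observation~\ref{obs:dual} (or rather the construction of $G^\ast_1$) forces that edge to have been removed — contradiction.

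The main obstacle I anticipate is the topological bookkeeping: carefully defining the closed region $R$ associated to the dual cycle $\Gamma$, verifying that its interior really contains a vertex of $G$ (this uses that $\Gamma$ is a cycle in the \emph{triangular} dual, so consecutive triangles share an edge and the region genuinely ``wraps around'' a grid vertex), and arguing that $H$ must cross $\partial R$ at an edge that is dual to an edge of $\Gamma$ rather than slipping across at a boundary edge in $\B$. Handling the boundary cycles in $\B$ correctly is the delicate point: the dual graph was built only from $\Ein$, so edges of $\B$ are never dual edges, and I would use this to conclude that the only way for $H$ to pass between the two sides of $\Gamma$ is through an inner edge that $\Gamma$ separates, completing the contradiction.
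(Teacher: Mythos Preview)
Your proposal is correct and follows essentially the same approach as the paper: a cycle $\Gamma$ in $G^\ast_i$ yields a cut of $G$ (the primal edges dual to $\Gamma$) disjoint from $H$, so $H$ lies entirely on one side, contradicting Hamiltonicity. The paper's proof is three sentences and does not dwell on the topological bookkeeping you flag; your boundary-edge worry is moot because the dual curve of $\Gamma$ only crosses inner edges by construction, and the fact that both sides of the cut contain vertices follows immediately by looking at any single triangle of $\Gamma$ (the curve enters and exits through two of its sides, separating one vertex from the other two).
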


\begin{proof}
  Since each triangle in $T_1\cup T_2$ has at most $3$ adjacent triangles, each vertex of $G^\ast_1$ and $G^\ast_2$ is of degree at most $3$.
  We will show that $G^\ast_1$ is a forest.
  The case of $G^\ast_2$ can be proved in a similar way.
  Suppose to the contrary that $G^\ast_1$ has a cycle $C$.
  Then the cycle $C$ corresponds to a cut of $G$, which divides $G$ into two parts $G'_1$ and $G'_2$.
  Since $C$ has no edge of $H$, $H$ is included in either $G'_1$ or $G'_2$.
  Since $G'_1$ and $G'_2$ have vertices, this contradicts that $H$ is a Hamilton cycle.
  Thus $G^\ast_1$ has no cycles, which means that it is a forest.
\end{proof}

It follows from Lemma~\ref{lem:3degree_tree} that each connected component of $G^\ast_1$ and $G^\ast_2$ is a tree.
Moreover, the following lemma claims that there exists such a connected component with at least $2$ vertices.

  \begin{figure}
    \centering
    \includegraphics[keepaspectratio,width=0.8\textwidth]
    {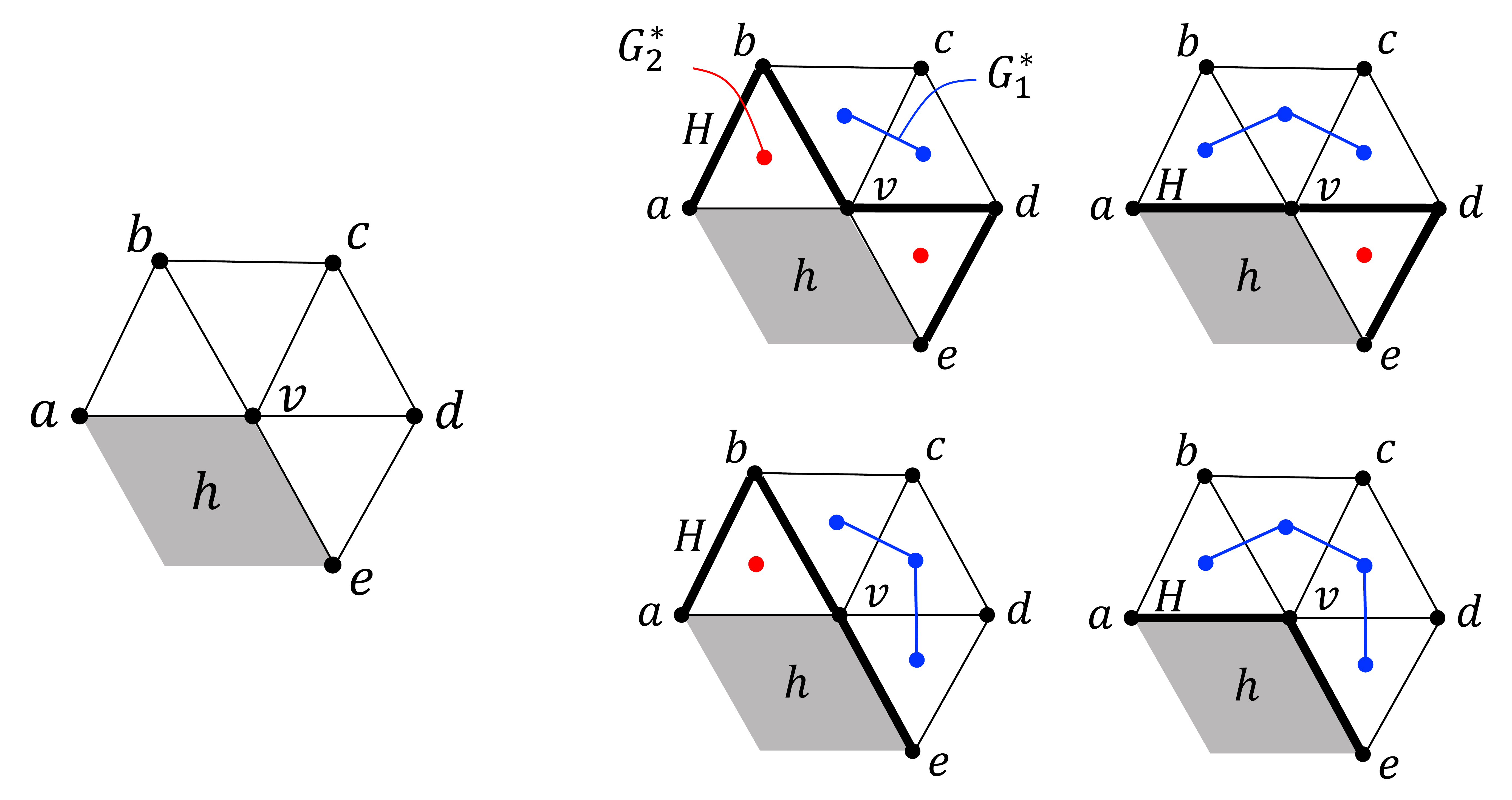}
    \caption{Proof of Lemma~\ref{lem:least_2vertice}. A vertex $v$ of a hole with interior angle 120$\tcdegree$~(left). The $4$ possibilities that $H$ goes through $v$, assuming that $G^\ast_2$ consists of singletons~(right).}
    \label{fig:120corner}
  \end{figure}

\begin{lemma}
  \label{lem:least_2vertice}
  Suppose that $G$ has at least $5$ vertices.
  Then at least one of $G^\ast_1$ and $G^\ast_2$ has a connected component with at least $2$ vertices.
\end{lemma}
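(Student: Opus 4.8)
The plan is to prove the contrapositive: if both $G^\ast_1$ and $G^\ast_2$ consist only of isolated vertices, then $|V|<5$. By Observation~\ref{obs:dual} together with the definition of $G^\ast_1,G^\ast_2$, an inner edge fails to lie on $H$ exactly when its two incident triangles lie in the same one of $T_1,T_2$; hence the assumption ``both $G^\ast_1$ and $G^\ast_2$ are edgeless'' is equivalent to $\Ein\subseteq E(H)$. So I would assume $\Ein\subseteq E(H)$ and $|V|=2n+1\ge 5$ and derive a contradiction. The first step is a local degree count: since $H$ is a cycle, every vertex is incident to exactly two edges of $H$, hence to at most two edges of $\Ein$. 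As $G$ is locally-connected it has no cut vertex, so no vertex is a pinch point and every vertex lies on at most one cycle of $\B$. A vertex lying on a boundary cycle with $k$ triangles of $G$ around it has exactly $k-1$ incident inner edges, while an interior vertex has $6$; therefore $G$ has no interior vertex, and every vertex has at most $3$ triangles around it (in particular, maximum degree $\le 4$).

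Next I would rule out holes. Suppose $C\in\B$ is a hole, so $p:=|C|\ge 6$. Since no vertex of $C$ is a pinch point, the hole is the \emph{only} non-triangular face incident to a vertex $v\in C$, so the interior angle of the hole region at $v$ equals $360^\circ-60^\circ k_v$, where $k_v$ is the number of triangles of $G$ at $v$. Summing over the simple $p$-gon $C$ and equating with $(p-2)\cdot 180^\circ$ yields $\sum_{v\in C}k_v=3p+6>3p$, so some $v\in C$ has $k_v\ge 4$, hence at least three incident inner edges -- contradicting the first step. Therefore $\B=\{C_{\mathrm{out}}\}$. Since $G$ has no interior vertex, every vertex lies on the simple cycle $C_{\mathrm{out}}$, and (all bounded faces being triangles) $G$ is a triangulation of the $(2n+1)$-gon bounded by $C_{\mathrm{out}}$.

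In this triangulated-polygon case, $\Ein$ is precisely the set of chords, of which there are $(2n+1)-3=2n-2$ by elementary counting; since $\Ein\subseteq E(H)$ and $|E(H)|=2n+1$, the cycle $H$ uses exactly $3$ edges of $C_{\mathrm{out}}$. On the other hand, here $G^\ast$ is a tree on $2n-1\ge 3$ vertices, so it has at least two leaves, and each leaf is an ear: a triangle two of whose sides lie on $C_{\mathrm{out}}$, with a degree-$2$ tip. A degree-$2$ vertex forces both of its (boundary) edges into $H$, and two distinct degree-$2$ vertices cannot be adjacent once $2n+1\ge 4$ (that would force $C_{\mathrm{out}}$ to be a triangle). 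Hence the two ears push four distinct edges of $C_{\mathrm{out}}$ into $H$, contradicting that only three lie there. This contradiction shows $\Ein\not\subseteq E(H)$, so some inner edge has both incident triangles in a common part, which gives the required component of size $\ge 2$ in $G^\ast_1$ or $G^\ast_2$.

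I expect the hole-elimination step to be the crux: one must be sure that the identity ``interior angle $=360^\circ-60^\circ k_v$'' holds at \emph{every} vertex of the hole -- this is exactly where local connectivity (no pinch points) enters -- and that the angle sum $(p-2)\cdot 180^\circ$ applies even for non-convex holes. The degree count and the two-ears argument are then routine, and the hypothesis $|V|\ge 5$ is used only to ensure that $G^\ast$ has at least two vertices (hence at least two ears) and that two distinct degree-$2$ vertices cannot be adjacent.
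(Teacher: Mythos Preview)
Your proof is correct, but it takes a genuinely different route from the paper's.

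The paper argues directly rather than by contrapositive. It first observes that if the interior of $H$ contains no hole of $G$, then $G^\ast_2$ is connected (hence a tree) and, since $|V|\ge 5$, has at least two vertices, so we are done. Otherwise there is a hole inside $H$; the paper then assumes only that $G^\ast_2$ consists of singletons and locates a vertex $v$ on the hole with interior angle $120^\circ$. By local connectivity $v$ has degree $5$, and a short case check on how $H$ traverses $v$ (there are only four possibilities once $G^\ast_2$ is assumed edgeless) shows that the two ``middle'' triangles at $v$ both lie in $G^\ast_1$, giving the desired component of size $\ge 2$ there.

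Your argument instead translates ``both $G^\ast_1$ and $G^\ast_2$ edgeless'' into $\Ein\subseteq E(H)$, deduces the degree bound on inner edges, kills all holes via a Gauss--Bonnet style angle sum, and finishes with an ear-count in the resulting triangulated polygon. The trade-offs: the paper's proof is shorter and more visual, pinpointing two explicit adjacent triangles after one local case split; it also needs only the weaker hypothesis that $G^\ast_2$ is edgeless. Your approach avoids any case analysis and replaces it with clean global counting, which is arguably more robust and would transfer more readily to variants of the setting. Both proofs use local connectivity at the same essential point (each vertex lies on at most one boundary cycle), and both ultimately hinge on the fact that a hole forces some vertex with too many inner edges---you see this through the averaged angle sum, the paper through the single convex corner of the hole.
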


\begin{proof}
  If $G^\ast_2$ has no holes, then $G^\ast_2$ is connected, and hence it is a tree by Lemma~\ref{lem:3degree_tree}.
  Since $H$ has at least $5$ vertices, $G^\ast_2$ has at least $2$ vertices.
  Thus we may assume that $G^\ast_2$ has a hole $h$.
  
  We will show that, if $G^\ast_2$ consists of singletons, then $G^\ast_1$ has a connected component with at least $2$ vertices.
  Since a triangular grid graph is an induced subgraph of the triangular lattice, the hole $h$ of $G$ has a vertex $v$ with interior angle $120\tcdegree$.
  Since $G$ is locally-connected, $v$ has degree $5$.
  We denote by $a,b,c,d,e$ the vertices in $N(v)$ surrounded by $v$ in the clockwise order~(Figure~\ref{fig:120corner}~(left)).
  Since the Hamilton cycle $H$ goes through $v$ and $G^\ast_2$ consists of singletons, the possible ordering of traversing the vertices of $N(v)$ along $H$ are $(a,b,v,d,e)$, $(a,v,d,e)$,
  $(a,b,v,e)$, and $(a,v,e)$~(Figure~\ref{fig:120corner}~(right)).
  In each case, we can see that the triangles induced by $\{b,c,v\}$ and $\{c,d,v\}$ are contained in $G^\ast_1$, and hence $G^\ast_1$ has a connected component with at least $2$ vertices.
  Thus the lemma holds.
  \end{proof}

We need the following technical lemma on the cycle surrounding a connected component in $G^\ast_1$ and $G^\ast_2$.

\begin{lemma}
  \label{lem:outside_hole}
  Let $G'$ be a connected component in $G^\ast_1$ or $G^\ast_2$, and $T'$ be the set of triangles corresponding to $G'$. 
  We denote by $C$ the boundary cycle of $T'$ in $G$.
  Then the following two statements hold.
  \begin{enumerate}
  \item[\textup{(1)}]   If an edge $e\in C$ is not in Hamilton cycle $H$, then $e$ is on some boundary cycle in $\mathcal{B}$.
  \item[\textup{(2)}]   For each consecutive two edges in $C$, at least one of them is contained in $H$.  
  \end{enumerate}
  \end{lemma}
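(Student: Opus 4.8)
The plan is to analyze the boundary cycle $C$ of a connected component $G'$ of $G^\ast_1$ or $G^\ast_2$ by carefully tracing $C$ triangle by triangle. Recall $C$ is the boundary of the union $T'$ of triangles corresponding to the vertices of $G'$, so each edge $e\in C$ is an edge of exactly one triangle $t\in T'$ and is \emph{not} shared with another triangle of $T'$ on the other side. For such an edge $e$, there are exactly two possibilities: either $e$ lies on some boundary cycle $C'\in\B$ (i.e., $e$ borders a hole or the outer face, so $e\notin\Ein$), or $e$ is an inner edge of $G$ shared with a triangle $t'\notin T'$. In the latter case, since $t\in T'\subseteq T_i$ and $t'\notin T_i$, the triangles $t$ and $t'$ lie on opposite sides of the Hamilton cycle $H$ (one in $G^\ast_1$, the other in $G^\ast_2$), and hence by Observation~\ref{obs:dual} the shared edge $e$ is contained in $H$. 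This dichotomy immediately yields statement~(1): if $e\in C$ is not on any boundary cycle in $\B$, then $e\in\Ein$ is shared with a triangle outside $T'$, so $e\in H$.

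**Proving statement~(2).** For the second statement I would consider two consecutive edges $e_1,e_2$ of $C$ meeting at a vertex $v$ of $C$. The key observation is that at $v$, the cone of triangles of $T'$ incident to $v$ forms a single angular fan: because $G'$ is connected, the triangles of $T'$ around $v$ (if more than one) are consecutive in the cyclic order of faces around $v$, and $e_1,e_2$ are precisely the two extreme edges bounding this fan. I would argue that $e_1$ and $e_2$ cannot \emph{both} fail to be in $H$. Suppose both $e_1\notin H$ and $e_2\notin H$. By statement~(1) already established, each of $e_1,e_2$ then lies on a boundary cycle in $\B$ — that is, each borders a hole or the outer face. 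If both $e_1$ and $e_2$ border faces of $\B$, then $v$ is a vertex at which two boundary-face edges meet, and the angular fan of $T'$-triangles at $v$ is flanked on both sides by non-triangle faces (holes/outer face). I would then derive a contradiction with the structure of the Hamilton cycle $H$: since every edge incident to $v$ that belongs to $\Ein$ and is shared between a triangle of $T'$ and a triangle not in $T'$ must be in $H$, but there are no such edges at $v$ (both flanking edges are in $\B$, and internal edges of the fan stay inside $T'$), the cycle $H$ would have to traverse $v$ using two edges both incident to the $\B$-faces, which forces $H$ to separate the triangles of the fan from the rest of $G$ — contradicting that $H$ is a Hamilton cycle of the connected graph $G$ unless $G$ has no vertices outside this fan, which is excluded when $G$ has at least $5$ vertices (and the degenerate small cases are handled directly).

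**The main obstacle.** The delicate part will be the contradiction in statement~(2): making precise the claim that two consecutive $\B$-edges at a vertex $v$ of $C$ force $H$ to isolate a part of $G$. I expect to lean on planarity and on the fact that $H$ is a closed Jordan curve: the triangles of $T'$ incident to $v$ all lie on the same side of $H$ (they are in $T_i$), and if $v$'s only edges crossing between $T'$ and its complement are $\B$-edges (which cannot be in $H$ since $\B$-edges border non-triangle faces that $H$ cannot ``enter''), then locally $H$ cannot pass through $v$ at all — yet $H$ is Hamiltonian so it must pass through $v$. Reconciling these requires noting that $H$ passes through $v$ using two edges that are \emph{not} incident to the triangles of $T'$ at $v$, i.e., on the far side of the $\B$-faces; this is geometrically impossible because the $\B$-faces occupy the entire remaining angle around $v$. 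I would make this rigorous by enumerating, as in the proof of Lemma~\ref{lem:least_2vertice}, the possible local pictures of $H$ around a vertex of a hole with the given angular configuration, and checking that none of them is consistent with both $e_1,e_2\in\B$. The local connectivity of $G$ (every $G[N(v)]$ connected) is what keeps the number of cases small and rules out pathological configurations.

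**Remark on bookkeeping.** Throughout I would keep careful track of orientation: when traversing $C$ so that $T'$ lies consistently (say) to the left, each maximal run of consecutive $C$-edges that are \emph{not} in $H$ corresponds to a single $\B$-face being bordered, and statement~(2) is equivalent to saying every such run has length exactly $1$. This reformulation is the cleanest way to present the argument, and it makes transparent why the conclusion is exactly what is needed later — each diamond-producing configuration will arise at a vertex of $C$ where an $H$-edge and a $\B$-edge meet, with a triangle of $T'$ squeezed between them.
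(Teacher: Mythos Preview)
Your argument for statement~(1) is correct and essentially the same as the paper's: an edge of $C$ that is an inner edge must be shared between a triangle in $T'$ and one in the opposite $T_j$, hence lies on $H$ by Observation~\ref{obs:dual}.

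For statement~(2), however, you have the right setup but miss the one-line observation that finishes the argument, and instead drift into an unnecessary case analysis. Once you know that both $e_1$ and $e_2$ lie on boundary cycles in $\B$, local connectivity of $v$ forces them to lie on the \emph{same} cycle of $\B$: if $v$ were incident to two distinct faces in $\B$, then $G[N(v)]$ would be disconnected. Consequently the single $\B$-face at $v$ occupies the entire angular sector outside the fan of $T'$-triangles, so \emph{every} triangle incident to $v$ already belongs to $T'$. Now each edge through $v$ is either interior to this fan (joining two $T'$-triangles on the same side of $H$, hence not in $H$) or is one of $e_1,e_2$ (assumed not in $H$). Thus no edge through $v$ lies on $H$, contradicting that the Hamilton cycle $H$ visits $v$. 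That is the paper's entire proof of~(2).

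Your ``main obstacle'' paragraph worries about $H$ passing through $v$ via edges ``on the far side of the $\B$-faces'' and proposes to handle this by enumerating local pictures as in Lemma~\ref{lem:least_2vertice}. But local connectivity guarantees there \emph{is} no far side: there is exactly one $\B$-face at $v$, and the $T'$-fan fills everything else. Treating local connectivity merely as something that ``keeps the number of cases small'' undersells it; it is precisely the structural fact that collapses~(2) to a single line and makes your proposed enumeration unnecessary.
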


\begin{figure}
  \centering
  \includegraphics[keepaspectratio,width=0.75\textwidth]
  {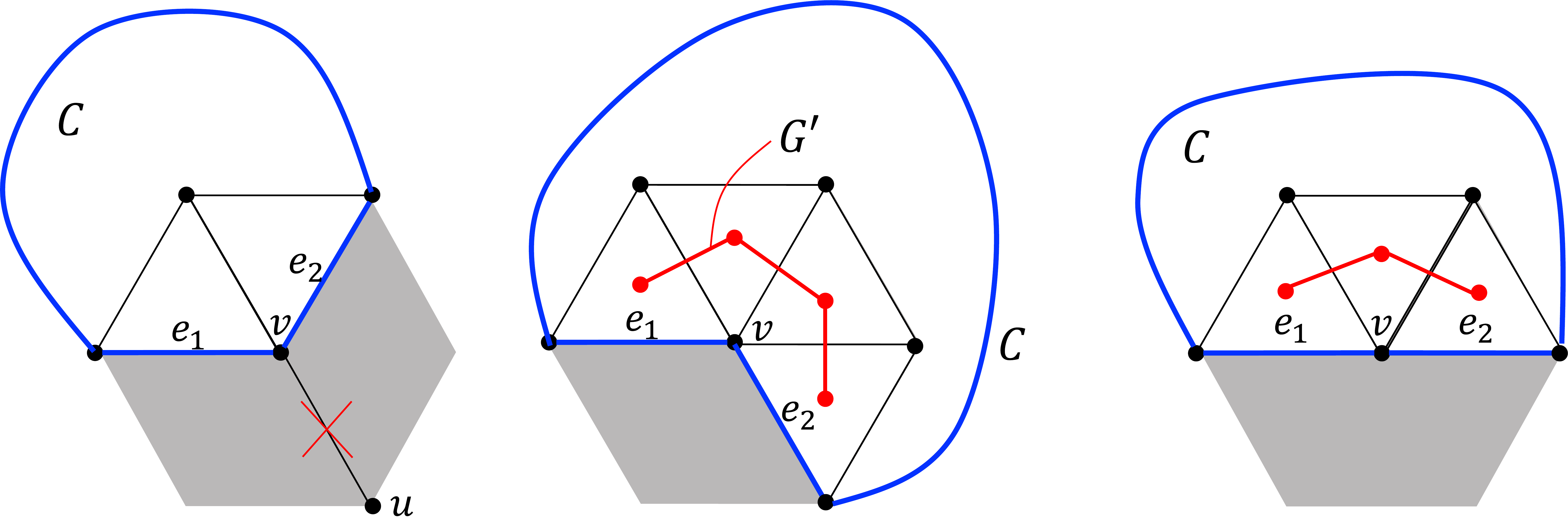}
  \caption{Proof of Lemma~\ref{lem:outside_hole}~(2).}
  \label{fig:outside_cycle}
\end{figure}

\begin{proof}
  (1)
  We will show that edge $e$ does not belong to two distinct triangles.
  Suppose $e$ is contained in two triangles. 
  Since $e$ is an edge on the boundary cycle of $T'$, one triangle belongs to $T_1$ and the other belongs to $T_2$.
  This contradicts Observation~\ref{obs:dual}.
  Thus any edge $e\in C$ cannot belong to two triangles, implying that $e$ must be on some cycle in $\B$.
  
  (2)
  Let $e_1,e_2$ be two consecutive edges on $C$.
  We denote by $v$ the common end vertex of $e_1$ and $e_2$.
  Suppose to the contrary that none of $e_1, e_2$ belongs to $H$.
  By~(1), $e_1$ and $e_2$ are in some cycles in $\B$.
  Since $v$ is locally-connected, $e_1$ and $e_2$ are on the same cycle in $\B$, and the triangles containing $v$ are all included in $T'$~(see e.g., Figure~\ref{fig:outside_cycle}).  
  Then all the edges incident to $v$ are not contained in $H$, which contradicts that $H$ goes through $v$.
  Thus at least one of $e_1$ and $e_2$ belongs to $H$.
\end{proof}

We also use the following simple observation given by Hamersma at al.~\cite{ref:gourds}.

\begin{lemma}
  \label{lem:substructure}  
  Let $G'$ be a connected component with at least $3$ vertices in $G^\ast_1$ or $G^\ast_2$.
  Then $G'$ satisfies one of the following conditions.
  \begin{enumerate}
    \item[\textup{(a)}]  $G'$ has a leaf adjacent to a vertex of degree $2$.
    \item[\textup{(b)}]  $G'$ has a vertex of degree $3$ adjacent to at least $2$ leaves.
  \end{enumerate}
  \end{lemma}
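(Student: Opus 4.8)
The plan is to discard everything about triangular grids and reduce to a purely graph-theoretic fact. By Lemma~\ref{lem:3degree_tree} and the remark following it, every connected component of $G^\ast_1$ or $G^\ast_2$ is a tree of maximum degree at most $3$; hence it suffices to prove that every tree $T$ with at least $3$ vertices and maximum degree at most $3$ satisfies (a) or (b). I would argue by contradiction: assume $T$ has \emph{no} leaf adjacent to a vertex of degree $2$ (so (a) fails), and produce a vertex of degree $3$ adjacent to at least two leaves (so (b) holds).

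First I would record an elementary fact. Since $T$ is connected with at least $3$ vertices, the unique neighbor of any leaf cannot itself be a leaf, so it has degree $2$ or $3$; because (a) fails, the neighbor of \emph{every} leaf of $T$ has degree exactly $3$.

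Next I would take a longest path $P = v_0 v_1 \cdots v_k$ in $T$. Since $|V(T)| \ge 3$ we have $k \ge 2$, and both endpoints $v_0$ and $v_k$ are leaves. Applying the fact above to the leaf $v_0$, the vertex $v_1$ has degree $3$, so it has a neighbor $w \notin \{v_0, v_2\}$. I claim $w$ is a leaf: otherwise $w$ has a neighbor $x \neq v_1$, and since $T$ is acyclic, $x$ is not among $v_1, v_2, \ldots, v_k$, and $x \neq v_0$ because $v_0$'s only neighbor is $v_1$; then $x\, w\, v_1\, v_2 \cdots v_k$ would be a path strictly longer than $P$, contradicting the choice of $P$. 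Hence $v_1$ is a vertex of degree $3$ adjacent to the two leaves $v_0$ and $w$, so (b) holds, contradicting our assumption. This proves the lemma.

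I do not expect a genuine obstacle here — the statement is essentially folklore (as reflected by the attribution to Hamersma et al.). The only step requiring a little care is the longest-path extension argument, where one must verify that the extending vertex $x$ is really new; this is exactly where acyclicity of $T$ and the fact that $v_0$ is a leaf are used. The degenerate case $|V(T)| = 3$ is harmless: then $T$ is a path on three vertices and (a) holds directly, consistent with the above (since in that case (a) does not fail, the contradiction argument is never entered).
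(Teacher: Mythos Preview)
Your argument is correct. The reduction to ``any tree on at least three vertices with maximum degree $\le 3$ satisfies (a) or (b)'' is exactly the right move, and the longest-path argument is clean: the only delicate point---that the extension vertex $x$ is genuinely new---is handled properly via acyclicity and the fact that $v_0$ is a leaf.

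One cosmetic remark: the phrase ``contradicting our assumption'' at the end is a slip. Your assumption was only that (a) fails; deriving (b) does not contradict that---it simply finishes the proof of ``$\neg$(a) $\Rightarrow$ (b)''. Just drop that clause.

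As for comparison with the paper: the paper does not actually prove this lemma. It is stated as a ``simple observation given by Hamersma et al.'' and cited to~\cite{ref:gourds} without argument. So your self-contained proof goes beyond what the paper provides, and there is nothing further to compare.
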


We are now ready to prove the main lemma~(Lemma~\ref{lem:LC_localstructure}) restated as below.

\lclocal*

\begin{proof}
  By Lemma~\ref{lem:least_2vertice}, 
  there exists a tree, denoted by $G'$, with at least $2$ vertices in $G^\ast_1$ or $G^\ast_2$.
  Let $C$ be the boundary cycle of $G'$.

  Suppose that $G'$ has only $2$ vertices.
  Then $G'$ corresponds to a diamond, and $C$ has length $4$.
  Since $C$ has edges not contained in $H$, Lemma~\ref{lem:outside_hole}, with symmetry, implies that the diamond $G'$ satisfies one of the conditions of Lemma~\ref{lem:LC_localstructure}.

  Thus we may suppose that $G'$ has at least $3$ vertices.
By Lemma~\ref{lem:substructure}, $G'$ has either a leaf adjacent to a vertex of degree 2, or
a vertex of degree 3 adjacent to two leaves.

\medskip
\noindent
\textbf{(a)~When $G'$ has a leaf $v$ adjacent to a vertex $u$ of degree $2$.}
                  
          The triangles corresponding to $u$ and $v$ form a diamond.
          We denote the vertices of the diamond by $a, b, c, d$ as depicted in Figure~\ref{fig:1leaf}~(left).
          Since $C$ is the boundary cycle of $G'$, the cycle $C$ contains the edges $(a,b)$, $(b,c)$, and $(c,d)$.
          Let $C'=\{(a,b), (b,c), (c,d)\}$.
          By Lemma~\ref{lem:outside_hole}~(2) with symmetry, it suffices to consider the $5$ configurations of $H$ on $C'$ as depicted in Figure~\ref{fig:1leaf}~(right).
          The first $4$ configurations satisfy one of the conditions of Lemma~\ref{lem:LC_localstructure}.

 \begin{figure}
                    \centering
                    \includegraphics[keepaspectratio,width=0.8\textwidth]
                    {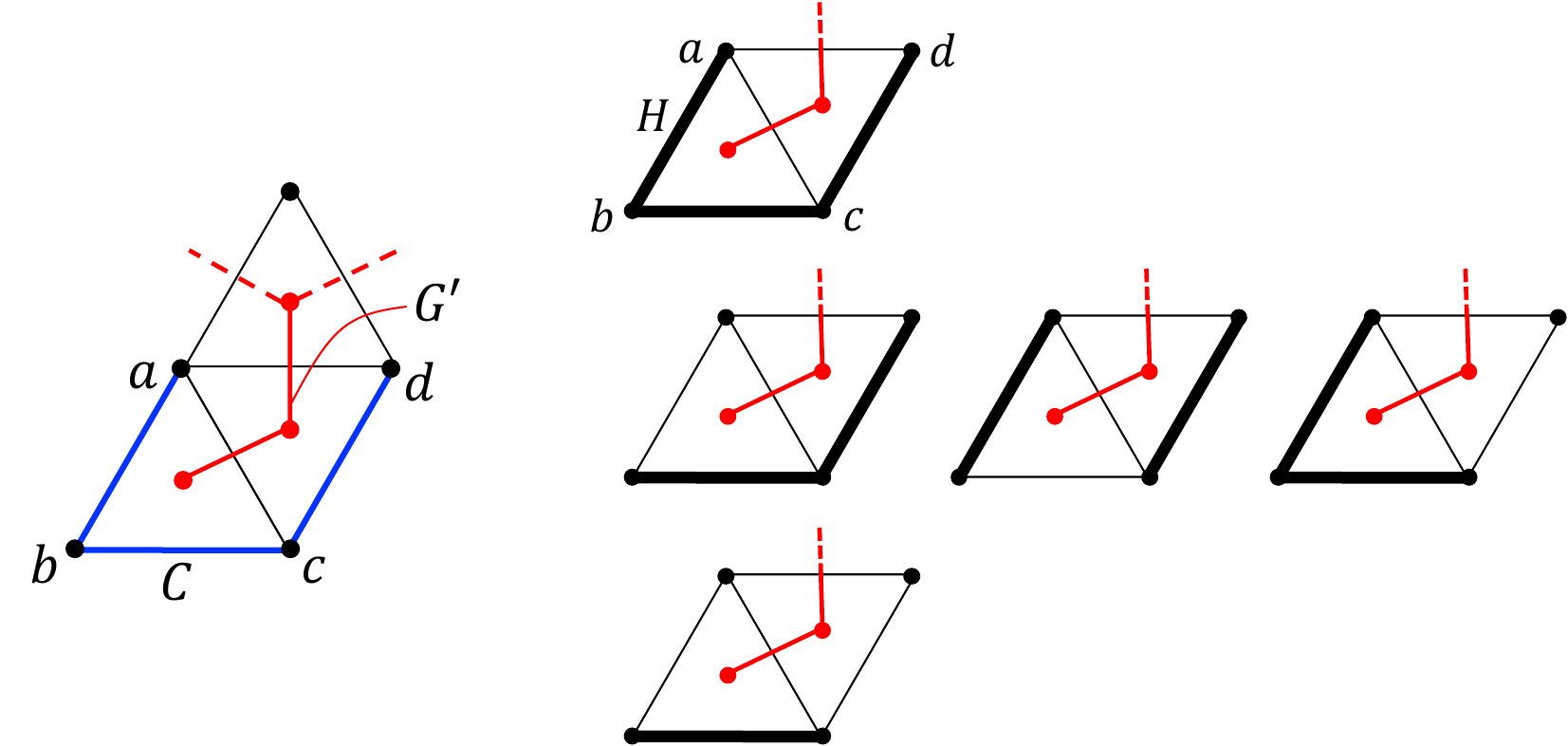}
                    \caption{A diamond of $G'$ in Case~(a) of the proof of Lemma~\ref{lem:LC_localstructure}~(left).
                    The $5$ possible patterns that $H$ goes through the $3$ edges $(a,b), (b,c), (c,d)$, where the thick edges are contained in $H$~(right).}
                    \label{fig:1leaf}
                  \end{figure}

          \begin{figure}
            \centering
            \includegraphics[keepaspectratio,width=0.7\textwidth]
            {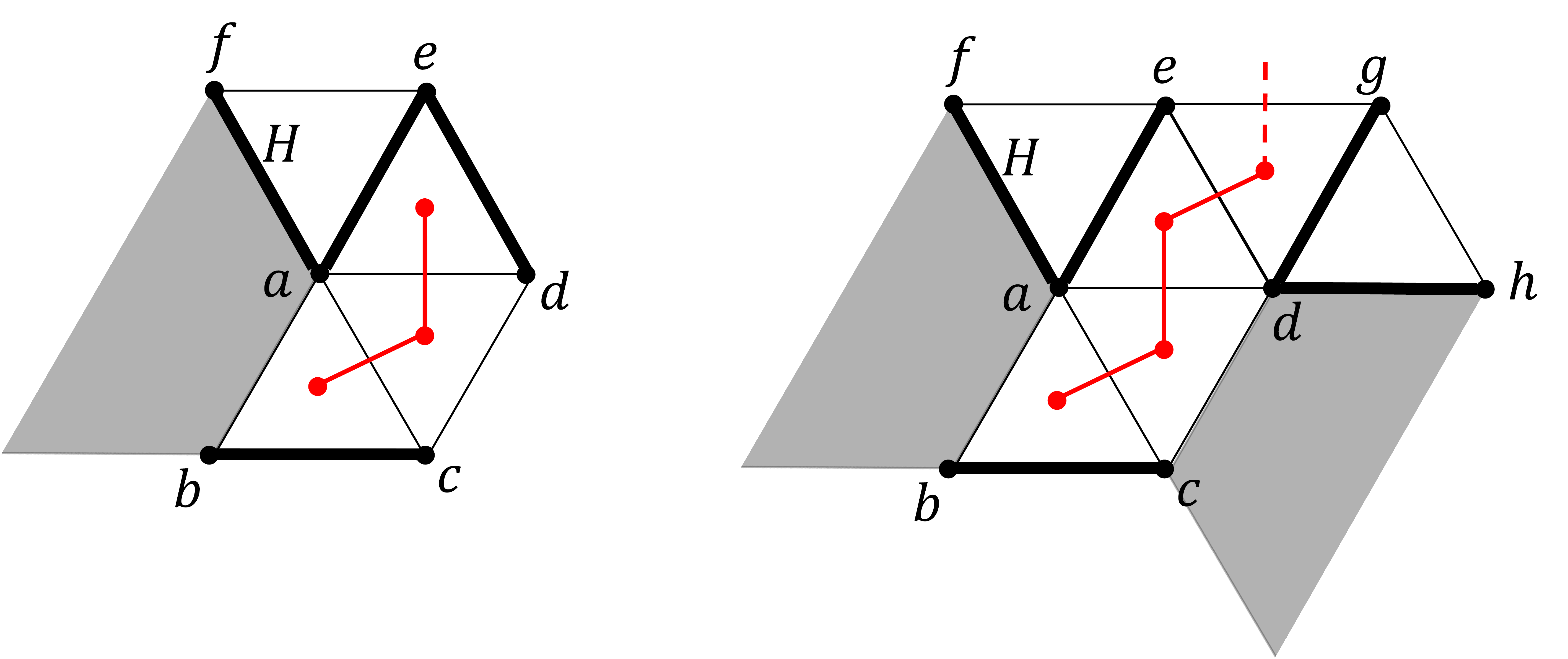}
            \caption{Case~(a) when $H$ contains only the edge $(b,c)$.}
            \label{fig:1leaf_hamilton}
          \end{figure}

     The remaining case is the last case, that is, when $H$ contains only the edge $(b,c)$, but not contain $(a,b)$ or $(c,d)$ on $C'$.
     In this case, Lemma~\ref{lem:outside_hole}~(1) implies that the edge $(a,b)$ is on some boundary cycle in $\B$.
     Hence $a$ is of degree at most $5$.
     Since $H$ contains the vertex $a$ without using edges $(a,b)$, $(a,c)$, or $(a,d)$, the vertex $a$ has $2$ more edges contained in $H$.
     We denote the end vertices of the $2$ edges by $e$ and $f$.
      Thus $H$ contains the edges $(a,e)$ and $(a,f)$~(Figure~\ref{fig:1leaf_hamilton}).

        If $H$ contains the edge $(d,e)$, then the diamond induced by $\{a, c, d, e\}$ satisfies the condition~(ii) of Lemma~\ref{lem:LC_localstructure}~(Figure~\ref{fig:1leaf_hamilton}~(left)).
          Thus we may assume that $H$ does not contain the edge $(d,e)$.
          Using the same argument applied for the vertex $d$, we can see that the vertex $d$ has $2$ more edges contained in $H$, denoted by $(d,g)$ and $(d,h)$~(Figure~\ref{fig:1leaf_hamilton}~(right)).
          In this case, the diamond induced by $\{a, d, g, e\}$ satisfies the condition~(i) of Lemma~\ref{lem:LC_localstructure}.

          Thus, if  $G'$ has a leaf $v$ adjacent to a vertex $u$ of degree $2$, there exists a diamond satisfying one of the conditions of Lemma~\ref{lem:LC_localstructure}.

\medskip
\noindent
\textbf{(b)~When $G'$ has a vertex $v$ of degree $3$ adjacent to two leaves $u$ and $w$. }

         The $3$ triangles corresponding to the vertices $v$, $u$, and $w$ form a pentagon.
         The vertices of the pentagon is denoted by $a, b, c, d, e$ as depicted in Figure~\ref{fig:2leaf}~(left).
         Then the boundary cycle $C$ contains the $4$ edges $(a,b), (b,c), (c,d)$, and $(d,e)$.
         Let $C'=\{(a,b), (b,c), (c,d), (d,e)\}$.
         By Lemma~\ref{lem:outside_hole}~(2) with symmetry,
         the possible configurations of $H$ on $C'$ are the $5$ configurations as depicted in Figure~\ref{fig:2leaf}~(right).
         For the first $3$ configurations, we find a diamond satisfying the condition~(ii) of Lemma~\ref{lem:LC_localstructure}.
         Thus the remaining cases are the latter $2$ configurations.

          \begin{figure}
            \centering
            \includegraphics[keepaspectratio,width=0.7\textwidth]
            {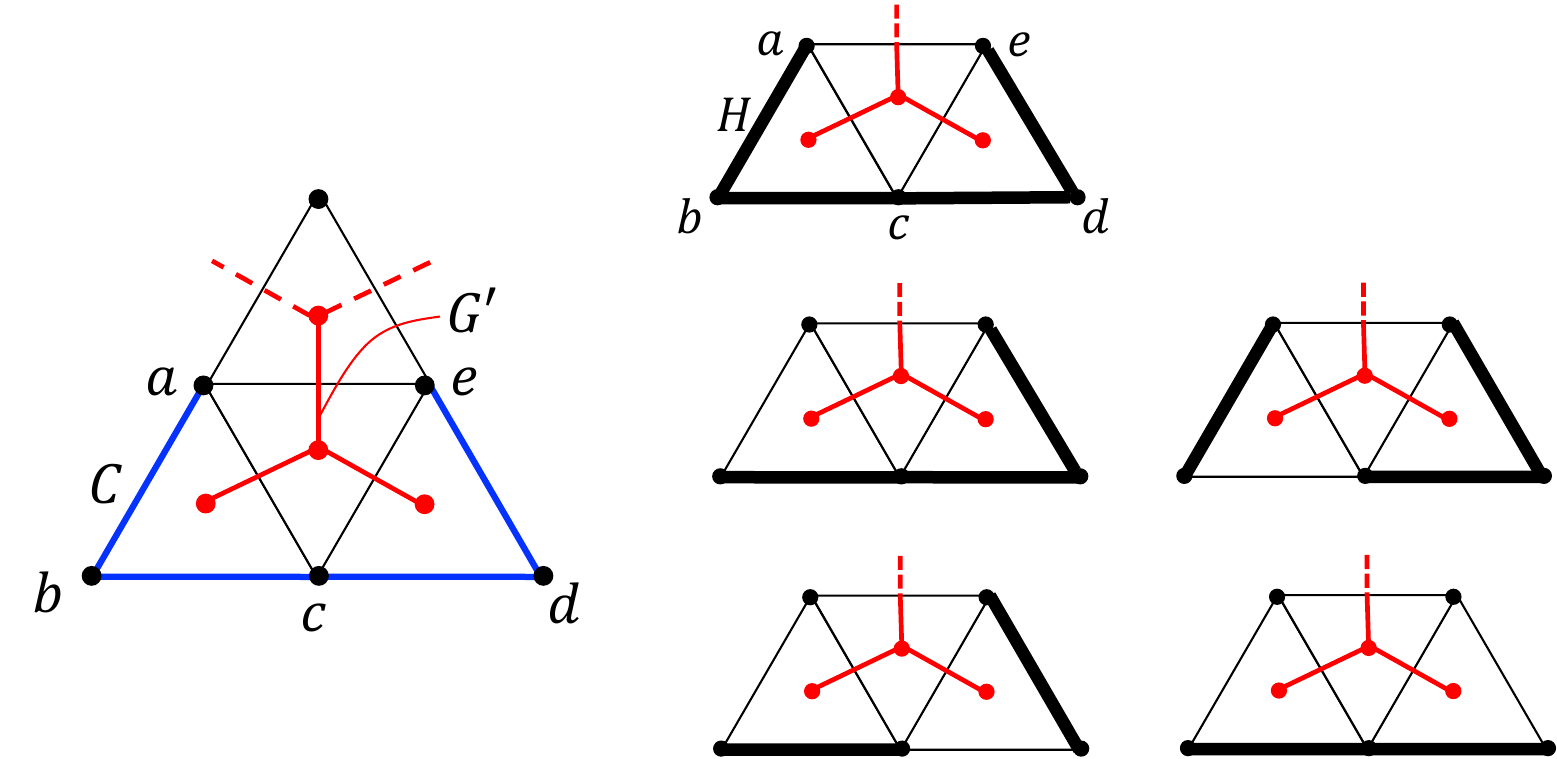}
            \caption{A pentagon in Case~(b)~(left).
            The $5$ possible patterns that $H$ goes through the $4$ edges $(a,b), (b,c), (c,d), (d,e)$~(right).}
            \label{fig:2leaf}
          \end{figure}

          \begin{figure}
            \centering
            \includegraphics[keepaspectratio,width=0.7\textwidth]
            {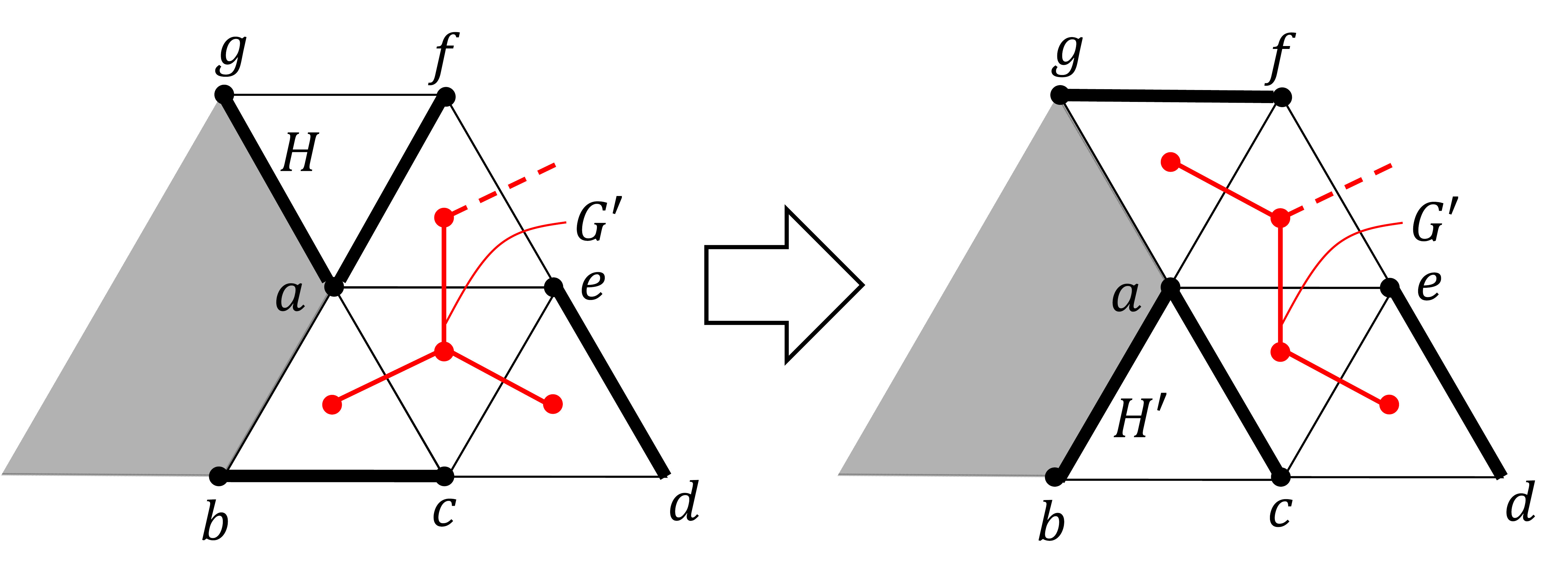}
            \caption{Case~(b): How to modify $H$ to $H'$ when $H$ contains only the edges $(b,c)$ and $(d,e)$.}
            \label{fig:2leaf_hamilton_1}
          \end{figure}

          First consider the case when $H$ contains only the edges $(b,c)$ and $(d,e)$ on $C'$.
          Then, the edge $(a,b)$ is on some boundary cycle in $\B$ by Lemma~\ref{lem:outside_hole}~(1).
          Since $H$ goes through the vertex $a$ without using $(a,b), (a,c)$ or $(a,e)$, the vertex $a$ has two more neighbors, denoted by $f, g$, and $H$ contains the edges $(a,f)$ and $(a,g)$~(Figure~\ref{fig:2leaf_hamilton_1}~(left)).
          We define another Hamilton cycle $H'$ from $H$ by removing the edges $(a,f), (a,g), (b,c)$ and appending the edges $(a,b), (a,c), (f,g)$~(Figure~\ref{fig:2leaf_hamilton_1}~(right)).
          Then the obtained Hamilton cycle $H'$ with a diamond induced by $\{a,c,d,e\}$ satisfies the condition~(i) of Lemma~\ref{lem:LC_localstructure}.

          \begin{figure}
            \centering
            \includegraphics[keepaspectratio,width=0.35\textwidth]
            {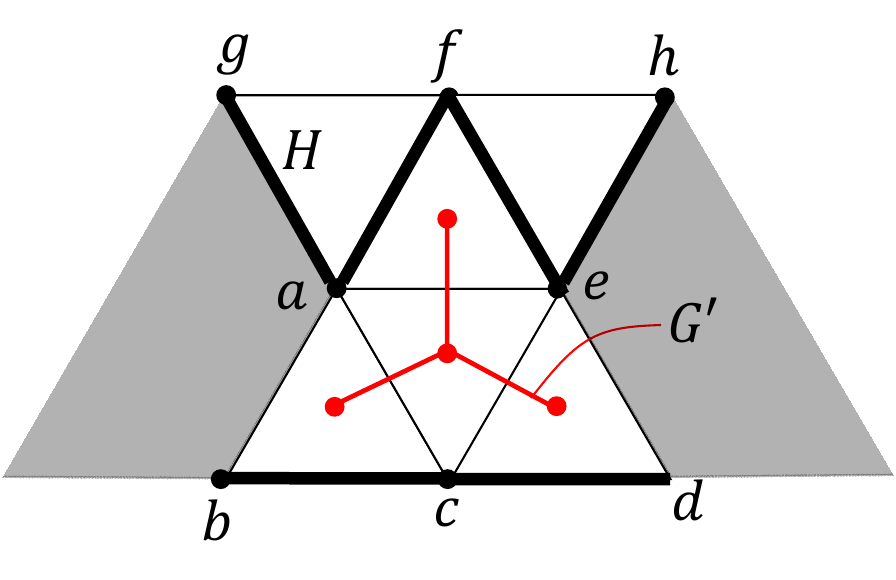}
            \caption{Case~(b) when $H$ contains only the edges $(b,c)$ and $(c,d)$.}
            \label{fig:2leaf_hamilton_2}
          \end{figure}

          Finally, consider the case when $H$ contains only the edges $(b,c), (c,d)$ on $C'$.
          Then, similarly to the previous case above, the edge $(a,b)$ is on some boundary cycle in $\B$, and  
          $H$ goes through $a$ without using edges $(a,b)$, $(a,c)$, or $(a,e)$.
          By a similar argument applied for the vertex $e$, we see that $H$ goes through the vertex $e$ without using edges $(e,a)$, $(e,c)$, or $(e,d)$.
          We denote by $f,g,h$ the vertices so that $N(a)=\{b,c,e,f,g\}$ and $N(e)=\{d,c,a,f,h\}$.
          Then $H$ contains the $4$ edges $(a,g)$, $(a,f)$, $(e,f)$, and $(e,h)$.
          In this case, the diamond induced by $\{a,c,e,f\}$ satisfies the condition~(ii) of Lemma~\ref{lem:LC_localstructure}.
          
          Thus, if $G'$ has a vertex $v$ of degree $3$ adjacent to two leaves $u$ and $w$, there exist a diamond and a Hamilton cycle satisfying one of the conditions of Lemma~\ref{lem:LC_localstructure}.

  Therefore, Lemma~\ref{lem:LC_localstructure} holds.
\end{proof}

\section{Concluding Remarks}\label{sec:conclusion}

         \begin{figure}[t]
            \centering
            \includegraphics[keepaspectratio,width=0.4\textwidth]
            {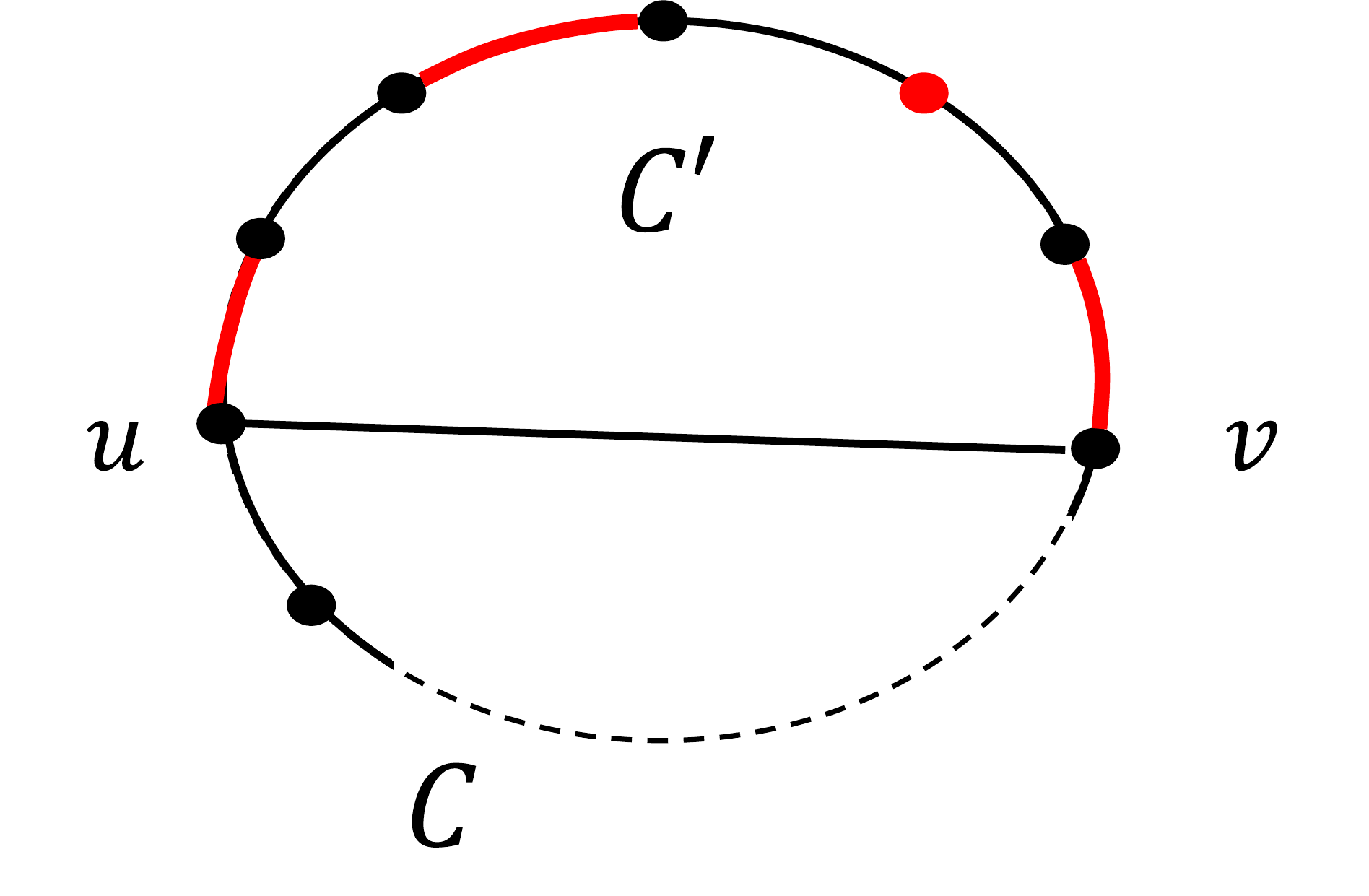}
            \caption{An odd cycle with one chord.}
            \label{fig:conclusion}
          \end{figure}

In this paper, we introduced a new reconfiguration problem of labeled matchings in a triangular grid graph.
We provided sufficient conditions for a graph to be reconfigurable using a factor-critical graphs and a locally-connected graphs.
It remains open to characterize a reconfigurable triangular grid graph, when it is a factor-critical graph with no vertex of degree $6$, but not locally-connected.
Let us here discuss the difficulty to obtain the characterization.
For example, consider a graph $G$ consisting of an odd cycle $C$ of length $2n+1$ with one chord $(u, v)$~(Figure~\ref{fig:conclusion}).
Let $C'$ be the odd cycle of $G$ with the edge $(u, v)$.
The length of $C'$ is denoted by $2m+1$.
Then we can observe that the reconfigurability of $G$ depends on $m$ and $n$.
Specifically, $G$ is reconfigurable if and only if $n-1$ and $m-1$ are mutually prime.
Indeed, since we can only rotate a placement along $C'$ and $C$, which correspond to cyclic permutations on $[m]$ and $[n]$, respectively, any permutation can be realized if and only if $n-1$ and $m-1$ are mutually prime.
This observation would imply that it requires algebraic conditions to characterize a reconfigurable graph, like the 15-puzzle.

\bibliography{main}

\end{document}